\def\fullversion{1}
\newcommand{\E}{\mathbb{E}}
\newtheorem{theorem}{Theorem}[section]
\newtheorem{lemma}[theorem]{Lemma}
\newtheorem{claim}[theorem]{Claim}
\newtheorem{assumption}[theorem]{Assumption}
\newtheorem{remark}[theorem]{Remark}
\newtheorem{corollary}[theorem]{Corollary}
\newtheorem{example}[theorem]{Example}
\newtheorem{proposition}[theorem]{Proposition}
\newtheorem*{claim*}{Claim}
\newtheorem{observation}[theorem]{Observation}
\newtheorem*{observation*}{Observation}
\theoremstyle{definition}
\newtheorem{definition}[theorem]{Definition}
\newcommand{\tru}[0]{\mathbf{T}}
\newcommand\gs[1]{{}}
\newcommand\yk[1]{{}}
\newcommand{\denselist}{\itemsep 0pt\parsep=1pt\partopsep 0pt}
\newcommand{\bitem}{\begin{itemize}\denselist}
\newcommand{\eitem}{\end{itemize}}
\newcommand{\benum}{\begin{enumerate}\denselist}
\newcommand{\eenum}{\end{enumerate}}
\newcommand{\bdesc}{\begin{description}\denselist}
\newcommand{\edesc}{\end{description}}
\begin{document}

\title{Equilibrium Selection in Information Elicitation without Verification via Information Monotonicity}
\author{Yuqing Kong\\ University of Michigan \and Grant Schoenebeck\\ University of Michigan}
\date{}


\maketitle


\begin{abstract}
    Peer-prediction is a mechanism which elicits privately-held, non-variable information from self-interested agents---formally, truth-telling is a strict Bayes Nash equilibrium of the mechanism. The original Peer-prediction mechanism suffers from two main limitations: (1) the mechanism must know the ``common prior'' of agents' signals; (2) additional undesirable and non-truthful equilibria exist which often have a greater expected payoff than the truth-telling equilibrium. A series of results has successfully weakened the known common prior assumption. However, the equilibrium multiplicity issue remains a challenge.

In this paper, we address the above two problems. In the setting where a common prior exists but is not known to the mechanism we show (1) a general negative result applying to a large class of mechanisms showing truth-telling can never pay strictly more in expectation than a particular set of equilibria where agents collude to ``relabel'' the signals and tell the truth after relabeling signals; (2) provide a mechanism that has no information about the common prior but where truth-telling pays as much in expectation as any relabeling equilibrium and pays strictly more than any other symmetric equilibrium; (3) moreover in our mechanism, if the number of agents is sufficiently large, truth-telling pays similarly to any equilibrium close to a ``relabeling'' equilibrium and pays strictly more than any equilibrium that is not close to a relabeling equilibrium.


\end{abstract}


\section{Introduction}

User feedback requests (e.g Ebay's reputation system and the innumerable survey requests in one's email inbox) are increasingly prominent and important. However, the overwhelming number of requests can lead to low participation rates, which in turn may yield unrepresentative samples. To encourage participation, a system can reward people for answering requests. But this may cause perverse incentives: some people may answer a large of number of questions simply for the reward and without making any attempt to answer accurately. In this case, the reviews the system obtains may be inaccurate and meaningless. Moreover, people may be motivated to lie when they face a potential loss of privacy or can benefit in the future by lying now. 

It is thus important to develop systems that motivate honesty. If we can verify the information people provide in the future (e.g via prediction markets), we can motivate honesty via this future verification. However, sometimes we need to elicit information without verification since the objective truth is hard to access or even does not exist (e.g. a self-report survey for involvement in crime). In our paper, we focus on the situation where the objective truth is not observable.

One important framework for designing incentive systems without verification is \emph{peer prediction}~\cite{MRZ05}. \emph{Peer prediction} uses each person's information to predict other people's information and pays according to how good the prediction is. \emph{Peer prediction} assumes people's information is related and the systems and the people share a common prior.

Despite this clever insight, information elicitation mechanisms, such a peer prediction, are not often used in practice. Two main obstacles that prevent peer prediction from being practical are that (1) the mechanism must know the prior; (2) there are many other equilibria besides truth-telling, which may pay as much or even more than the truth-telling equilibrium.

Recent research~\cite{GaoMCA2014} indicates that individuals in lab experiments do not always truth-tell when faced with peer prediction mechanisms; this may in part be related to the issue of equilibrium multiplicity.

\paragraph{Our Contributions}


\begin{enumerate}
    \item We identify a particular set of equilibria, which we call \emph{permutation equilibria}, which exist in every (non-trivial) information elicitation mechanism; and show that for any (non-trivial) information elicitation mechanism and for any permutation equilibrium, there exists a setting where that permutation equilibrium is paid at least as much as truth-telling;
    \item We propose an information elicitation mechanism that we call \emph{Disagreement Mechanism} where truth-telling is a Bayesian Nash equilibrium for any finite set of signals and any number of agents greater than 3, and where:
    \begin{enumerate}
        \item[a.] Truth-telling is paid strictly more in expectation than any other symmetric equilibrium that is not a permutation equilibrium, which have an expected payoff equal to the truth-telling equilibrium;
        \item[b.] Any symmetric equilibrium that has expected payoff close to truth-telling must be ``close'' to a permutation equilibrium;
        \item[c.]  If the number of agents is sufficiently large then any asymmetric equilibrium which is ``close" to a permutation equilibrium has expected payoff ``close" to that of the truth-telling equilibrium, and any equilibrium that is not ``close" to a permuation equilibrium pays strictly less then the truth-telling equilibrium.
    \end{enumerate}
 \end{enumerate}

Permutation equilibria are intuitively unnatural and risky as they require extreme coordination amongst the agents.  Thus our results about symmetric equilibrium are quite strong, despite the impossibility result.
Asymmetric equilibrium require more coordination between agents than symmetric equilibrium.  Additionally, the possible gains from doing so are limited and go to zero as the number of agents increase.  We believe the mechanism we propose can motivate a small number of agents to tell the truth no matter how many signals they receive.

\paragraph{High Level Techniques}
Our \emph{Disagreement Mechanism} pays agents individually (locally) for ``agreement'' and globally for ``disagreement''. When agents collude, they share information about their strategies.  Since they are paid individually for ``agreement'', they will use their information about other people's strategies and ``agree to agree'' which reduces their global payoff which depends on ``disagreement''.

Essentially, when agents collude, our \emph{Disagreement Mechanism} encourages each agent to implicitly admit their collusion by unilaterally increasing their individual payoff for doing so, but the mechanism then simultaneously decreases the total payoffs to all agents.  Only when agents do not choose to collude and lack information about other people's strategies, can they ``agree to disagree.'' In this case, even when they maximize their individual payoff, globally they still have a lot of disagreement.



\paragraph{Technical Contributions}
In addition to the above results, our works has several contributions in the techniques employed:

\begin{enumerate}
    \item 
    Our \emph{Disagreement Mechanism} encourages not only agents with the same private information to agree, but also agents with different private information to disagree. We present a novel way to measure the amount of ``information'' by casting the reports as multiple labelled points in a space and measuring the quality of the ``classification''.

    \item To show that the ``classification'' quality always decreases with non-truthful equilibria, we exploit tools from information theory, namely \emph{Information Monotonicity}. Despite their natural and powerful application, to our knowledge, this is the first time such tools have been explicitly employed in the peer prediction literature. 
    \item We provide a framework to analyse the structure of the equilibrium of a very complex game.
\end{enumerate}

\subsection{Related Work}
\yk{soda r1 comment:
5. The authors need to cite and discuss the paper:  R. Jurca and B.
Faltings. Mechanisms for Making Crowds Truthful. Journal of Artificial
Intelligence Research (JAIR), 34, 2009, pp. 209-253. Other than Dasgupta
and Ghosh, this is the most significant current work on multiplicity of
equilibria. This is also a nice, experimental view on the challenge: Xi
Alice Gao, Andrew Mao, Yiling Chen, and Ryan P. Adam. Trick or Treat:
Putting Peer Prediction to the Test. In Proc. of the 15th ACM Conference
}
\label{sec:related-work}

\ifnum\fullversion=0


\emph{Bayesian Truth Serum (BTS)}~\cite{prelec2004bayesian} first successfully weakened the known common prior assumption and provides an important framework for the following without known common prior mechanisms.  BTS requires the agents report--in addition to their reported signal--a forecast (prediction) of the other agents' reported signals, and uses this predictions in lieu of the comment prior.   BTS incentivizes agents to report accurate forecasts by rewarding forecasts with the ability to predict the other agents’ reported signal. 


However, BTS has two weakness: (1) BTS requires that the number of agents goes to infinity (or is large enough in a modified version) since the mechanism needs agents to believe it has access to the true distribution of from which agents' signals are drawn. (2) The equilibrium analysis provided in~\cite{prelec2004bayesian} is in the case where the number of agents goes to infinity and only proves that truth-telling has expected payment at least as high as other equilibrium so does not avoid the case that truth-telling pays equivalently to many other equilibrium.

Several mechanisms~\cite{witkowski2012robust,radanovic2013robust,radanovic2014incentives,riley2014minimum} are based on the BTS framework and address the first weakness of BTS. \emph{Robust Bayesian Truth Serum (RBTS)}~\cite{witkowski2012robust} is a mechanism which can only be applied to binary signals and \emph{multi-valued BTS}~\cite{radanovic2013robust} mechanism can be applied to non-binary signals while it requires an additional assumption that an agent will think the probability that other agents receive signal $\sigma$ higher if he himself also receives $\sigma$. Both of these works do not solve the equilibrium multiplicity issue, but do work for a small number of agents.  \emph{Minimal Truth Serum (MTS)}~\cite{riley2014minimum} is a mechanism where agents have the option to report or not report their predictions, and also lacks analysis of non-truthful equilibria. MTS uses a typical zero-sum technique such that all equilibria are paid equally. In contrast, we show that in our \emph{Disagreement Mechanism} any equilibrium that is even close to paying more than the truth-telling equilibrium must be close to a small set of ``permutation" equilibrium. 

Thus, while the above work addresses the first weakness of BTS, it does not address the second and also lacks  analysis of the equilibrium. Our work follows the framework of BTS and addresses the two weaknesses of BTS simultaneously.

There are a series of papers that are in different setting from ours (e.g.\  where agents have several a priori similar tasks)~\cite{faltings2014incentives,dasgupta2013crowdsourced,zhang2014elicitability,cai2014optimum,ghosh2014buying,jurcafaltings06} which are further discussed on the full version of this paper.

\paragraph{Independent Work} We propose a mechanism what we call the \emph{Truthful Mechanism} that is the same with the divergence based BTS mechanism independently proposed in~\cite{radanovic2014incentives}. In this unknown common prior mechanism, agents can naturally report their signal and prediction at the same time and truth-telling is a strict Bayesian Nash equilibrium for a small group of agents and non-binary signals without additional assumption. However,~\cite{radanovic2014incentives} does not analyze the equilibrium of the divergence based BTS or address the equilibrium multiplicity issue. In our paper, we analyze the structure of the equilibrium in \emph{Truthful Mechanism} and propose a modified mechanism what we call \emph{Disagreement Mechanism} which retains the properties of \emph{Truthful Mechanism} and addresses the equilibrium multiplicity issue.

\else


There are several papers~\cite{prelec2004bayesian,witkowski2012robust,radanovic2013robust,radanovic2014incentives,zhang2014elicitability,riley2014minimum,faltings2014incentives,witkowski2012peer,witkowski2013learning,witkowski2014robust} that focused on the goal of removing the assumption that the mechanism knows the common prior and successfully weaken this known common prior assumption after peer prediction~\cite{MRZ05} is introduced.


\emph{Bayesian Truth Serum (BTS)}~\cite{prelec2004bayesian} first successfully weakened the known common prior assumption and provides an important framework for mechanisms without known common prior. BTS requires the agents report---in addition to their reported signal---a forecast (prediction) of the other agents' reported signals, and uses this predictions in lieu of the common prior. BTS incentives agents to report accurate forecasts by rewarding forecasts that have the ability to predict the other agents’ reported signal. 


However, BTS has two weakness: (1) BTS requires that the number of agents goes to infinity (or is large enough in a modified version) since the mechanism needs agents to believe it has access to the true distribution of from which agents' signals are drawn. (2) The equilibrium analysis provided in~\cite{prelec2004bayesian} is in the case where the number of agents goes to infinity and only proves that truth-telling has expected payment at least as high as other equilibrium and so does not avoid the case that truth-telling pays equivalently to many other equilibrium.

Several mechanisms~\cite{witkowski2012robust,radanovic2013robust,radanovic2014incentives,riley2014minimum,witkowski2012peer,witkowski2013learning,witkowski2014robust} are based on the BTS framework and address the first weakness of BTS. \emph{Robust Bayesian Truth Serum (RBTS)}~\cite{witkowski2012robust} is a mechanism which can only be applied to binary signals and the \emph{multi-valued RBTS}~\cite{radanovic2013robust} mechanism can be applied to non-binary signals while it requires an additional assumption that every agent believes the probability that other agents receive signal $\sigma$ is higher if he himself also receives $\sigma$. Both of these works do not solve the equilibrium multiplicity issue, but do work for a small number of agents.  \emph{Minimal Truth Serum (MTS)}~\cite{riley2014minimum} is a mechanism where agents have the option to report or not report their predictions, and also lacks analysis of non-truthful equilibria. MTS uses a typical zero-sum technique such that all equilibria are paid equally. In contrast, we show that in our \emph{Disagreement Mechanism} any equilibrium that is even close to paying more than the truth-telling equilibrium must be close to a small set of ``permutation" equilibrium. 

Thus, while the above work addresses the first weakness of BTS, it does not address the second and also lacks  an analysis of the non-truthful equilibrium.~\citet{jurca2007collusion,jurcafaltings09} have analysis of non-truthful pure strategies. However,~\citet{jurca2007collusion,jurcafaltings09} assume the mechanism knows the prior and leave the analysis of mixed strategies as an open question.~\citet{2016arXiv160307319K} have analysis of all equilibria including truthful and non-truthful, pure and mixed strategies while the mechanism still needs to know the prior in \cite{2016arXiv160307319K}'s setting. Our work follows the framework of BTS and addresses the two weaknesses of BTS simultaneously when the mechanism does not know the common prior.

Now we introduce several works that have different settings than our work. The mechanisms in~\cite{dasgupta2013crowdsourced,kamble2015truth} are under a different setting where agents have several a priori similar tasks. The mechanism in~\cite{dasgupta2013crowdsourced} rewards agents based on agreement which is similar to the Peer Prediction setting. That mechanism also uses the presence of multiple tasks to elicit agent strategies with high effort, and thus \citet{dasgupta2013crowdsourced} address the equilibrium multiplicity issue for binary signals in their setting. Our setting is different since the agents only have one task (and thus we do not have to assume relations between tasks) and our results for equilibrium multiplicity issue are robust to non-binary signals.~\citet{kamble2015truth} consider both homogeneous and heterogeneous population setting. However, this work requires large group of a priori similar tasks and their mechanisms contain equilibria that are paid higher than truth-telling.~\citet{zhang2014elicitability} propose a mechanism \emph{Knowledge Free Peer Prediction (KFPP)} that does not know the common prior and has truth-telling as an equilibrium for small group of agents and non-binary signals. However, \emph{Knowledge Free Peer Prediction (KFPP)} is a sequential game so agents cannot naturally report signals and predictions at the same time which is different than our setting (KFPP can be implemented non-sequentially but this requires a very complicated and unrealistic prediction reports).  Finally, \citet{cai2014optimum} have a different setting than our work while it also uses the Peer Prediction insight. The mechanism in \cite{cai2014optimum} collects a set of data $(x_i,y_i)$ to approximate a function $f$ where $y_i$ is a noisy version of $f(x_i)$. That mechanism rewards worker $i$ who provides data $(x_i,y_i)$ by comparing $y_i$ and $\hat{f}_{-i}(x_i)$ where $\hat{f}_{-i}$ is an approximate function based on other workers' data.   


\paragraph{Independent Work} As a building block to our final mechanism, we propose a mechanism that we call the \emph{Truthful Mechanism} which is the same as the ``Divergence-Based Bayesian Truth Serum" independently proposed in~\cite{radanovic2014incentives}. In this unknown common prior mechanism, agents can naturally report their signal and prediction at the same time and truth-telling is a strict Bayesian Nash equilibrium for a small group of agents and non-binary signals without additional assumptions. However,~\citet{radanovic2014incentives} do not analyse non-truthful equilibria. In our paper, we analyse the structure of the equilibrium (including non-truthful equilibria) in the \emph{Truthful Mechanism} and propose a modified mechanism that we call the \emph{Disagreement Mechanism} which retains the same set of equilibria as the \emph{Truthful Mechanism} yet addresses the equilibrium multiplicity issue. 

\fi

\section{Preliminaries, Background, and Notation}\label{section:prelim}

\ifnum\fullversion=1
We will defer the proofs for most claims to Section~\ref{section:proof_claims}.
\else
\fi


\subsection{Prior Definitions and Assumptions}

We consider a setting with $n$ agents and a set of signals $\Sigma$, and define a \textit{setting} as a tuple $(n,\Sigma)$. Each agent $i$ has a private signal $\sigma_i \in \Sigma$ chosen from a joint distribution $Q$ over $\Sigma^n$ called the prior.
Given a prior $Q$, for $\sigma \in \Sigma$, let $q_i(\sigma) = \Pr_Q[\sigma_i = \sigma]$ be the \emph{a priori} probability that agent $i$ receives signal $\sigma$.  Let $q_{j, i}(\sigma'|\sigma) = \Pr_Q[\sigma_j = \sigma'| \sigma_i = \sigma]$ be the probability that agent $j$ receives signal $\sigma$ given that agent $i$ received signal $\sigma'$.

We say that a prior $Q$ over $\Sigma$ is \emph{symmetric} if for all $\sigma$, $\sigma' \in \Sigma$ and for all pairs of agents $i \neq j$ and $i' \neq j'$  we have $q_i(\sigma) = q_{i'}(\sigma)$ and $q_{i, j}(\sigma|\sigma') = q_{i', j'}(\sigma|\sigma')$.  That is, the first two moments of the prior do not depend on the agent identities.


\begin{assumption}[Symmetric Prior]
We assume throughout that the agents' signals $\boldsymbol{\sigma}$ are drawn from some joint {\bf symmetric prior} $Q$.
\end{assumption}

Because we will assume that the prior is symmetric, we denote $q_i(\sigma)$ by $q(\sigma)$ and   $q_{i, j}(\sigma|\sigma') $ (where $i \neq j$) by  $q(\sigma|\sigma')$.
We also define $\mathbf{q_{\sigma}}=q(\cdot|\sigma)$.

\begin{assumption}[Non-zero Prior]
We assume that for any $\sigma,\sigma' \in \Sigma$, $q(\sigma)>0,q(\sigma|\sigma')>0$.
\end{assumption}

\begin{assumption}[Informative Prior]
We assume if agents have different private signals, they will have  different expectations for the fraction of at least one signal.  That is for any $\sigma \neq \sigma' $, there exists $\sigma''$ such that $q(\sigma''|\sigma)\neq q(\sigma''|\sigma')$.
\end{assumption}




The following assumption conceptually states that one state is not just a more likely version of another state, and can be thought of as a weaker version of assuming $q(\sigma|\cdot)$ are linearly independent.

\begin{assumption}[Fine-grained Prior]
We assume that for any $\sigma\neq \sigma' \in \Sigma$, there exists $\sigma'',\sigma'''$ such that $$ \frac{q(\sigma|\sigma'')}{q(\sigma'|\sigma'')}\neq \frac{q(\sigma|\sigma''')}{q(\sigma'|\sigma''')} $$
\end{assumption}

If this assumption does not hold, then in some since $\sigma$ and $\sigma'$ are the same signal.  We can create a new prior by replacing $\sigma$ and $\sigma'$ with a new signal $\sigma_0:=\sigma$ or $\sigma'$, and not lose any information, in the sense that we can still recover the original prior.  To see this, we first define $p = \frac{q(\sigma)}{q(\sigma')}$, and note that for all $\sigma''$, $p = \frac{q(\sigma|\sigma'')}{q(\sigma'|\sigma'')}$.  Whenever $\sigma_0$ is drawn in the new prior, we simply replace it by $\sigma$ with probability $p$ and $\sigma'$ with probability $1 - p$.  This produces the same prior for agents that have no information or other their signal's information.

We illustrate this in the below example:



\ifnum\fullversion=1

\begin{example}
$Q=\left(\begin{array}{ccc}
q(s_1|s_1) & q(s_1|s_2) & q(s_1|s_3) \\
q(s_2|s_1) & q(s_2|s_2) & q(s_2|s_3) \\
q(s_3|s_1) & q(s_3|s_2) & q(s_3|s_3)
\end{array}\right)=\left(\begin{array}{ccc}
0.1 & 0.2 & 0.3 \\
0.2 & 0.4 & 0.6 \\
0.7 & 0.4 & 0.1
\end{array} \right)$ is not a fine-grained prior since $$\frac{q(s_1|s_1)}{q(s_2|s_1)}=\frac{q(s_1|s_2)}{q(s_2|s_2)}=\frac{q(s_1|s_3)}{q(s_2|s_3)}$$
\end{example}

Note that in this example, even we combine $s_1$ and $s_2$ to be a single signal $s_0$ which is defined as $s_0:=s_1$ or $s_2$, we do not lose any information: if an agent knows that the fraction of agents who report $s_0$ is $x$, we know his belief for the expectation of the fraction of $s_1$ must be $\frac{x}{3}$ no matter what private signal he receives.

We only  require the fine-grained prior assumption to show that truth-telling is \emph{strictly} ``better'' than any other symmetric equilibrium (excluding permutation equilibrium). In the above example where the prior is not fine-grained, if agents always report $s_1$ when they receive $s_1$ or $s_2$, this does not lose information (is not ``worse'') comparing with the case agents always tell the truth. So we cannot say truth-telling is strictly ``better'' than any other equilibrium when the prior is not fine-grained. However, this assumption is not necessary to show that truth-telling is a strict Bayesian equilibrium of our mechanism, nor to show that the agent welfare of truth-telling is at least as high as other symmetric equilibrium.

\else

\fi



\gs{Yuqing should check this}
\begin{assumption}[Ensemble Prior]
Although we talk of a single prior, in fact we have an ensemble $Q = \{Q_n\}_{n\in N,n\geq 3}$ of priors; one for each possible number of agents greater than 3.  We assume that all $Q_n$ are over the same signal set $\Sigma$ have have identical $q(\sigma)$ and $q(\sigma'|\sigma)$.
\end{assumption}

When the number of agents $n$ changes, the joint prior actually changes as well, but the first two moments of the prior are fixed. This allows us to make meaningful statements about $n$ going to infinity.




We sometimes will denote the class of priors that satisfy all five of these assumptions as SNIFE priors.

\subsection{Game Setting and Equilibrium Concepts}

Given a setting $(n,\Sigma)$ with prior $Q$, we consider a game in which each agent $i$ is asked to report his private signal $\sigma_i \in \Sigma$ and his prediction $ \mathbf{p}_i \in \Delta_{\Sigma}$, a distribution over $\Sigma$, where  $\mathbf{p}_i= \mathbf{q}_{\sigma_i}$. For any $\sigma\in \Sigma$, $\mathbf{p_i}(\sigma)$ is agent $i$'s (reported) expectation for the fraction of other agents who has received $\sigma$ given he has received $\sigma_i$. However, agents may not tell the truth. We denote $\Sigma \times \Delta_{\Sigma}$ by $\mathcal{R}$. We define a report profile of agent $i$ as $r_i=(\hat{\sigma}_i,\mathbf{\mathbf{\hat{p}}}_i)\in \mathcal{R} $ where $\hat{\sigma_i}$ is agent $i$'s reported signal and $\mathbf{\hat{p}}_i$ is agent $i$'s reported prediction.

We would like to encourage truth-telling, namely that agent $i$ reports $\hat{\sigma_i} = \sigma_i,\mathbf{\hat{p}}_i=\mathbf{q}_{\sigma_i}$. To this end, agent $i$ will receive some payment $\nu_i(\hat{\sigma}_i,\mathbf{\hat{p}}_i, \hat{\sigma}_{-i},\mathbf{\hat{p}}_{-i})$ from our mechanism.

Now we consider the strategy an agent plays in the game.

\begin{definition}[Strategy]
Given a mechanism $\mathcal{M}$, we define the strategy of $\mathcal{M}$ for setting $(n,\Sigma)$ as a mapping $s$ from $(\sigma,Q)$ (the signal and common prior received) to a probability distribution over $\mathcal{R}$ (the reported signal, prediction pair).
\end{definition}

That is, for each possible signal $\sigma$ and prior $Q$ an receives, he will choose a signal, prediction pair to report from some distribution $s(\sigma,Q)$. We define a strategy profile $\mathbf{s}$ as a profile of all agents' strategies $\{s_1,s_2,...s_n\}$ and we say agents play $\mathbf{s}$ if for any $i$, agent $i$ plays strategy $s_i$. We say a strategy profile is \textbf{symmetric} if each agent plays the same strategy.


We define the \textbf{agent welfare} of a strategy profile $\mathbf{s}$ and a mechanism $\mathcal{M}$ for setting $(n,\Sigma)$ with prior $Q$ to be the expectation of the sum of payments to each agent and we write it as $AW_{\mathcal{M}}(n,\Sigma,Q,\mathbf{s})$.



A {\em Bayesian Nash equilibrium} consists of a strategy profile $s = (s_1, \ldots, s_n)$ such that no player wihes to change her strategy, given the strategies of the other players and the information contained in the prior and her signal. Formally, \begin{definition}[Bayesian Nash equilibrium]
Given a family of priors $\mathcal{Q}$, a strategy profile $s = (s_1, \ldots, s_n)$ is a Bayesian Nash equilibrium if and only if for any prior $Q\in\mathcal{Q}$, for any $i$, and for any $s'_i$
\begin{align*}
&\E_{(\hat{\sigma}'_i,\mathbf{\hat{p}}'_i)\leftarrow s'_i(\sigma_i,Q), (\hat{\sigma}_{-i},\mathbf{\hat{p}}_{-i})\leftarrow s_{-i}(\sigma_{-i},Q)}[\nu_i(\hat{\sigma}'_i,\mathbf{\hat{p}}'_i, \hat{\sigma}_{-i},\mathbf{\hat{p}}_{-i})]\\
\leq & \E_{(\hat{\sigma}_i,\mathbf{\hat{p}}_i)\leftarrow s_i(\sigma_i,Q), (\hat{\sigma}_{-i},\mathbf{\hat{p}}_{-i})\leftarrow s_{-i}(\sigma_{-i},Q)}[\nu_i(\hat{\sigma}_i,\mathbf{\hat{p}}_i, \hat{\sigma}_{-i},\mathbf{\hat{p}}_{-i})]
\end{align*}

In the case where, for some $i$, the equality holds if and only if $s'_i=s_i$, we say this strategy profile is a \emph{strict Bayesian Nash equilibrium} for prior family $\mathcal{Q}$.
\end{definition}

\begin{remark}[Equilibrium for a Given Prior]
Note that we assume agents have a common prior $Q$, so often for convenience, we will implicitly assume $Q$ is fixed, at which point a strategy is a mapping from $\Sigma$ to a probability distribution over $\mathcal{R}$.  We will call such a strategy profile $\mathbf{s}$ an equilibrium for prior $Q$ if it satisfies the condition of Bayesian Nash equilibrium when $Q$ is fixed.
\end{remark}

Assuming a fixed prior $Q$, for any strategy profile $s=(s_1,s_2,...,s_n)$, we will represent the marginal distribution of an agent $i$'s strategy for her signal report as a matrix $ \theta_i$ where $\theta_i(\hat{\sigma},\sigma)$ is the probability that agent will report signal $\hat{\sigma}$ when his private signal is $\sigma$. Note that $\theta_i$ is a \textbf{transition matrix}, that is the sum of every column is 1. We call $\theta_i$ the signal strategy of agent $i$. We also call $(\theta_1,\theta_2,...,\theta_n)$ the signal strategy of $s$. We define the \textit{average signal strategy} of $s$ as $\bar{\theta}_n=\frac{\sum_i\theta_i}{n}$. The following claim relates this average signal strategy to the distribution of all reported signals:

\begin{claim}\label{claim:average signal strategy}
Assume that the distribution over all agents' private signals is $\omega\in \Delta_{\Sigma}$, the distribution over all agents' reported signals will be $\bar{\theta}_n\omega$.
\end{claim}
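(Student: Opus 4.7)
The plan is to unpack definitions and apply the law of total probability together with linearity of expectation over the agent index. First I would fix the interpretation: since the prior $Q$ is symmetric, the phrase ``the distribution over all agents' private signals is $\omega$'' means that each agent $i$ has the same marginal $\Pr[\sigma_i = \sigma] = \omega(\sigma)$. Symmetrically, ``the distribution over all agents' reported signals'' I would read as the distribution of the report of a uniformly drawn agent, equivalently as the expected fraction of agents reporting each symbol.

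Next, for a fixed agent $i$, I would invoke the definition of the signal strategy: by construction $\theta_i(\hat\sigma,\sigma)$ is the conditional probability that $i$ reports $\hat\sigma$ given that her private signal equals $\sigma$, and this conditional probability depends only on $i$'s own signal (it is a property of the strategy $s_i$ marginalized to the signal report). By the law of total probability,
\[
\Pr[\hat\sigma_i = \hat\sigma] \;=\; \sum_{\sigma\in\Sigma} \theta_i(\hat\sigma,\sigma)\,\omega(\sigma) \;=\; (\theta_i\,\omega)(\hat\sigma),
\]
where in the last step I am viewing $\omega$ as a column vector in $\mathbb{R}^{\Sigma}$ and $\theta_i$ as acting on it by ordinary matrix-vector multiplication.

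Finally, I would average over the agent index. Taking $I$ uniform in $\{1,\dots,n\}$ (or, equivalently, computing the expected empirical report distribution $\tfrac{1}{n}\E\bigl[\sum_i \mathbf{1}[\hat\sigma_i = \hat\sigma]\bigr]$) gives
\[
\Pr[\hat\sigma_I = \hat\sigma] \;=\; \frac{1}{n}\sum_{i=1}^n (\theta_i\,\omega)(\hat\sigma) \;=\; \Bigl(\tfrac{1}{n}\sum_{i=1}^n \theta_i\Bigr)\omega(\hat\sigma) \;=\; (\bar\theta_n\,\omega)(\hat\sigma),
\]
which is the asserted identity. There is no real obstacle here: the whole argument is a one-line application of linearity once the notation is set up. The only point worth stating explicitly is that the marginal $\theta_i$ of the strategy on the reported signal is, by definition, a function of $i$'s own signal alone, so the computation decouples from any correlation structure of $Q$ beyond its marginal $\omega$. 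This lemma is best viewed as a notational sanity check that will let later arguments pass freely between ``average signal strategy'' and ``distribution of reported signals.''
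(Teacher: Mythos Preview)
Your proposal is correct and follows essentially the same approach as the paper: pick an agent uniformly at random, apply the law of total probability over that agent's private signal to get $(\theta_i\omega)(\hat\sigma)$, and then average over $i$ to obtain $(\bar\theta_n\omega)(\hat\sigma)$. The paper's proof is the same one-line computation, just written slightly more tersely.
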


Note that the mechanism actually collects agents' reported signals, so in order to estimate the distribution over their private signals, we hope $\bar{\theta}_n$ is (close to) the identity matrix $I$.

\subsection{Special Strategy Profiles}\label{strategyprofile}

In this section, we will introduce three special types of strategy profiles that we call \textit{truth-telling}, \textit{best prediction strategy profiles}, and \textit{permutation strategy profiles}.

\begin{definition}[Truth-telling]
We define a strategy profile as truth-telling if for all $i$, and for all $Q$, $s(\sigma_i,Q) = (\sigma_i, \mathbf{q}_{\sigma_i})$ with probability 1. We write the truth-telling strategy profile as $\tru$.
\end{definition}

 For every agent $i$, let $\hat{\sigma}$ be a randomly chosen agent's reported signal, when other agents tell the truth, the distribution of $\hat{\sigma}$ is $\mathbf{q}_{\sigma_i}$. However, if agents play strategy $\mathbf{s}$, for agent $i$, the distribution of $\hat{\sigma}$ depends on not only his prior $Q$ but also the strategy $\mathbf{s}$. We define the distribution of $\hat{\sigma}$ for agent $i$ as $\mathbf{q}^{\mathbf{s}}_{\sigma_i}$.

 \begin{claim}\label{claim:best prediction}
 $$\mathbf{q}^{\mathbf{s}}_{\sigma_i}=\theta_{-i}\mathbf{q}_{\sigma_i}$$ where $(\theta_1,\theta_2,...,\theta_n)$ is $\mathbf{s}$'s signal strategy and $\theta_{-i}=\frac{\sum_{j\neq i}\theta_j}{n-1}$.
\end{claim}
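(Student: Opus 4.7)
The plan is to unfold the definition of $\mathbf{q}^{\mathbf{s}}_{\sigma_i}$ and compute it by the law of total probability, conditioning on the (unreported) private signal of the other agent being observed. The statement is essentially bookkeeping once the objects are parsed correctly, so the proof should be a short calculation with no real obstacle beyond making sure indexing and matrix/vector conventions line up.

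First I would fix notation. By the setup, $\hat{\sigma}$ refers to the reported signal of a uniformly random agent $j \neq i$, so
\[
\mathbf{q}^{\mathbf{s}}_{\sigma_i}(\sigma) \;=\; \Pr[\hat{\sigma}=\sigma \mid \sigma_i] \;=\; \frac{1}{n-1}\sum_{j\neq i} \Pr[\hat{\sigma}_j = \sigma \mid \sigma_i],
\]
where $\hat{\sigma}_j$ denotes the signal reported by agent $j$ under strategy $s_j$.

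Next I would compute the $j$-th summand. Conditioning on $j$'s true private signal $\sigma_j = \sigma'$ and using independence of $j$'s randomization from $i$'s signal given $\sigma_j$, I get
\[
\Pr[\hat{\sigma}_j = \sigma \mid \sigma_i] \;=\; \sum_{\sigma'\in\Sigma} \Pr[\sigma_j = \sigma' \mid \sigma_i]\,\Pr[\hat{\sigma}_j = \sigma \mid \sigma_j = \sigma'].
\]
By the symmetric prior assumption, $\Pr[\sigma_j = \sigma'\mid\sigma_i] = q(\sigma'\mid\sigma_i) = \mathbf{q}_{\sigma_i}(\sigma')$. By definition of the signal strategy matrix, $\Pr[\hat{\sigma}_j = \sigma \mid \sigma_j = \sigma'] = \theta_j(\sigma,\sigma')$. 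The sum on the right is thus exactly the $\sigma$-th coordinate of the matrix-vector product $\theta_j \mathbf{q}_{\sigma_i}$.

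Finally I would average over $j\neq i$ and use linearity to pull the average inside the matrix:
\[
\mathbf{q}^{\mathbf{s}}_{\sigma_i}(\sigma) \;=\; \frac{1}{n-1}\sum_{j\neq i}(\theta_j \mathbf{q}_{\sigma_i})(\sigma) \;=\; \Bigl(\tfrac{1}{n-1}\!\sum_{j\neq i}\theta_j\Bigr)\mathbf{q}_{\sigma_i}(\sigma) \;=\; (\theta_{-i}\mathbf{q}_{\sigma_i})(\sigma),
\]
which establishes the claim. The only thing that could plausibly trip one up is the implicit independence used in splitting the conditional probability, but this is immediate: $j$'s reporting randomness depends only on $\sigma_j$ (not on $\sigma_i$), which is built into the definition of a strategy as a mapping from $(\sigma,Q)$ to a distribution over $\mathcal{R}$.
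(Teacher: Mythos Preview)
Your proof is correct and follows essentially the same approach as the paper: a direct computation via the law of total probability, recognizing the inner sum as the matrix-vector product $\theta_j\mathbf{q}_{\sigma_i}$ and then averaging over $j\neq i$. The paper routes this through Claim~\ref{claim:average signal strategy} (applied to the $n-1$ other agents) rather than unpacking it inline, but the underlying calculation is identical.
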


When agents play strategy $\mathbf{s}$, to best predict other agents' reported signal, agent $i$ should be report $\mathbf{q}^{\mathbf{s}}_{\sigma_i}$ rather than $\mathbf{q}_{\sigma_i}$. This motivates our definition for \textit{best prediction strategy profile} which is a strategy profile where every agent $i$ gives his ``best prediction'' $\mathbf{q}^{\mathbf{s}}_{\sigma_i}$.

\begin{definition}[Best Prediction Strategy Profile]\label{bpsp}
We say a strategy profile $\mathbf{s}$ is a best prediction strategy profile if for every agent $i$, he reports $\mathbf{q}^{\mathbf{s}}_{\sigma_i}$. We call a best strategy prediction strategy profile $\mathbf{s}$ a \textit{symmetric best strategy prediction strategy profile} if $\theta_i=\theta$ for every $i$.
\end{definition}

Now we begin to introduce the definition of a permutation strategy profile. Intuitively, if agents ``collude'' to relabel the signals and then tell the truth with relabeled signals, they actually play what we will call permutation strategy profile.

Given a permutation $\pi:\Sigma\mapsto\Sigma$ (which is actually a relabeling of signals), by abusing notation a little bit, we define $\pi:\mathcal{Q}\mapsto\mathcal{Q}$ as a mapping from a prior $Q$ to a \textit{permuted prior} $\pi(Q)$ where for any $\sigma_1,\sigma_2,...,\sigma_n\in \Sigma$,
$$Pr_{\pi( Q)}(\sigma_1,\sigma_2,...,\sigma_n)=Pr_Q(\pi^{-1}(\sigma_1),\pi^{-1}(\sigma_2),...,\pi^{-1}(\sigma_n))$$ where $\sigma_i$ is the private signal of agent $i$. Notice that it follows that:
$$Pr_{\pi( Q)}(\pi(\sigma_1),\pi(\sigma_2),...,\pi(\sigma_n))=Pr_Q(\sigma_1,\sigma_2,...,\sigma_n).$$  Intuitively, $\pi(Q)$ is the same with $Q$ when the signals are relabeled according to $\pi$.

For any strategy $s$, we define $\pi(s)$ as the strategy such that  $\pi(s)(\sigma,Q)=s(\pi(\sigma),\pi (Q))$.

\begin{definition}[Permuted Strategy Profile]
For any strategy profile $\mathbf{s}$, we define $\pi(\mathbf{s})$ as a strategy profile with $\pi(\mathbf{s})=(\pi(s_1),\pi(s_2),...,\pi(s_n))$.
\end{definition}

Note that $\pi^{-1}\pi Q=Q$ which implies $\pi^{-1}\pi(\mathbf{s})=\mathbf{s}$.

\begin{definition}[Permutation Strategy Profile]
We define a strategy profile $\mathbf{s}$ as a permutation strategy profile if there exists a permutation $\pi: \Sigma \rightarrow \Sigma$ such that $\mathbf{s}=\pi(\tru)$.
\end{definition}

Note that if agents play $\pi(\tru)$, then the signal strategy of each agent is $\pi$, and so the distribution of report profiles is $\bar{\theta}_n\omega=\pi \omega $.
\ifnum\fullversion=1

There exists a natural bijection between permutation strategy profiles and $|\Sigma|\times|\Sigma|$ permutation matrices. If the permutation strategy profile is constructed by permutation $\pi$, the only non-zero entries of the corresponding permutation matrix $\theta_{\pi}$ are $\theta_{\pi}(\pi(\sigma),\sigma)=1$ for all $\sigma\in \Sigma$. For a transition matrix $\theta$, if $\theta$ is not a permutation matrix, we would like to give a definition for when a transition matrix $\theta$ is what we call \emph{$\tau$-close} to a permutation given any $\tau>0$. This definition is motivated by the below claim and will be described after it.

\begin{claim}\label{claim:permutation_matrix}
For any transition matrix $\theta_{m\times m}$ where the sum of every column is 1, $\theta$ is a permutation matrix iff for any row of $\theta$, there  at most one non-zero entry.
\end{claim}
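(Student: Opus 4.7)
The plan is to prove the two directions of the iff separately. The forward direction is essentially by definition: if $\theta$ is a permutation matrix, then by definition it has exactly one nonzero entry (equal to $1$) in each row and in each column, so in particular every row contains at most one nonzero entry. So the only substantive content is the converse.

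For the converse, I would use a short double-counting argument on the number of nonzero entries. Suppose $\theta$ is an $m \times m$ transition matrix (so entries are nonnegative and every column sums to $1$) and assume that each row has at most one nonzero entry. First, since each column sum is $1 > 0$ and the entries are nonnegative, every column contains at least one nonzero entry, so $\theta$ has at least $m$ nonzero entries. Second, since each of the $m$ rows contributes at most one nonzero entry, $\theta$ has at most $m$ nonzero entries. Combining these, $\theta$ has exactly $m$ nonzero entries, and they are distributed as exactly one per row and exactly one per column.

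Finally I would conclude that each column's unique nonzero entry must equal $1$, because the remaining entries in that column are zero and the column sum is $1$. Hence $\theta$ is a $0/1$ matrix with exactly one $1$ in each row and each column, i.e.\ a permutation matrix. There is no real obstacle here; the only subtlety worth flagging is that nonnegativity of the entries of a transition matrix is what rules out cancellation and lets us pass from ``column sums to $1$'' to ``at least one nonzero entry per column.''
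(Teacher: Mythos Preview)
Your proof is correct and follows essentially the same approach as the paper: both argue that the row hypothesis gives at most $m$ nonzero entries while the column-stochastic property gives at least $m$, forcing exactly one nonzero entry per row and per column, each equal to $1$. Your explicit remark about nonnegativity is a nice touch that the paper leaves implicit.
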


Now we give a definition for \emph{$\tau$-close}.

\begin{definition}[$\tau$-close]
We say a signal strategy $\theta$ is \emph{$\tau$-close} to a permutation if for any row of $\theta$, there is at most one entry that is greater than $\tau$.
\end{definition}


\else
\begin{definition}[$\tau$-close]
We say a signal strategy $\theta$ is \emph{$\tau$-close} to a permutation if for any row of $\theta$, there is at most one entry that is greater than $\tau$.
\end{definition}

Thus a permutation stragety is 0-close to a permutation. For any stategy profile $s$, if the average signal strategy of $s$ is \emph{$\tau$-close} to a permutation matrix, we say $s$ is \emph{$\tau$-close} to a permutation profile as well.
\gs{I changed this in the short version; I also added a dash.}
\fi

\subsection{Mechanism Design Goals}

We want to design ``good'' mechanisms that motivate the agents to reveal their private information truthfully. Now we will give several definitions for ``good'' mechanisms:

We say that a mechanism is \textbf{\emph{truthful}} if truth-telling is a strict Bayesian Nash equilibrium whenever $Q$ is a symmetric and informative distribution.

We say that a mechanism has truth-telling as a \textbf{\emph{focal}} equilibrium if, for any SNIFE prior $Q$, the mechanism is truthful and the agent welfare is strictly higher in the truth-telling equilibrium than in any other any other Bayesian Nash equilibrium.  Recall that the agent welfare is the expected sum of payments for all agents.

However, it turns out that making truth-telling \emph{focal} is too much to ask. We will show that with an unknown common prior, for any mechanism, and any permutation strategy profile, there exists a prior such that that permutation strategy profile is an equilibrium and has agent welfare at least as much as truth-telling. One natural question is whether there is any mechanism such that all permutation strategy profiles have agent welfare equal to each other, but strictly higher than any other Bayesian Nash equilibrium? This question motivates a weaker version of \emph{focal}: \emph{quasi-focal}.

Formally, we say that a mechanism has truth-telling as a \textbf{\emph{quasi-focal}} equilibrium if, for any SNIFE prior $Q$, the mechanism is truthful and agent-welfare is strictly higher in the permutation equilibrium than in any other Bayesian Nash equilibrium where the agents do not play a permutation strategy profile.

It will turn out that the mechanism we purpose cannot make truth-telling \emph{quasi-focal}. However, we can show that it satisfies the following three slight relaxations of the definition:

\begin{description}

    \item[\textbf{\emph{Symmetric-quasi-focal}}:]   We say truth-telling is symmetric-quasi-focal in a mechanism if
    \begin{enumerate}
        \item all permutation equilibrium have equal agent welfare; and
        \item any \emph{symmetric} equilibrium that is not a permutation equilibrium has agent-welfare strictly less than truth-telling.
    \end{enumerate}
    \item[\textbf{\emph{($\tau_1,\gamma_1$)-robust-symmetric-quasi-focal}}:] We say truth-telling is \emph{($\tau_1,\gamma_1$)-robust-symmetric-quasi-focal} in a mechanism if any \emph{symmetric} equilibrium that pays within $\gamma_1$ of truth-telling must be $\tau_1(\gamma_1)$ close to a permutation strategy profile.
    \item[\textbf{\emph{($\tau_2,\gamma_2$)-robust-approximate-quasi-focal}}:] We say truth-telling is \emph{($\tau_2,\gamma_2$)-robust-approximate-quasi-focal} in a mechanism if \begin{enumerate}
        \item all permutation equilibrium pay the same;
        \item no equilibrium has agent welfare greater than $\gamma_2(n)$ more than that of truth-telling where $n$ is the number of agents; and
        \item any profile that pays within $\gamma_2(n)$ of truth-telling must be $\tau_2(n)$ close to a permutation strategy profile.
    \end{enumerate}

\end{description}


\ifnum\fullversion=1

\subsection{F-divergence}\label{psr}

Now we introduce $f$-divergence, a key tool we will use in our mechanism design.
$f$-divergence(\cite{amari2007methods}) is used to measuring the ``difference'' between distributions. One important property of $f$-divergence is information monotonicity: For any two distributions, if we post-process each distribution in the same way, the two distributions will become ``closer'' because of the information loses.

\textbf{$F$-divergence}~\cite{amari2010information} $D_f:\Delta_{\Sigma}\times \Delta_{\Sigma}\rightarrow \mathbb{R}$ is a non-symmetric measure of difference between a distribution $\mathbf{p}\in \Delta_{\Sigma} $ and a distribution $\mathbf{q}\in \Delta_{\Sigma} $ 
and is defined to be $$D_f(\mathbf{p},\mathbf{q})=\sum_{\sigma\in \Sigma}
\mathbf{p}(\sigma)f\left( \frac{\mathbf{p}(\sigma)}{\mathbf{q}(\sigma)}\right)$$
where $f(\cdot)$ is a convex function.
Now we introduce the properties of $f$-divergence:

\ifnum\fullversion=1
\begin{enumerate}
\else
\benum
\fi
\item \textbf{Non-negative}: For any $\mathbf{p},\mathbf{q} $, $D_f(\mathbf{p},\mathbf{q})\geq 0$ and $D_f(\mathbf{p},\mathbf{q})=0$ if and only if $\mathbf{p}=\mathbf{q}$.
\item \textbf{Convexity}: Both $D_f(\cdot,\mathbf{q})$ and $D_f(\mathbf{p},\cdot)$ are convex functions for any $ \mathbf{p},\mathbf{q} $.
\item \textbf{Information Monotonicity}: For any $\mathbf{p},\mathbf{q} $, and transition matrix $\theta\in \mathbb{R}^{|\Sigma|\times |\Sigma|}$ where $\theta(\sigma,\sigma')$ is the probability that we map $\sigma'$ to $\sigma$, we have $D_f(\mathbf{p},\mathbf{q})\geq D_f(\theta \mathbf{p},\theta \mathbf{q})$. When $\theta$ is a permutation, $D_f(\mathbf{p},\mathbf{q})= D_f(\theta \mathbf{p},\theta \mathbf{q})$.
\ifnum\fullversion=1
\end{enumerate}
\else
\eenum
\fi

Now we introduce the proof in (\cite{amari2010information}) for information monotonicity and give the conditions for  the inequality of information monotonicity to be strict.

\begin{lemma}[Information Monotonicity (\cite{amari2010information})]\label{lem:im}
For any strictly convex function $f$,  $f$-divergence $D_f(\mathbf{p},\mathbf{q})$ 
satisfies information monotonicity so that for any transition matrix $\theta \in \mathbbm{R}^{\Sigma \times \Sigma}$, $D_f(\mathbf{p},\mathbf{q})\geq D_f(\theta \mathbf{p},\theta \mathbf{q})$.

Moreover, the inequality is strict if and only if there exists $\sigma, \sigma',\sigma''$ such that $\theta(\sigma,\sigma') \mathbf{p}(\sigma')>0$, $\theta(\sigma,\sigma'') \mathbf{p}(\sigma'')>0$ and  $\frac{\mathbf{p}(\sigma'')}{\mathbf{p}(\sigma')}\neq \frac{\mathbf{q}(\sigma'')}{\mathbf{q}(\sigma')}$.
\end{lemma}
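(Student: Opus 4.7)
The plan is to derive information monotonicity as a direct consequence of Jensen's inequality applied to the convex generator $f$, and then read off the strict case from strict convexity. I will assume the standard form of $f$-divergence, $D_f(\mathbf{p},\mathbf{q})=\sum_\sigma \mathbf{q}(\sigma)\, f(\mathbf{p}(\sigma)/\mathbf{q}(\sigma))$, which is what makes the listed properties hold (I treat the $\mathbf{p}$ in the excerpt's display as a notational slip).

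First, I would fix an output index $\sigma$ with $(\theta \mathbf{q})(\sigma) > 0$ and introduce the conditional weights
\[
w_{\sigma, \sigma'} \;:=\; \frac{\theta(\sigma, \sigma')\, \mathbf{q}(\sigma')}{(\theta \mathbf{q})(\sigma)},
\]
which form a probability distribution over $\sigma'$. A direct computation gives
\[
\frac{(\theta \mathbf{p})(\sigma)}{(\theta \mathbf{q})(\sigma)} \;=\; \sum_{\sigma'} w_{\sigma, \sigma'}\, \frac{\mathbf{p}(\sigma')}{\mathbf{q}(\sigma')},
\]
so by convexity of $f$,
\[
f\!\left(\tfrac{(\theta \mathbf{p})(\sigma)}{(\theta \mathbf{q})(\sigma)}\right) \;\leq\; \sum_{\sigma'} w_{\sigma, \sigma'}\, f\!\left(\tfrac{\mathbf{p}(\sigma')}{\mathbf{q}(\sigma')}\right).
\]
Multiplying both sides by $(\theta \mathbf{q})(\sigma)$, summing over $\sigma$, and collapsing the double sum using $\sum_\sigma \theta(\sigma, \sigma') = 1$ yields $D_f(\theta \mathbf{p}, \theta \mathbf{q}) \leq D_f(\mathbf{p}, \mathbf{q})$. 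Rows with $(\theta \mathbf{q})(\sigma) = 0$ contribute zero on both sides and are dispatched separately.

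For the equality characterization, I would invoke the standard fact that under strict convexity Jensen's inequality at row $\sigma$ is strict unless the random variable $\sigma' \mapsto \mathbf{p}(\sigma')/\mathbf{q}(\sigma')$, sampled according to $w_{\sigma, \cdot}$, is almost-surely constant. Consequently global equality fails iff there exists some row $\sigma$ and two indices $\sigma', \sigma''$ in the support of $w_{\sigma, \cdot}$ with $\mathbf{p}(\sigma')/\mathbf{q}(\sigma') \neq \mathbf{p}(\sigma'')/\mathbf{q}(\sigma'')$; cross-multiplying rearranges this to $\mathbf{p}(\sigma'')/\mathbf{p}(\sigma') \neq \mathbf{q}(\sigma'')/\mathbf{q}(\sigma')$, matching the lemma. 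The final step to check is that the support condition $\theta(\sigma,\sigma')\mathbf{p}(\sigma')>0$ stated in the lemma aligns with the $\mathbf{q}$-support version $\theta(\sigma,\sigma')\mathbf{q}(\sigma')>0$ that arises naturally from the weights; assuming common support of $\mathbf{p}$ and $\mathbf{q}$ (otherwise $D_f$ can be $+\infty$ and both sides are handled trivially) they coincide.

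The main obstacle is almost entirely bookkeeping: making sure the two equivalent "support" descriptions line up with the lemma's wording, and that the zero-denominator and zero-mass edge cases contribute nothing. The inequality itself is a single Jensen application, and the strict case is the standard strict form of Jensen for strictly convex functions, so once the weights $w_{\sigma,\sigma'}$ are set up correctly the entire proof is essentially a two-line computation plus a witness argument.
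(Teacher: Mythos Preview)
Your proposal is correct and follows essentially the same route as the paper: both arguments introduce conditional weights on each output row and apply Jensen's inequality to the convex generator $f$, then read off the strict case from strict convexity. The only cosmetic difference is that the paper weights by $\mathbf{p}$ rather than $\mathbf{q}$ (so the support condition $\theta(\sigma,\sigma')\mathbf{p}(\sigma')>0$ falls out directly without your common-support reconciliation step), but this is purely a convention choice and not a substantive difference.
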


We give an example where the strictness condition is not satisfied in appendix.

\gs{the following example should be moved to the appendix}





\ifnum\fullversion=1

\begin{proof}
The proof follows from algebraic manipulation and one application of convexity.

\begin{align}
D_f(\theta \mathbf{p},\theta \mathbf{q})=&\sum_{\sigma} (\theta \mathbf{p})(\sigma) f\left(\frac{(\theta \mathbf{q})(\sigma)}{(\theta\mathbf{p})(\sigma)}\right)\\
=& \sum_{\sigma} \theta(\sigma,\cdot) \mathbf{p} f\left(\frac{\theta(\sigma,\cdot) \mathbf{q}}{\theta(\sigma,\cdot)\mathbf{p}}\right)\\
=& \sum_{\sigma} \theta(\sigma,\cdot) \mathbf{p} f\left(\frac{1}{\theta(\sigma,\cdot)\mathbf{p}}\sum_{\sigma'} \theta(\sigma,\sigma')\mathbf{p}(\sigma') \frac{\mathbf{q}(\sigma')}{\mathbf{p}(\sigma')}\right)\\ \label{eq_im}
\leq & \sum_{\sigma} \theta(\sigma,\cdot) \mathbf{p} \frac{1}{\theta(\sigma,\cdot)\mathbf{p}}\sum_{\sigma'} \theta(\sigma,\sigma')\mathbf{p}(\sigma') f\left( \frac{\mathbf{q}(\sigma')}{\mathbf{p}(\sigma')}\right)\\
= & \sum_{\sigma} \mathbf{p}(\sigma) f\left(\frac{\mathbf{q}(\sigma)}{\mathbf{p}(\sigma)}\right)= D_f(\mathbf{p},\mathbf{q})
\end{align}

The second equality holds since $(\theta\mathbf{p})(\sigma)$ is dot product of the $\sigma^{th}$ row of $\theta$ and $\mathbf{p}$.

The third equality holds since $\sum_{\sigma'} \theta(\sigma,\sigma')\mathbf{p}(\sigma') \frac{\mathbf{q}(\sigma')}{\mathbf{p}(\sigma')}=\theta(\sigma,\cdot)\mathbf{q}$.

The fourth inequality follows from the convexity of $f(\cdot)$.

The last equality holds since $\sum_{\sigma}\theta(\sigma,\sigma')=1$.

We now examine under what conditions the inequality in Equation~\ref{eq_im} is strict.
Note that for any strictly convex function $g$, if $\forall u , \lambda_u>0$, $g(\sum_u \lambda_u x_u)=\sum_u \lambda_u g(x_u)$ if and only if there exists $x$ such that $\forall u , x_u=x$. By this property, the inequality is strict if and only if there exists $\sigma, \sigma',\sigma''$ such that $\frac{\mathbf{p}(\sigma'')}{\mathbf{p}(\sigma')}\neq \frac{\mathbf{q}(\sigma'')}{\mathbf{q}(\sigma')}$ and $\theta(\sigma,\sigma') \mathbf{p}(\sigma')>0$, $\theta(\sigma,\sigma'') \mathbf{p}(\sigma'')>0$.

\end{proof}

\else

\fi


\begin{corollary}\label{im_cor}
Given SNIFE prior $Q$, for any $\theta$ that is not a permutation, there exists two private signals $\sigma_1\neq \sigma_2$ such that $D_f(\theta \mathbf{q}_{\sigma_1},\theta \mathbf{q}_{\sigma_2})<D_f(\mathbf{q}_{\sigma_1},\mathbf{q}_{\sigma_2})$
\end{corollary}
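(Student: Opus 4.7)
The plan is to directly apply the strictness condition of Lemma~\ref{lem:im} (Information Monotonicity). Recall that the lemma gives strict inequality $D_f(\theta \mathbf{p}, \theta \mathbf{q}) < D_f(\mathbf{p}, \mathbf{q})$ precisely when there exist $\sigma, \sigma', \sigma''$ with $\theta(\sigma, \sigma')\mathbf{p}(\sigma') > 0$, $\theta(\sigma, \sigma'')\mathbf{p}(\sigma'') > 0$, and $\frac{\mathbf{p}(\sigma'')}{\mathbf{p}(\sigma')} \neq \frac{\mathbf{q}(\sigma'')}{\mathbf{q}(\sigma')}$. So the task reduces to producing, from the SNIFE hypotheses and the assumption that $\theta$ is not a permutation, a choice of $\sigma, \sigma', \sigma''$ and two private signals $\sigma_1 \neq \sigma_2$ (to serve as $\mathbf{p} = \mathbf{q}_{\sigma_1}$ and $\mathbf{q} = \mathbf{q}_{\sigma_2}$) that meet these three conditions simultaneously.

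First I would extract the two ``colliding'' columns from $\theta$. By Claim~\ref{claim:permutation_matrix}, because $\theta$ is a transition matrix (columns summing to $1$) that is not a permutation, some row of $\theta$ must contain at least two strictly positive entries. Fix such a row index $\sigma$ and two column indices $\sigma' \neq \sigma''$ with $\theta(\sigma, \sigma') > 0$ and $\theta(\sigma, \sigma'') > 0$. This handles the ``support'' part of the strictness condition for any choice of $\mathbf{p}$, provided $\mathbf{p}(\sigma')$ and $\mathbf{p}(\sigma'')$ are positive, which in our setting is guaranteed by the non-zero prior assumption: $q(\sigma'\mid \sigma_1) > 0$ and $q(\sigma''\mid \sigma_1) > 0$ for every $\sigma_1$.

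Next I would invoke the fine-grained prior assumption on the pair $\sigma' \neq \sigma''$: there exist $\sigma_1, \sigma_2 \in \Sigma$ such that
\[
\frac{q(\sigma'\mid \sigma_1)}{q(\sigma''\mid \sigma_1)} \neq \frac{q(\sigma'\mid \sigma_2)}{q(\sigma''\mid \sigma_2)},
\]
which in the notation of the lemma is exactly $\frac{\mathbf{q}_{\sigma_1}(\sigma')}{\mathbf{q}_{\sigma_1}(\sigma'')} \neq \frac{\mathbf{q}_{\sigma_2}(\sigma')}{\mathbf{q}_{\sigma_2}(\sigma'')}$, and hence (taking reciprocals, both sides nonzero by non-zero prior) $\frac{\mathbf{q}_{\sigma_1}(\sigma'')}{\mathbf{q}_{\sigma_1}(\sigma')} \neq \frac{\mathbf{q}_{\sigma_2}(\sigma'')}{\mathbf{q}_{\sigma_2}(\sigma')}$. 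Note that $\sigma_1 \neq \sigma_2$ automatically, else both ratios would coincide. Setting $\mathbf{p} = \mathbf{q}_{\sigma_1}$ and $\mathbf{q} = \mathbf{q}_{\sigma_2}$, all three strictness conditions of Lemma~\ref{lem:im} are met for the chosen $\sigma, \sigma', \sigma''$, yielding $D_f(\theta \mathbf{q}_{\sigma_1}, \theta \mathbf{q}_{\sigma_2}) < D_f(\mathbf{q}_{\sigma_1}, \mathbf{q}_{\sigma_2})$ as required.

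There is no real obstacle here beyond matching quantifiers correctly: the key subtlety is that the fine-grained assumption must be applied \emph{after} the two colliding columns of $\theta$ are identified, because fine-grainedness supplies a pair of conditioning signals tailored to the specific unordered pair $\{\sigma', \sigma''\}$, not a universal pair that works for all column collisions. Once the pieces are assembled in this order (non-permutation $\Rightarrow$ column collision via Claim~\ref{claim:permutation_matrix}; non-zero prior $\Rightarrow$ positive support; fine-grained prior $\Rightarrow$ distinct ratios; strict Lemma~\ref{lem:im}), the corollary follows immediately.
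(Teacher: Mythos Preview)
Your proposal is correct and follows essentially the same approach as the paper: use Claim~\ref{claim:permutation_matrix} to locate a row of $\theta$ with two positive entries, invoke the non-zero prior to ensure the support conditions, apply the fine-grained assumption to that pair of columns to obtain $\sigma_1 \neq \sigma_2$ with differing ratios, and conclude via the strictness clause of Lemma~\ref{lem:im}. Your write-up is in fact slightly more careful than the paper's in spelling out the order of quantifier instantiation and noting why $\sigma_1 \neq \sigma_2$ is automatic.
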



\ifnum\fullversion=1
\begin{proof}
First notice that when $\theta$ is not a permutation, based on Claim~\ref{claim:permutation_matrix}, there exists a row of $\theta$ such that the row has at least two positive entries, in other words, there exists $\sigma,\sigma',\sigma''$ such that $\theta(\sigma,\sigma'),\theta(\sigma,\sigma'')>0$. Based on the non-zero and fine-grained assumptions of $Q$, there exists $\sigma_1\neq \sigma_2$ such that\\ $\theta(\sigma,\sigma')\mathbf{p}(\sigma'),\theta(\sigma,\sigma'')\mathbf{p}(\sigma'')>0$ and $\frac{\mathbf{p}(\sigma')}{\mathbf{p}(\sigma'')}\neq \frac{\mathbf{q}(\sigma')}{\mathbf{q}(\sigma'')} $ where $\mathbf{p}=\mathbf{q}_{\sigma_1},\mathbf{q}=\mathbf{q}_{\sigma_2}$. When\\ $\theta(\sigma,\sigma')\mathbf{p}(\sigma'),\theta(\sigma,\sigma'')\mathbf{p}(\sigma'')>0$, we have $\theta(\sigma,\cdot)\mathbf{p}>0$. By Lemma~\ref{lem:im}, we have $D_f(\theta \mathbf{q}_{\sigma_1},\theta \mathbf{q}_{\sigma_2})<D_f(\mathbf{q}_{\sigma_1},\mathbf{q}_{\sigma_2})$
\end{proof}

\else

\fi



Now we will introduce a special $f$-divergence called Hellinger-divergence and then we will give several properties of Hellinger-divergence (\cite{amari2010information}).

\paragraph{Hellinger-divergence}
If we pick the convex function $f(\cdot)$ to be $(\sqrt{x}-1)^2$, we will obtain Hellinger-divergerce
$$D^*(\mathbf{p},\mathbf{q})=\sum_{\sigma} (\sqrt{\mathbf{p}(\sigma)}-\sqrt{\mathbf{q}(\sigma)})^2$$
 Since Hellinger-divergence is an $f$-divergence, it also has convexity and information monotonicity. Besides these two properties, Hellinger-divergence has another two properties which will be used in the future.
\ifnum\fullversion=1
\begin{enumerate}
\else
\benum
\fi
\item \textbf{Bounded divergence}: $0\leq D^*(\mathbf{p},\mathbf{q})\leq 1$
\item \textbf{Square root triangle inequality}: $\sqrt{D^*(\cdot,\cdot)}$ is a metric.
\ifnum\fullversion=1
\end{enumerate}

Note that Hellinger-divergence is bounded which is different than KL-divergence. Since we use Hellinger-divergence in our disagreement mechanism, we can always guarantee bounded payment.  
\else
\eenum
\fi

\else



\ifnum\fullversion=1

\subsection{$F$-divergence and Proper Scoring Rules}

Now we introduce $f$-divergence and strictly proper scoring rules, which are two of the main tools we will use in our mechanism design. Starting with~\cite{MRZ05}, proper scoring rules have become a common ingredient in mechanisms for unverifiable information elicitation (e.g.~\cite{prelec2004bayesian,witkowski2012robust}). $F$-divergence (\cite{amari2007methods}) is always used in measuring the ``difference'' between distributions. One important property of $f$-divergence family is information monotonicity: For any two distributions, if we use the same way to post-process each distribution, the two distributions will become ``closer'' because of potential information loses.

\else

\subsection{Hellinger-divergence and Proper Scoring Rules}

Now we introduce Hellinger-divergence and strictly proper scoring rules, which are two of the main tools we will use in our mechanism design. Starting with~\cite{MRZ05}, proper scoring rules have become a common ingredient in mechanisms for unverifiable information elicitation (e.g.~\cite{prelec2004bayesian,witkowski2012robust}). Hellinger-divergence is a type of $f$-divergence (\cite{amari2007methods}). $F$-divergence is always used in measuring the ``difference'' between distributions. One important property of $f$-divergence is information monotonicity: For any two distributions, if we use the same way to post-process each distribution, the two distributions will become ``closer'' because of potential information loses. The reason we pick Hellinger-divergence rather than other $f$-divergence is that we need \textbf{square root triangle inequality} of Hellinger-divergence (which we will describe later).

\fi


\ifnum\fullversion=1

\paragraph{$F$-divergence} 
$F$-divergence~\cite{amari2010information} $D_f:\Delta_{\Sigma}\times \Delta_{\Sigma}\rightarrow \mathbb{R}$ is a non-symmetric measure of difference between distribution $\mathbf{p}\in \Delta_{\Sigma} $ and distribution $\mathbf{q}\in \Delta_{\Sigma} $ 
and is defined to be $$D_f(\mathbf{p},\mathbf{q})=\sum_{\sigma\in \Sigma}
\mathbf{p}(\sigma)f( \frac{\mathbf{p}(\sigma)}{\mathbf{q}(\sigma)})$$
where $f(\cdot)$ is a convex function.

\else

\fi

\paragraph{Hellinger-divergence}
\ifnum\fullversion=1
If we pick convex function $f(\cdot)$ as $(\sqrt{x}-1)^2$, we will obtain Hellinger-divergerce
\else
$D^*:\Delta_{\Sigma}\times \Delta_{\Sigma}\rightarrow \mathbb{R}$ is a non-symmetric measure of difference between distribution $\mathbf{p}\in \Delta_{\Sigma} $ and distribution $\mathbf{q}\in \Delta_{\Sigma} $ and is defined to be

\fi
$$D^*(\mathbf{p},\mathbf{q})=\sum_{\sigma} (\sqrt{\mathbf{p}(\sigma)}-\sqrt{\mathbf{q}(\sigma)})^2$$

\ifnum\fullversion=1
Thus Hellinger-divergence is a type of $f$-divergence.
\else
\fi

We highlight two important properties of Hellinger-divergence: one is \emph{Information Monotonicity} which other $f$-divergences also have; another is \emph{square root triangle inequality}.

(1) \textbf{Information Monotonicity}: For any $\mathbf{p},\mathbf{q} $, and transition matrix $\theta\in \mathbb{R}^{|\Sigma|\times |\Sigma|}$ where $\theta(\sigma,\sigma')$ is the probability that we map $\sigma'$ to $\sigma$, we have $D^*(\mathbf{p},\mathbf{q})\geq D^*(\theta \mathbf{p},\theta \mathbf{q})$. When $\theta$ is a permutation, $D^*(\mathbf{p},\mathbf{q})= D^*(\theta \mathbf{p},\theta \mathbf{q})$.

(2) \textbf{Square root triangle inequality}: $|\sqrt{D^*(\mathbf{p},\mathbf{q})}-\sqrt{D^*(\mathbf{p},\mathbf{q'})}|<\sqrt{D^*(\mathbf{q'},\mathbf{q})}$ for any $\mathbf{p},\mathbf{q},\mathbf{q'} $

\fi


\subsection{Proper Scoring Rules}

Now we introduce strictly proper scoring rules, another key tool we will use in our mechanism design.  Starting with~\cite{MRZ05}, proper scoring rules have become a common ingredient in mechanisms for elicit unverifiable information elicitation (e.g.~\cite{prelec2004bayesian,witkowski2012robust}).


A scoring rule $PS:  \Sigma \times \Delta_{\Sigma} \rightarrow \mathbb{R}$ takes in a signal $\sigma \in \Sigma$  and a distribution over signals $\delta_{\Sigma} \in \Delta_{\Sigma}$ and outputs a real number.  A scoring rule is \emph{proper} if, whenever the first input is drawn from a distribution $\delta_{\Sigma}$, then the expectation of $PS$ is maximized by  $\delta_{\Sigma}$. A scoring rule is called \emph{strictly proper} if this maximum is unique. We will assume throughout that the scoring rules we use are strictly proper. By slightly abusing notation, we can extend a scoring rule to be $PS:  \Delta_{\Sigma} \times \Delta_{\Sigma} \rightarrow \mathbb{R}$  by simply taking $PS(\delta_{\Sigma}, \delta'_{\Sigma}) = \E_{\sigma \leftarrow \delta_{\Sigma}}(\sigma,  \delta'_{\Sigma})$.  We note that this means that any proper scoring rule is linear in the first term.




\ifnum\fullversion=1

\begin{example}[Example of Proper Scoring Rule]
Fix an outcome space $\Sigma$ for a signal $\sigma$.  Let $\mathbf{q} \in \Delta_{\Sigma}$ be a reported distribution.
The Logarithmic Scoring Rule maps a signal and reported distribution to a payoff as follows:
$$L(\sigma,\mathbf{q})=\log (\mathbf{q}(\sigma)).$$

Let the signal $\sigma$ be drawn from some random process with distribution $\mathbf{p} \in \Delta_\Sigma$.

Then the expected payoff of the Logarithmic Scoring Rule
$$ \E_{\sigma \leftarrow \mathbf{p}}[L(\sigma,\mathbf{q})]=\sum_{\sigma}\mathbf{q}(\sigma)\log \mathbf{q}(\sigma)=L(\mathbf{p},\mathbf{q})$$

According to \cite{gneiting2007strictly}, this value will be maximized if and only if $\mathbf{q}=\mathbf{p}$.

\end{example}

\else

\fi


\section{Impossibility of Truth-telling Being Focal}\label{section:impossible}

In this section, we show the impossibility of truth-telling being focal. In fact, we show something stronger, that no strategy profile can always be the equilibrium and have agent welfare that is strictly greater than any other equilibrium (including truth-telling). Moreover, our impossibility result applies to a very general setting of mechanisms. The case that we consider, where each agent reports a signal and prediction pair, is a special case of this general setting. 

We first define the class of mechanisms to which our impossibility result will apply. 

\begin{definition}[Mechanism]
We define a mechanism $\mathcal{M}$ for a setting $(n,\Sigma)$ as a tuple $\mathcal{M}:=(\mathcal{R},M)$ where $\mathcal{R}$ is a set of all possible report profiles the mechanism allows, and $M:\mathcal{R}^n\mapsto \mathbb{R}^n$ is a mapping from all agents' report profiles to each agent's reward.
\end{definition}

The intuitive explanation for this impossibility result is that the agents can collude to relabel the signals and the mechanism has no way to defend against this relabeling without knowing some information about agents' common prior.  

The proposition stated below implies that, in particular, no strategy profile can always be an equilibrium that has the agent welfare that is strictly greater than any other equilibrium.  

\begin{proposition}\label{permutation} 
Let $\mathcal{M}$ be a mechanism that does not know the common prior, for any strategy profile $s$, and any permutation $\pi$:\\
(1) $s$ is a strict Bayesian Nash equilibrium of $\mathcal{M}$ for any symmetric, informative prior iff $\pi(s)$ is a strict Bayesian Nash equilibrium of $\mathcal{M}$ for any symmetric, informative prior.\\
(2) There exists a prior $Q$ such that $AW_{\mathcal{M}}(n,\Sigma,Q,\mathbf{s})\leq AW_{\mathcal{M}}(n,\Sigma,Q,\pi(\mathbf{s})) $. 
\end{proposition}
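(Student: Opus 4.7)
The plan is to exploit the fact that, since $\mathcal{M}$ does not see the prior, the mechanism is ``blind'' to whether signals have been consistently relabelled by a permutation $\pi$. Concretely, I will first establish a change-of-variables lemma of the form
\[
AW_{\mathcal{M}}(n,\Sigma,Q,s)\;=\;AW_{\mathcal{M}}(n,\Sigma,\pi^{-1}(Q),\pi(s)),
\]
and moreover the full joint distribution of report profiles matches. The idea: under $(\pi^{-1}(Q),\pi(s))$, draw signals $\tau=(\tau_1,\dots,\tau_n)\sim \pi^{-1}(Q)$ and let $\sigma_i:=\pi(\tau_i)$. By the definition of $\pi^{-1}(Q)$, $\sigma\sim Q$. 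Agent $i$'s report is
\[
\pi(s)_i(\tau_i,\pi^{-1}(Q))\;=\;s_i(\pi(\tau_i),\pi(\pi^{-1}(Q)))\;=\;s_i(\sigma_i,Q),
\]
which is exactly the distribution of reports under $(Q,s)$. Since $M$ depends only on reports, all expected payments, including each agent's, coincide.

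For part (1), I would use this lemma together with the bijection $s'_i\leftrightarrow \pi(s'_i)$ between unilateral deviations. If an agent can profitably deviate from $\pi(s)$ under $\pi^{-1}(Q)$ with some strategy $s''_i$, then pulling it back gives a profitable deviation $\pi^{-1}(s''_i)$ from $s$ under $Q$, with identical expected payoffs. Strictness is preserved similarly. Because $Q\mapsto \pi^{-1}(Q)$ is a bijection on the set of symmetric informative priors (symmetry and informativeness being manifestly $\pi$-invariant), ``$s$ is a strict BNE for every symmetric informative prior'' is equivalent to ``$\pi(s)$ is a strict BNE for every symmetric informative prior,'' which is (1).

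For part (2), I would use a finite-orbit / averaging argument. The permutation $\pi$ has finite order $k$, so for any starting symmetric informative prior $Q_0$ the orbit $\mathcal{O}=\{Q_0,\pi^{-1}(Q_0),\pi^{-2}(Q_0),\dots,\pi^{-(k-1)}(Q_0)\}$ is finite and consists entirely of symmetric informative priors. Pick $Q^*\in\mathcal{O}$ that maximizes $AW_{\mathcal{M}}(n,\Sigma,\cdot\,,\pi(s))$ over $\mathcal{O}$. Since $\pi^{-1}(Q^*)\in\mathcal{O}$, maximality yields
\[
AW_{\mathcal{M}}(n,\Sigma,\pi^{-1}(Q^*),\pi(s))\;\le\;AW_{\mathcal{M}}(n,\Sigma,Q^*,\pi(s)).
\]
Combining this with the change-of-variables lemma applied at $Q^*$, i.e.\ $AW_{\mathcal{M}}(n,\Sigma,Q^*,s)=AW_{\mathcal{M}}(n,\Sigma,\pi^{-1}(Q^*),\pi(s))$, gives the desired inequality $AW_{\mathcal{M}}(n,\Sigma,Q^*,s)\le AW_{\mathcal{M}}(n,\Sigma,Q^*,\pi(s))$.

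The only delicate step is the change-of-variables lemma: one must verify the definitional chase $\pi(s)_i(\tau_i,\pi^{-1}(Q))=s_i(\sigma_i,Q)$ and, for the equilibrium half of (1), that deviations transform bijectively while preserving expectations. Once that bookkeeping is in place, both conclusions follow by the bijection / finite-orbit maximization argument and do not require any structural property of $M$ beyond the fact that it is prior-oblivious.
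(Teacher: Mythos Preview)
Your proposal is correct and follows essentially the same approach as the paper: the ``change-of-variables lemma'' is exactly the paper's \emph{indistinguishable scenarios} observation that $(Q,\mathbf{s})\approx(\pi^{-1}(Q),\pi(\mathbf{s}))$, and both proofs then use the bijection $Q\mapsto\pi^{-1}(Q)$ for part~(1) and the finite order of $\pi$ for part~(2). The only cosmetic difference is that for part~(2) you phrase the orbit argument as a direct maximization of $AW(\cdot,\pi(\mathbf{s}))$ over the orbit, whereas the paper runs the logically equivalent descending-chain contradiction $AW(A_0)>AW(A_1)>\cdots>AW(A_{O_d})=AW(A_0)$; your formulation is slightly cleaner but the underlying idea is identical.
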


The key idea to prove this theorem is what we refer to as \textbf{Indistinguishable Scenarios}:

\begin{definition}[Scenario]
We define a scenario for the setting $(n,\Sigma)$ as a tuple $(Q,\mathbf{s})$ where $Q$ is a prior, and $\mathbf{s}$ is a strategy profile. 
\end{definition}

Given mechanism $\mathcal{M}$, for any scenario $A=(Q_A,\mathbf{s}_A)$, we write $AW_{\mathcal{M}}(n,\Sigma,A)$ as the agent welfare when agents play $\mathbf{s}_A$ and have common prior $Q_A$.

For two scenarios $A=(Q_A,\mathbf{s}_A)$, $B=(Q_B,\mathbf{s}_B)$ for setting $(n,\Sigma)$, let $\sigma_A:=(\sigma_{1_A},\sigma_{2_A},...,
\sigma_{n_A})$ be agents' private signals drawn from $Q_A$, $\sigma_B:=(\sigma_{1_B},\sigma_{2_B},...,
\sigma_{n_B})$ be agents' private signals drawn from $Q_B$. 

\begin{definition}[Indistinguishable Scenarios]\label{indistinguishable}
We say two scenarios $A,B$ are indistinguishable $A\approx B$ if there is a coupling of the random variables $\sigma_A$ and $\sigma_B$ such that $\forall i$, $s_A(\sigma_{i_A},Q_A)=s_B(\sigma_{i_B},Q_B)$ and agent $i_A$ has the same belief about the world as agent $i_B$, in other words, $Pr(\hat{\sigma}|\sigma_{i_A},Q_A,s_A)=Pr(\hat{\sigma}|\sigma_{i_B},Q_B,s_B)$ $\forall \hat{\sigma}\in\Sigma$. 
\end{definition}

Now we will prove two properties of indistinguishable scenarios which are the main tools in the proof for our impossibility result.

\begin{observation}
If $(Q_A,\mathbf{s}_A)\approx (Q_B,\mathbf{s}_B)$, then (i) for any mechanism $\mathcal{M}$,  $\mathbf{s}_A$ is a (strict) equilibrium for prior $Q_A$ iff $\mathbf{s}_B$ is a (strict) equilibrium for prior $Q_B$. (ii) $AW_{\mathcal{M}}(n,\Sigma,A)=AW_{\mathcal{M}}(n,\Sigma,B)$
\end{observation}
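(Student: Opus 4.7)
The plan is to lift the coupling of signals guaranteed by indistinguishability to a coupling of the full report profiles, and then read off both statements from the resulting joint distribution. Concretely, fix a coupling of $(\sigma_A, \sigma_B)$ that realizes Definition~\ref{indistinguishable}. Because $s_A(\sigma_{i_A}, Q_A) = s_B(\sigma_{i_B}, Q_B)$ as distributions on $\mathcal{R}$ for each $i$, I would extend the coupling by drawing, for every agent $i$, a single random seed and using it to generate $r_{i_A}$ and $r_{i_B}$ from their common distribution; this produces a joint probability space on which $r_{i_A} = r_{i_B}$ almost surely.

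Under this extended coupling, the random vector $(r_1, \ldots, r_n)$ has exactly the same distribution in both scenarios, so $M(r_A) = M(r_B)$ almost surely and hence $AW_{\mathcal{M}}(n, \Sigma, A) = \E\!\left[\sum_i M_i(r_A)\right] = \E\!\left[\sum_i M_i(r_B)\right] = AW_{\mathcal{M}}(n, \Sigma, B)$. This establishes part (ii).

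For part (i), I would consider any agent $i$ and any alternative strategy $s'_i$. Under the coupling, the reports of the other agents $r_{-i_A}$ and $r_{-i_B}$ are already coupled to equality, so their joint conditional distribution given agent $i$'s observed signal agrees in the two scenarios. Combined with the belief-equality clause of Definition~\ref{indistinguishable} applied across all $j \neq i$, this shows that agent $i$'s belief about the other agents' full reports, conditional on her own signal, is identical in $A$ and $B$. Evaluating agent $i$'s expected payoff from deviating to $s'_i$ against $\mathbf{s}_{A,-i}$ versus $\mathbf{s}_{B,-i}$ therefore produces the same number. Since this holds for every unilateral deviation, $\mathbf{s}_A$ is a Bayesian Nash equilibrium for $Q_A$ iff $\mathbf{s}_B$ is for $Q_B$; the strict version follows because the (strict) inequality between $\mathbf{s}_i$ and any $s'_i$ is preserved by the equality of all expected payoffs.

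The main obstacle I anticipate is justifying that the belief clause, which is stated only for the marginal over a single reported signal $\hat\sigma$, actually lifts to the joint distribution of the $(n-1)$ full report tuples in $\mathcal{R}^{n-1}$ needed to evaluate $M_i$. I plan to handle this by noting that under the extended coupling each other agent's report is a randomized function solely of her own signal via the common output distribution $s_A(\cdot, Q_A) = s_B(\cdot, Q_B)$, so once the coupled joint distribution of signals is in place, independence of the agents' internal strategy randomness upgrades the marginal belief equality to the required joint belief equality.
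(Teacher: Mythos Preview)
Your approach is essentially the same as the paper's: both use the coupling from Definition~\ref{indistinguishable} to transfer best responses and agent welfare between scenarios, and you are in fact more explicit than the paper in extending the signal coupling to a coupling of full report profiles via shared strategy randomness, which yields part~(ii) cleanly. The subtlety you flag in your last paragraph---that the belief clause is stated only for the marginal over a single reported signal $\hat\sigma$, whereas evaluating $M_i$ for a general mechanism requires agent $i$'s conditional belief about the entire tuple $r_{-i}\in\mathcal{R}^{n-1}$---is real, and the paper's argument glosses over it just as much as yours does; your proposed fix via the extended coupling does not quite close it for an arbitrary coupling, since conditioning on $\sigma_{i_A}$ need not coincide with conditioning on the coupled $\sigma_{i_B}$. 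In the paper this causes no trouble because the observation is only ever applied to the deterministic coupling $\sigma_B=\pi^{-1}(\sigma_A)$ arising in Proposition~\ref{permutation}, where $\sigma_{i_A}$ and $\sigma_{i_B}$ determine each other and the conditional beliefs automatically agree.
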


At a high level, (1) is true since any reported profile distribution that agent $i_A$ can deviate to, agent $i_B$ can deviate to the same reported profile distribution as well and obtain the same expected payment as agent $i_A$.

Formally, we will prove the $\Rightarrow$ direction in (1) by contradiction. The proof of the other direction will be similar. Consider the coupling for $\sigma_A,\sigma_B$ mentioned in the definition of indistinguishable scenarios. For the sake of contradiction, assume there exists $i$ and $\sigma_{i_B}$ such that $\hat{\sigma}'\neq s_{i_B}(\sigma_{i_B},Q_B)$ is a best response for agent $i_B$. Since agent $i_A$ has the same belief about the world as agent $i_B$ and $s_{i_A}(\sigma_{i_A},Q_A)=s_{i_B}(\sigma_{i_B},Q_B)$, $\hat{\sigma}'\neq s_{i_A}(\sigma_{i_A},Q_A)$ is a best response to agent $i_A$ as well, which is a contradiction to the fact that $\mathbf{s}_A$ is a strictly equilibrium for prior $Q_A$. 

To gain intuition about (2), consider the coupling again.  For any $i$, agent $i_A$ reports the same thing and has the same belief for the world as agent $i_B$, which implies the expected payoff of agent $i_A$ is the same as agent $i_B$. (2) follows.

Now we are ready to prove our impossibility result:
\begin{proof}[of Proposition~\ref{permutation}]
We prove part (1) and part (2) separately. 

\paragraph{Proof of Part (1)} 

Let $A:=(Q,\mathbf{s}),B:=(\pi^{-1}(Q),\pi(\mathbf{s}))$. We will show that for any strategy profile $\mathbf{s}$ and any prior $Q$, $A\approx B$. Based on our above observations, part (1) immediately follows from that fact. 

To prove $(Q,\mathbf{s})\approx (\pi^{-1}Q,\pi(\mathbf{s}))$, we can couple $(\sigma_1,\sigma_2,...,\sigma_n)$ with $(\pi^{-1}(\sigma_1),\pi^{-1}(\sigma_2),..,\pi^{-1}(\sigma_n))$ where $(\sigma_1,\sigma_2,...,\sigma_n)$ is drawn from $Q$. It is a legal coupling since $$ Pr_{\pi^{-1}(Q)}(\pi^{-1}(\sigma_1),\pi^{-1}(\sigma_2),..,\pi^{-1}(\sigma_n))=Pr_{Q}(\sigma_1,\sigma_2,...,\sigma_n)$$ according to the definition of $\pi^{-1}(Q)$. 

Now we show this coupling satisfies the condition in Definition~\ref{indistinguishable}. First note that\\ $\pi(s_i)(\pi^{-1}(\sigma_i),\pi^{-1}(Q))=s_i(\sigma_i,Q)$. Now we begin to calculate $Pr(\hat{\sigma}|\sigma_{i_B},Q_B,s_B)$
\begin{align}
\label{im0}
Pr(\hat{\sigma}|\sigma_{i_B},Q_B,s_B)=&Pr(\hat{\sigma}|\pi^{-1}\sigma_i,\pi^{-1} (Q), \pi(s))\\ \label{im1}
=&\sum_j \Pr(j)\sum_{\sigma'}Pr_{\pi^{-1} (Q)}(\sigma'|\pi^{-1}\sigma_i)Pr({\pi(s_j)(\sigma',\pi^{-1} (Q))}=\hat{\sigma})\\ \label{im2}
=&\sum_j \Pr(j)\sum_{\sigma'}Pr_{\pi^{-1} (Q)}(\sigma'|\pi^{-1}\sigma_i)Pr(s_j(\pi (\sigma'),\pi \pi^{-1} (Q))=\hat{\sigma})\\ \label{im3}
= & \sum_j \Pr(j)\sum_{\sigma'}Pr_{Q}(\pi (\sigma')|\sigma_i)Pr( s_j(\pi (\sigma'), Q)=\hat{\sigma})\\ \label{im4}
= &\sum_j \Pr(j) \sum_{\sigma''}Pr_{Q}(\sigma''|\sigma_i)Pr(s_j(\sigma'', Q)=\hat{\sigma})\\ \label{im5}
=& Pr(\hat{\sigma}|\sigma_i,Q, s)=Pr(\hat{\sigma}|\sigma_{i_A},Q_A,s_A)
\end{align}

From (\ref{im0}) to (\ref{im1}): To calculate a randomly chosen agent's reported signal, we should sum over all possible agents $j$ and calculate the probability conditioning on agent $j$ being picked. To calculate the probability that agent $j$ has reported $\hat{\sigma}$, we should sum over all possible private signals agent $j$ has received and calculate the probability agent $j$ reported $\hat{\sigma}$ conditioning on he received private signal $\sigma'$, which is determined by agent $j$'s strategy. 

By abusing notation a little bit, we can write ${\pi(s_j)(\sigma',\pi^{-1} Q)}$ as a random variable (it is actually a distribution) with $Pr({\pi(s_j)(\sigma',\pi^{-1} Q)}=\hat{\sigma})={\pi(s_j)(\sigma',\pi^{-1} Q)}(\hat{\sigma})$. According to above explanation, (\ref{im1}) follows. 

(\ref{im2}) follows from the definition of permuted strategy (See Section~\ref{strategyprofile}). 

(\ref{im3}) follows from the definition of permuted prior (See Section~\ref{strategyprofile}). 

By replacing $\pi(\sigma')$ by $\sigma''$, (\ref{im4}) follows.

We finished the proof $A\approx B$, as previously argued, result (1) follows.  

\vspace{5pt}

\paragraph{Proof for Part (2)}

We will prove the second part by contradiction:

Fix permutation strategy profile $\pi$. First notice that there exists an positive integer $O_d$ such that $\pi^{O_d}=I$ where $I$ is the identity and agents play $I$ means they tell the truth (we can pick $O_d$ as the order of $\pi$ in the permutation group).

Given any strategy profile $s$, for the sake of contradiction, we assume that there exists a mechanism $\mathcal{M}$ with unknown common prior such that $AM_{\mathcal{M}}(n,\Sigma,Q,\mathbf{s})>AM_{\mathcal{M}}(n,\Sigma,Q,\pi(\mathbf{s}))$ for any prior $Q$. For positive integer $k\in\{0,1,...,O_d\}$, we construct three scenarios:

$$A_k:=(\pi^k(Q),s),\ A_{k+1}:= (\pi^{k+1}(Q),s),\ B_{k}:=(\pi^k(Q),\ \pi(s))$$

and show for any $k$, 

(I)$AM_{\mathcal{M}}(n,\Sigma,A_{k})>AM_{\mathcal{M}}(n,\Sigma,B_{k})$, 

(II) $AM_{\mathcal{M}}(n,\Sigma,A_{k+1})=AM_{\mathcal{M}}(n,\Sigma,B_{k})$. 

Combining (I), (II) and the fact $A_0=A_{O_d}$, we have $$AM_{\mathcal{M}}(n,\Sigma,A_{0})>AM_{\mathcal{M}}(n,\Sigma,A_{1})>...AM_{\mathcal{M}}(n,\Sigma,A_{O_d})=AM_{\mathcal{M}}(n,\Sigma,A_{0})$$ which is a contradiction. 

Now it is only left to show (I) and (II). Based on our assumption $$AM_{\mathcal{M}}(n,\Sigma,Q,\mathbf{s})>AM_{\mathcal{M}}(n,\Sigma,Q,\pi(\mathbf{s}))$$ for any prior $Q$, we have (I). By the same proof we have in part (1), we have $A_{k+1}\approx B_{k}$, which implies (II) according to our above observations.

\end{proof}


\begin{corollary}
Let $\mathcal{M}$ be a mechanism that does not know the common prior, given truth-telling strategy $\tru$, if there exists a permutation $\pi$ such that $\pi(\tru)\neq \tru$, $\tru$ cannot have agent welfare that is always strictly highest among all equilibria. 
\end{corollary}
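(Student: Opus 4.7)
The plan is to derive this corollary directly from Proposition~\ref{permutation} applied to the truth-telling strategy profile $\tru$ together with the given permutation $\pi$. First I would observe that the hypothesis "$\tru$ has agent welfare that is always strictly highest among all equilibria" implicitly assumes $\tru$ is itself an equilibrium for every symmetric informative prior (otherwise the claim is vacuous and there is nothing to rule out), so I may as well assume $\tru$ is a strict Bayesian Nash equilibrium for every such prior.

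Next I would apply Proposition~\ref{permutation}(1) to $\mathbf{s} = \tru$: since $\tru$ is a strict Bayesian Nash equilibrium for every symmetric informative prior, so is $\pi(\tru)$. Because $\pi(\tru) \neq \tru$ by hypothesis, $\pi(\tru)$ is therefore a \emph{different} strict equilibrium of $\mathcal{M}$ available at every such prior.

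Then I would invoke Proposition~\ref{permutation}(2), which guarantees the existence of some prior $Q$ for which
\[
AW_{\mathcal{M}}(n,\Sigma,Q,\tru) \;\leq\; AW_{\mathcal{M}}(n,\Sigma,Q,\pi(\tru)).
\]
At this particular prior $Q$, the permuted profile $\pi(\tru)$ is an equilibrium distinct from $\tru$ whose agent welfare is at least that of $\tru$. Consequently $\tru$ fails to be strictly highest in agent welfare at $Q$, which contradicts the hypothesis that it is always strictly highest. This yields the corollary.

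I do not foresee any real obstacle: the content is essentially a restatement of Proposition~\ref{permutation} specialized to $\mathbf{s} = \tru$. The only subtlety worth a sentence in the writeup is the remark that $\pi(\tru) \neq \tru$ is precisely what ensures the two equilibria are genuinely different, so that the weak inequality from part (2) actually refutes the claim of being \emph{strictly} highest among \emph{all} equilibria rather than being trivially consistent with it.
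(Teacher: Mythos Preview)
Your proposal is correct and follows essentially the same route as the paper: invoke Proposition~\ref{permutation}(1) to conclude that $\pi(\tru)$ is also an equilibrium, then Proposition~\ref{permutation}(2) to obtain a prior where $\pi(\tru)$ has at least the agent welfare of $\tru$, and use $\pi(\tru)\neq\tru$ to conclude strictness fails. Your explicit remark about why the condition $\pi(\tru)\neq\tru$ is needed is a helpful clarification that the paper leaves implicit.
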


The requirement that $\pi(\tru)\neq\tru$ ensures, that a truthful input to the mechanism depends on the private signal of an agent. An example when this would not hold would be if there is only possible input.  
\begin{proof}
According to Proposition~\ref{permutation}, $\pi(\tru)$ is an equilibrium as well and there exists a prior $Q$ such that the agent welfare of $\pi(\tru)$ is greater or equal to $\tru$, so $\tru$ cannot have agent welfare that is always strictly highest among all equilibria.
\end{proof}

\section{The Disagreement Mechanism}\label{section:DM}

In this section, we will introduce two mechanisms: the \textit{Truthful Mechanism} and the \textit{Disagreement Mechanism}. Both of these mechanisms are based on the Bayesian Truth serum (BTS) framework, which means agents are paid based on a ``prediction score'' and an ``information score''. We will introduce the ``prediction score'' used in the two mechanisms first, it will become a strong tool to motivate people to tell the truth. 

\textbf{Prediction Score via Proper Scoring Rules}  Agents will receive a prediction score based on how well their prediction predicts a randomly chosen agent's reported signal. Say an agent $i$ reports prediction $\hat{\mathbf{p}}_i$ then a random agent, call him agent $j$, is picked for him, agent $i$ will receive a prediction score $PS(\hat{\sigma}_j,\hat{\mathbf{p}}_i)$ when $PS$ is a proper scoring rule (see Section~\ref{psr}) of proper scoring rules family. Note that any proper scoring rule works. $PS(\hat{\sigma}_j,\hat{\mathbf{p}}_i)$ is maximized if and only if agent $i$'s reported prediction $\hat{\mathbf{p}}_i$ is his expected likelihood for $\hat{\sigma}_j$. Agent $i$ cannot pretend to have a different expected likelihood without reducing his expectation for his prediction score. Note that actually when agents play a \textit{best prediction strategy profile} (See definition~\ref{bpsp}), each agent maximizes his individual prediction score given the signal strategy $(\theta_1,\theta_2,...,\theta_n)$.  

\vspace{5pt}

We cannot only pay agents based on prediction score, that will give no incentive to agents to report their private signals. Similar with BTS, we need ``information score'' which is motivated by a concept we call \emph{Inconsistency}:

\subsection{Inconsistency: Agree to Disagree}
The common prior assumption tells us agents cannot agree to disagree. That is, if two agents receive the same private information, they must have the same ``belief'' about the world. In our setting, if agents tell the truth (or use a permutation strategy profile), whenever two agents report the same signal, they will report the same prediction as well. We use the concept of \textit{inconsistency} to represent how much agents agree to disagree which we would like to discourage. We define \textit{Inconsistency} as the ``difference'' between two random agents' predictions when they report the same signal. Permutation strategy profiles will have the lowest inconsistency score $0$. We will give a more formal definition later. \looseness=-1

The \textit{Inconsistency} concept motivates the \textit{Truthful Mechanism}. We use a strictly proper scoring rule $PS(\cdot,\cdot)$ to define the ``difference'' between two agents' predictions in the \textit{Truthful Mechanism} and give each agent $i$ an \textbf{Information Score} by randomly picking an agent $j$ and punishing agent $i$ the ``difference'' between agent $i$'s and agent $j$'s predictions if they report the same signal. 

\yk{soda r1 comment: Mechanism M: r is not defined, and the way the prediction score is
defined, it is not clear whether i's payment depends on i's signal or
prediction report. }

\paragraph{Truthful Mechanism $\mathcal{M}$:}

Let $\alpha,\beta > 0$ be parameters and let $PS$ be a strictly proper scoring rule, then we define the \textit{truthful mechanism} $\mathcal{M}(\alpha,\beta,PS)$\footnote{This mechanism is essentially the same as the Divergence-Based Bayesian Truth Serum mechanism 
previously published in ~\cite{radanovic2014incentives}.} as follows:

\ifnum\fullversion=1
\begin{enumerate}
\else
\benum
\fi
\item Each agent $i$ reports a signal and a prediction $r_i=(\hat{\sigma}_i,\mathbf{\hat{p}}_i)$

\item For each agent $i$ and agent $j$, we define a prediction score that depends on agent $i$'s prediction and agent $j$'s report signal $$score_P(r_i,r_j) = PS(\hat{\sigma}_j,\mathbf{\hat{p}}_i),$$ and an  information score  $$ score_I(r_i,r_j)=\left\{
\begin{aligned}
0 &  & \hat{\sigma}_i \neq \hat{\sigma}_j \\
-(PS(\mathbf{\hat{p}}_j,\mathbf{\hat{p}}_j)-PS(\mathbf{\hat{p}}_j,\mathbf{\hat{p}}_i)) &  & \hat{\sigma}_i= \hat{\sigma}_j 
\end{aligned}
\right.
$$ 

\item Each agent $i$ is matched with a random agent $j$. The payment for agent $i$ is $$payment_{\mathcal{M(\alpha, \beta, PS)}}(i,\mathbf{r})=\alpha score_P(r_i,r_j) + \beta score_I(r_i,r_j). $$
\ifnum\fullversion=1
\end{enumerate}
\else
\eenum
\fi

\begin{theorem}\label{theorem:truthful}\footnote{An equivalent theorem was proved in \citet{radanovic2014incentives}, and we include it here for completeness.}

For  any $\alpha, \beta >0$ and any strictly proper scoring rule $PS$, $\mathcal{M}(\alpha, \beta, PS)$ has truth-telling as a strict Bayesian-Nash equilibrium whenever the prior $Q$ is informative and symmetric. 
\end{theorem}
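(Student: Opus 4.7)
The plan is to fix an arbitrary agent $i$, condition on all other agents playing truth-telling, and show that the expected payment to agent $i$ as a function of her report $(\hat{\sigma}_i, \hat{\mathbf{p}}_i)$ is uniquely maximized at $(\sigma_i, \mathbf{q}_{\sigma_i})$. Since every other agent $j$ reports $(\sigma_j, \mathbf{q}_{\sigma_j})$ under truth-telling, and the random match $j$ draws $\sigma_j$ from the conditional distribution $\mathbf{q}_{\sigma_i}$ from agent $i$'s Bayesian perspective, I can rewrite agent $i$'s expected payment, using linearity in the first argument of $PS$, as
\begin{equation*}
\alpha\, PS(\mathbf{q}_{\sigma_i}, \hat{\mathbf{p}}_i) \;-\; \beta\, q(\hat{\sigma}_i \mid \sigma_i)\bigl[PS(\mathbf{q}_{\hat{\sigma}_i}, \mathbf{q}_{\hat{\sigma}_i}) - PS(\mathbf{q}_{\hat{\sigma}_i}, \hat{\mathbf{p}}_i)\bigr].
\end{equation*}

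Next I would analyze the two terms separately. For the prediction term, strict properness of $PS$ gives $PS(\mathbf{q}_{\sigma_i}, \hat{\mathbf{p}}_i) \le PS(\mathbf{q}_{\sigma_i}, \mathbf{q}_{\sigma_i})$, with equality iff $\hat{\mathbf{p}}_i = \mathbf{q}_{\sigma_i}$. For the information term, strict properness again gives $PS(\mathbf{q}_{\hat{\sigma}_i}, \mathbf{q}_{\hat{\sigma}_i}) - PS(\mathbf{q}_{\hat{\sigma}_i}, \hat{\mathbf{p}}_i) \ge 0$, with equality iff $\hat{\mathbf{p}}_i = \mathbf{q}_{\hat{\sigma}_i}$; combined with $q(\hat{\sigma}_i \mid \sigma_i) > 0$ (non-zero prior) and $\beta > 0$, this means the info score is always $\le 0$ and achieves $0$ exactly when $\hat{\mathbf{p}}_i = \mathbf{q}_{\hat{\sigma}_i}$. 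Truth-telling achieves $\alpha\, PS(\mathbf{q}_{\sigma_i}, \mathbf{q}_{\sigma_i})$ plus zero information penalty, so it is a best response.

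To upgrade from best response to \emph{strict} best response, I would split any alternative report $(\hat{\sigma}_i, \hat{\mathbf{p}}_i) \ne (\sigma_i, \mathbf{q}_{\sigma_i})$ into two cases. If $\hat{\mathbf{p}}_i \ne \mathbf{q}_{\sigma_i}$, the prediction term strictly drops (by strict properness) while the information term is non-positive, so the total is strictly smaller. If $\hat{\mathbf{p}}_i = \mathbf{q}_{\sigma_i}$ but $\hat{\sigma}_i \ne \sigma_i$, the prediction term is unchanged, so I must show the information penalty is strictly negative. This is where the informative prior assumption enters: since $\sigma_i \ne \hat{\sigma}_i$, there exists some $\sigma''$ with $q(\sigma'' \mid \sigma_i) \ne q(\sigma'' \mid \hat{\sigma}_i)$, hence $\mathbf{q}_{\sigma_i} \ne \mathbf{q}_{\hat{\sigma}_i} = \mathbf{q}_{\hat{\sigma}_i}$, so $PS(\mathbf{q}_{\hat{\sigma}_i}, \mathbf{q}_{\hat{\sigma}_i}) - PS(\mathbf{q}_{\hat{\sigma}_i}, \mathbf{q}_{\sigma_i}) > 0$ by strict properness. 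Combined with $q(\hat{\sigma}_i \mid \sigma_i) > 0$ and $\beta > 0$, this gives a strictly negative information contribution.

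There is no serious obstacle here; the main thing to get right is the careful case split, especially ensuring that the non-zero prior assumption is used to make the coefficient $q(\hat{\sigma}_i \mid \sigma_i)$ positive and the informative prior assumption is used to make $\mathbf{q}_{\sigma_i} \ne \mathbf{q}_{\hat{\sigma}_i}$ whenever $\sigma_i \ne \hat{\sigma}_i$, so that both factors in the information-penalty product are strictly positive in Case 2. A minor bookkeeping step is verifying the identity $\mathbb{E}_{\sigma_j \sim \mathbf{q}_{\sigma_i}}[PS(\sigma_j, \hat{\mathbf{p}}_i)] = PS(\mathbf{q}_{\sigma_i}, \hat{\mathbf{p}}_i)$ using the extended-scoring-rule convention already set up in the paper.
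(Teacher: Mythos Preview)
Your proposal is correct and follows essentially the same approach as the paper: derive the same expected-payment formula by linearity of $PS$ in its first argument, bound the prediction and information terms separately via strict properness, and use the informative prior to force $\hat{\sigma}_i=\sigma_i$ once $\hat{\mathbf{p}}_i=\mathbf{q}_{\sigma_i}$. The only stylistic difference is that the paper argues by contrapositive (``if the payment equals the maximum then $\hat{\mathbf{p}}_i=\mathbf{q}_{\sigma_i}$, hence the information term vanishes, hence $\mathbf{q}_{\hat{\sigma}_i}=\mathbf{q}_{\sigma_i}$'') while you do an explicit two-case split on whether $\hat{\mathbf{p}}_i=\mathbf{q}_{\sigma_i}$; these are logically equivalent, and your explicit invocation of the non-zero prior to ensure $q(\hat{\sigma}_i\mid\sigma_i)>0$ is a point the paper leaves implicit.
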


\ifnum\fullversion=1

\begin{proof}
We must show that for every agent, if other agents tell the truth, then this agent can (strictly) maximize his expected payoff if and only if he chooses to tell the truth.  

Assume that all agents other than $i$ are telling the truth. The probability that agent $i$ is matched with agent $j$ is $Pr(j)=\frac{1}{n-1}$. The expected payoff for agent $i$ is:  
\begin{align}
&\E[payment_{\mathcal{M}(\alpha,\beta,PS(\cdot,\cdot))}(i,\mathbf{r})|\sigma_i]\\ \label{e01}
&= \sum_{j\neq i}(Pr(j)\E[\alpha score_P(r_i,r_j) + \beta score_I(r_i,r_j)|\sigma_i])\\ \label{e02}
&=\sum_{j\neq i} \frac{1}{n-1} [\alpha PS(\E(\hat{\sigma}_{j}|\sigma_i),\mathbf{\hat{p}}_i)+\beta (-Pr(\hat{\sigma}_j=\hat{\sigma}_i|\sigma_i) \E[(PS(\mathbf{\hat{p}}_j,\mathbf{\hat{p}}_j)-PS(\mathbf{\hat{p}}_j,\mathbf{\hat{p}}_i))|\sigma_i,\hat{\sigma}_j=\hat{\sigma}_i]]\\ \label{e03}
&=\sum_{j\neq i} \frac{1}{n-1} [\alpha PS(\E(\sigma_{j}|\sigma_i),\mathbf{\hat{p}}_i)+\beta (Pr(\sigma_j=\hat{\sigma}_i|\sigma_i) (PS(\mathbf{q}_{\hat{\sigma}_i},\mathbf{\hat{p}}_i)-PS(\mathbf{q}_{\hat{\sigma}_i},\mathbf{q}_{\hat{\sigma}_i})))]\\ \label{e04}
&=\alpha PS(\E(\sum_{j\neq i} \frac{1}{n-1} \sigma_{j}|\sigma_i),\mathbf{\hat{p}}_i)+\sum_{j\neq i} \frac{\beta}{n-1} [ (Pr(\sigma_j=\hat{\sigma}_i|\sigma_i) (PS(\mathbf{q}_{\hat{\sigma}_i},\mathbf{\hat{p}}_i)-PS(\mathbf{q}_{\hat{\sigma}_i},\mathbf{q}_{\hat{\sigma}_i})))]\\ \label{e05}
&= \alpha PS(\theta_{-i}\mathbf{q}_{\sigma_i},\mathbf{\hat{p}}_i)+\sum_{j\neq i} \frac{\beta}{n-1} [ (Pr(\sigma_j=\hat{\sigma}_i|\sigma_i) (PS(\mathbf{q}_{\hat{\sigma}_i},\mathbf{\hat{p}}_i)-PS(\mathbf{q}_{\hat{\sigma}_i},\mathbf{q}_{\hat{\sigma}_i})))]\\ \label{e06}
&= \alpha PS( \mathbf{q}_{\sigma_i},\mathbf{\hat{p}}_i)+\sum_{j\neq i} \frac{\beta}{n-1} [ (Pr(\sigma_j=\hat{\sigma}_i|\sigma_i) (PS(\mathbf{q}_{\hat{\sigma}_i},\mathbf{\hat{p}}_i)-PS(\mathbf{q}_{\hat{\sigma}_i},\mathbf{q}_{\hat{\sigma}_i})))]
\end{align}

From (\ref{e01}) to (\ref{e02}): When $\hat{\sigma}_i\neq 
\hat{\sigma}_j$, the information score is 0, so we only need to consider the case  $\hat{\sigma}_i= \hat{\sigma}_j$.\\

From (\ref{e02}) to (\ref{e03}): All agents other than $i$ tell the truth, so $\hat{\sigma}_j=\sigma_j$ and $$\hat{\mathbf{p}}_j=\mathbf{q}_{\sigma_j}=\mathbf{q}_{\hat{\sigma}_j}=\mathbf{q}_{\hat{\sigma}_i}.$$\\

From (\ref{e03}) to (\ref{e04}): The proper scoring rule is linear for the first entry.\\

From (\ref{e04}) to (\ref{e05}): Based on Claim~\ref{claim:best prediction}, $E(\sum_{j\neq i} \frac{1}{n-1} \sigma_{j}|\sigma_i)=\theta_{-i}\mathbf{q}_{\sigma_i}$.\\

From (\ref{e05}) to (\ref{e06}): Note that for any $j\neq i$, agent $j$ tells the truth so $\theta_{-i}=I$. 

First, if agent $i$ plays truthfully, then $\hat{\sigma}_i=\sigma_i,\mathbf{\hat{p}}_i=\mathbf{q}_{\sigma_i}$, and we will have $ \E(payment(i,\mathcal{M})|\sigma_i) = \alpha PS(\mathbf{q}_{\sigma_i},\mathbf{q}_{\sigma_i}) $ because $PS(\mathbf{q}_{\hat{\sigma}_i},\mathbf{\hat{p}}_i)-PS(\mathbf{q}_{\hat{\sigma}_i},\mathbf{q}_{\hat{\sigma}_i})= 0$.





Now show that to receive a payment this high, agent $i$ must play truthfully.  Assume that \\
$\E(payment(i,\mathcal{M})|\sigma_i) \geq \alpha PS(\mathbf{q}_{\sigma_i},\mathbf{q}_{\sigma_i})$. First, the second term of Equation (\ref{e06}) is non-positive based on the property of proper scoring rule. Then we must have that  $PS(\mathbf{q}_{\sigma_i},\mathbf{\hat{p}}_i) \geq  PS(\mathbf{q}_{\sigma_i},\mathbf{q}_{\sigma_i})$, but because $PS$ is a strictly proper scoring rule, this happens only if $\mathbf{\hat{p}}_i  = \mathbf{q}_{\sigma_i}$. But this implies that the second term of Equations (\ref{e06}) equals 0, and this requires that $PS(\mathbf{q}_{\hat{\sigma}_i},\mathbf{\hat{p}}_i)=PS(\mathbf{q}_{\hat{\sigma}_i},\mathbf{q}_{\hat{\sigma}_i})$.

However, by the properties of strictly proper scoring rules, this means
$\mathbf{q}_{\hat{\sigma}_i}=\hat{\mathbf{p}}_i$. However, we already showed that $\hat{\mathbf{p}}_i = \mathbf{p}_i= \mathbf{q}_{\sigma_i}$. Putting this together we see that $\mathbf{q}_{\hat{\sigma}_i}= \mathbf{q}_{\sigma_i}$. Based on the informative prior assumption, this implies that $\hat{\sigma}_i=\sigma_i$.



So we proved that for any agent $i$, when other agents tell the truth, agent $i$ can obtain the best expected payoff if and only if he tells the truth which means truth-telling is a strict Bayesian-Nash equilibrium in the truthful mechanism.  
\end{proof}

\else

\fi 



\subsection{Diversity from Disagreement}
The previous section shows that we can encourage truth-telling (or permutation strategy profiles) by punishing \textit{Inconsistency}. The problem is that there may be many other equilibria with inconsistency score $0$. 
We consider an extreme example: All agents coordinate together and report the same signal $\sigma$. For every agent, he will predict the fraction of $\sigma$ is 100$\%$ to maximize his prediction score. While such a strategy profile is consistent, the unitary predictions implies their report profiles have little information. Motivated by this extreme example, we define a concept called \textit{Diversity}. 

Recall that \textit{Inconsistency} is the expected ``difference'' (we will describe this ``difference'' later) between two random agents' predictions when they report the same signal, we define \textit{Diversity} as the expected \emph{Hellinger-divergence} between two random agents' predictions when they report different signals. One of the reasons we use \emph{Hellinger-divergence} is that it is $f$-divergence and we can use information monotonicity here. To give a flavor of our future proof, we will give a simple observation here which motivates our \textit{Disagreement Mechanism}. 

\begin{observation}
If the number of agents is infinite and agents play a best prediction strategy, for every two agents $i,j$, their predictions becomes ``closer'' compared with agents play a permutation strategy profile. 
\end{observation}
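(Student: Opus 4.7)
The approach is to reduce the observation to a one-line application of the information monotonicity property of Hellinger-divergence. By Claim~\ref{claim:best prediction}, under a best prediction strategy profile with signal strategies $(\theta_1,\dots,\theta_n)$, agent $i$'s reported prediction is $\theta_{-i}\mathbf{q}_{\sigma_i}$ where $\theta_{-i}=\frac{1}{n-1}\sum_{k\neq i}\theta_k$, and analogously agent $j$'s prediction is $\theta_{-j}\mathbf{q}_{\sigma_j}$. Under a permutation strategy profile $\pi(\tru)$ each agent's signal strategy is the permutation matrix $\pi$, so the same claim tells us that the analogous best predictions are $\pi\mathbf{q}_{\sigma_i}$ and $\pi\mathbf{q}_{\sigma_j}$. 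So the whole question reduces to comparing the two Hellinger-divergences $D^*(\theta_{-i}\mathbf{q}_{\sigma_i},\theta_{-j}\mathbf{q}_{\sigma_j})$ and $D^*(\pi\mathbf{q}_{\sigma_i},\pi\mathbf{q}_{\sigma_j})$.

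The ``infinite number of agents'' hypothesis is used only to simplify the left-hand side. First, $\theta_{-i}$ and $\theta_{-j}$ differ from the full average $\bar{\theta}_n=\frac{1}{n}\sum_k\theta_k$ by additive terms of order $1/n$, so in the limit both leave-one-out averages converge to a common limiting transition matrix $\bar{\theta}$. Hence the distance between the two predictions converges to $D^*(\bar{\theta}\mathbf{q}_{\sigma_i},\bar{\theta}\mathbf{q}_{\sigma_j})$. On the permutation side, the equality case of information monotonicity (property~(1) of Hellinger-divergence) gives $D^*(\pi\mathbf{q}_{\sigma_i},\pi\mathbf{q}_{\sigma_j})=D^*(\mathbf{q}_{\sigma_i},\mathbf{q}_{\sigma_j})$ because $\pi$ is a permutation matrix.

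Now information monotonicity applied with the transition matrix $\bar{\theta}$ yields the weak inequality $D^*(\bar{\theta}\mathbf{q}_{\sigma_i},\bar{\theta}\mathbf{q}_{\sigma_j})\leq D^*(\mathbf{q}_{\sigma_i},\mathbf{q}_{\sigma_j})=D^*(\pi\mathbf{q}_{\sigma_i},\pi\mathbf{q}_{\sigma_j})$, which is the ``becomes closer'' claim. The main delicate point, which I would flag explicitly, is the strictness of this inequality: if the limiting average $\bar{\theta}$ happens itself to be a permutation matrix then the two Hellinger-divergences coincide and nothing is gained, so the observation really refers to a best prediction strategy whose average signal strategy is not a permutation. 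In that genuinely non-permutation case Corollary~\ref{im_cor} (which packages the strict case of Lemma~\ref{lem:im} together with the SNIFE assumptions on the prior) produces a pair $\sigma_i\neq\sigma_j$ for which $D^*(\bar{\theta}\mathbf{q}_{\sigma_i},\bar{\theta}\mathbf{q}_{\sigma_j})<D^*(\mathbf{q}_{\sigma_i},\mathbf{q}_{\sigma_j})$, upgrading ``closer'' to strictly closer for at least some coupling of signals and justifying the observation as the motivation for the Disagreement Mechanism's use of a diversity term to detect non-truthful best-prediction collusion.
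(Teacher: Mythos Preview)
Your argument is correct and follows essentially the same route as the paper: invoke Claim~\ref{claim:best prediction} to write the reported predictions as $\theta_{-i}\mathbf{q}_{\sigma_i}$ and $\theta_{-j}\mathbf{q}_{\sigma_j}$, use the infinite-agent limit to replace both leave-one-out averages by a common transition matrix, and then apply information monotonicity to conclude $D_f(\theta\mathbf{q}_{\sigma_i},\theta\mathbf{q}_{\sigma_j})\leq D_f(\mathbf{q}_{\sigma_i},\mathbf{q}_{\sigma_j})$. Your additional discussion of strictness via Corollary~\ref{im_cor} goes beyond what the paper includes in this short proof (the paper treats the observation as motivational and only proves the weak inequality here), but it is accurate and anticipates material the paper develops later.
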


\begin{proof}
Based on Claim~\ref{claim:best prediction}, agent $i$ will report $\theta_{-i} \mathbf{q}_{\sigma_i}$ given $\sigma_i$ is his private signal and recall that $\theta_{-i}=\frac{\sum_{j\neq i}\theta_i}{n-1}$ where $(\theta_1,\theta_2,....,\theta_n)$ is the signal strategy. Since $\lim_{n\rightarrow\infty}\theta_{-i}=\lim_{n\rightarrow\infty}\theta_{-j}$, we can use $\theta$ to represent the common limit.

Via \textbf{information monotonicity} we have $D_f(\theta \mathbf{q}_{\sigma_i},\theta \mathbf{q}_{\sigma_j})\leq D_f(\mathbf{q}_{\sigma_i},\mathbf{q}_{\sigma_j})
$ which implies their reported predictions becomes ``closer''. 
\end{proof}

Note that if the number of agents is not infinite, we can still have the same result if agents play a symmetric best prediction strategy profile since we only need $\theta_{-i}=\theta_{-j}$. 

Ideally, we would like to show every equilibrium is a best prediction strategy profile, since it will imply any permutation strategy profile is more diverse than any other equilibrium. However, we believe it is not the case, instead, we introduce a new concept \textit{Classification} that combines \textit{Diversity} and \textit{Inconsistency} and show that the permutation strategy profiles almost have the highest classification scores. 


\ifnum\fullversion=1


\else

\fi 

\subsection{Classification: Diversity $-$ Inconsistency }

The set of all agents' report profiles is a classification. In general, a classification is a map $\mathcal{C}:U\rightarrow\Sigma$ where $U$ are multiple points and $\Sigma$ is a set of labels. All points that are mapped to one label $\sigma$ construct a cluster $C_{\sigma}$. In our setting, the agents who report the same signal construct a cluster which is a collection of their reported profiles. 

If all agents tell the truth, then the predictions that are labeled with the same label are in the same position in $\Delta_{\Sigma}$ (common prior assumption) while the predictions that are labeled with different labels are in different positions (because of our informative prior assumption). Intuitively, it is an ideal classification since typically we hope in a good classification, the points with the same label (in the same cluster) are close while the points with different labels are far from each other. 


In general, a typical way to measure a classification $\mathcal{C}$ is 

$$Q(\mathcal{C})=\sum_{u,v\in U,\mathcal{C}(u)\neq \mathcal{C}(v)}D_1(u,v)-\sum_{u,v\in U,\mathcal{C}(u)= \mathcal{C}(v)}D_2(u,v)$$

where $D_1$ and $D_2$ are functions that  measure the ``difference'' between points $u,v\in U$. 

Note that for $Q(\mathcal{C}_r)$, $\sum_{u,v\in U,\mathcal{C}_r(u)\neq \mathcal{C}_r(v)}D_1(u,v)$ intuitively captures the concept of \textit{Diversity} while $\sum_{u,v\in U,\mathcal{C}_r(u)\neq \mathcal{C}_r(v)}D_2(u,v)$ intuitively captures \textit{Inconsistency}. So $Q(\mathcal{C}_r)$ captures the concept of $\textit{Diversity}-\textit{Inconsistency}$.


\ifnum\fullversion=1

We will also show that if we use $\textit{Diversity}-\textit{Inconsistency}$ as a new score, permutation strategies will have strictly higher score than any other symmetric equilibrium and if the number of agents are sufficient large, the equilibria with a higher score or even a score ``close" to truth-telling must be ``close'' to a permutation strategy profile. We will use this property to design a mechanism that has truth-telling as both a robust-symmetric-quasi-focal and robust-approximate-quasi-focal equilibrium. We will design a mechanism that satisfies the below two conditions which implies that this mechanism has truth-telling as a weakly-quasi-focal equilibrium, and
(1) has the same equilibria with our previous \emph{Truthful Mechanism}; and
(2) in which the average agent-welfare is $\textit{Diversity}-\textit{Inconsistency}$.

\else

\fi

\ifnum\fullversion=1
We will see that the below \emph{Disagreement Mechanism} we propose satisfies the above two conditions. To design this mechanism, we
\else
We will show all permutation strategy profiles gain equal $\textit{Diversity}-\textit{Inconsistency}$ score which is almost higher than that of any other equilibrium. We use this fact to design our \textit{Disagreement Mechanism}. We will
\fi 
(a) first use a typical trick\gs{Can we site Peer Prediction here} to create a zero-sum game which has the same equilibria with the previous \emph{Truthful Mechanism}; (b) pay each agent an extra score that only depends on other agents which will not change the structure of the equilibria. We want this extra score to represent $\textit{Diversity}-\textit{Inconsistency}$.

\yk{I do not understand the last part of the comment. soda r1: Mechanism M+: r is not defined, it is not clear when calling payment M
for agent i A whether its peer will only come from A or could come from
A or B (and similarly for paymentM(j B,r)), in step 2 it is not clear
whether j and k can only come from A or can come from any agents. Also: it took me a while to realize that step 1 is the 'zero sum game'
trick and score C is where the I-D payment is computed. You could be
more explicit about this in the text before the mechanism. Furthermore,
it is not clear why the simple modification to the first mechanism that
just introduces a payment term in score I when agents disagree on
signals wouldn't work. I would find this an extremely helpful discussion.}

\paragraph{Disagreement Mechanism $\mathcal{M+}(\alpha,\beta,PS(\cdot,\cdot))$}
$\mathbf{r}=\{r_1,r_2,...,r_n\}$ is all agents' report profiles where for any $r$, $r_i=(\hat{\sigma}_i,\hat{\mathbf{p}}_i)$. 

\begin{enumerate}
\item \textit{Zero-sum Trick}: Divide the agents into two non-empty groups-group A and group B. Each group of agents plays the game (mechanism) $\mathcal{M}$ that is restricted in their own group. For group A, each agent $i_A$ receives a 

\begin{align*}
score_{\mathcal{M}}(i_A,\mathbf{r}) = & payment_{\mathcal{M}(\alpha,\beta,PS(\cdot,\cdot))}(i_A,\mathbf{r}_A)\\
 - & \frac{1}{|A|}\sum_{j_B\in B} payment_{\mathcal{M}(\alpha,\beta,PS(\cdot,\cdot))}(j_B,\mathbf{r}_B)
\end{align*}
Where $payment_{\mathcal{M}(\alpha,\beta,PS(\cdot,\cdot))}(i_A,\mathbf{r}_A)$ is agent $i_A$'s payment when he is paid by mechanism $\mathcal{M}(\alpha,\beta,PS(\cdot,\cdot))$ given group $A$'s report profiles $\mathbf{r}_A$ and he can only be paired with a random peer from group $A$ (we have similar explanation for $payment_{\mathcal{M}(\alpha,\beta,PS(\cdot,\cdot))}(j_B,\mathbf{r}_B)$). For agents in group B, we use the analogous way to score them. 

\item \textit{Additional Classification Reward}: Each agent $i$ is matched with two random agents $j,k$ chosen from all agents (including group $A$ and group $B$), the payment for agent $i$ is 
$$payment_{\mathcal{M+}(\alpha,\beta,PS(\cdot,\cdot)}(i,\mathbf{r})=score_{\mathcal{M}}(i,\mathbf{r})+score_C(r_j,r_k) $$
where $$score_C(r_j,r_k)=\left\{
\begin{aligned}
D^*(\mathbf{\hat{p}}_j,\mathbf{\hat{p}}_k) &  & \hat{\sigma}_j \neq \hat{\sigma}_k \\
-\sqrt{D^*(\mathbf{\hat{p}}_j,\mathbf{\hat{p}}_k)} &  & \hat{\sigma}_j = \hat{\sigma}_k 
\end{aligned}\right.
$$  recall that $D^*$ denotes the Hellinger Divergence.
\end{enumerate}

Note that our \emph{Disagreement Mechanism} cannot make truth-telling \emph{quasi-focal} since if there are $m$ (the number of signals) agents and for any $i$, agent $i$ always reports the $i^{th}$ signal and predicts that the $j^{th}$ signal occurs with probability $\frac{1}{m-1}$ if $j\neq i$ while $i^{th}$ signal occurs with probability 0. In this situation, each agent plays his best response and $\textit{Inconsistency}=0$, yet $\textit{Diversity}$ is much higher than it is in the truth-telling equilibrium. Instead, we are going to prove our \emph{Disagreement Mechanism} makes truth-telling both a robustly-symmetric-quasi-focal and robustly-approximately-quasi-focal equilibrium.

\yk{I do not quite understand this comment as well. soda r1: 6. It would be worth a brief discussion of the challenges with these
kinds of mechanisms that penalize in the case of disagreement. For
example, they can tend to be vulnerable to threats by some agents, and
may suffer low payments out of equilibrium.  (Related: given that the
authors emphasize relaxing common knowledge as one the two key
problems of classical PP, the following two papers are relevant:
Witkowski, J. and Parkes, D. C. (2012b). Peer Prediction Without a
Common Prior. In Proceedings of the 13th ACM Conference on Electronic
Commerce (EC'12). Witkowski, J. and Parkes, D. C. (2013).
Learning the Prior in Minimal Peer Prediction. In Proceedings of the 3rd
Workshop on Social Computing and User Generated Content (SC'13).)}

\begin{theorem}[Main Theorem]\label{main_thm}
Given any SNIFE prior, if the number of agents $n\geq 3$, and if $\frac{\alpha}{\beta}<\frac{1}{4m}$ where $m$ is the number of signals and $n$ is the number of agents, then

\begin{enumerate}
    \item $\mathcal{M+}(\alpha,\beta,PS(\cdot,\cdot))$ is truthful; 
    \item $\mathcal{M+}(\alpha,\beta,PS(\cdot,\cdot))$ has truth-telling as a \emph{symmetric-quasi-focal} equilibrium; 
    \item $\mathcal{M+}(\alpha,\beta,PS(\cdot,\cdot))$ has truth-telling as a \emph{($\tau_1,\gamma_1$)-robust-symmetric-quasi-focal} equilibrium; and
    \item $\mathcal{M+}(\alpha,\beta,PS(\cdot,\cdot))$ has truth-telling as a \emph{($\tau_2,\gamma_2$)-robust-approximate-quasi-focal} equilibrium
\end{enumerate}

where $\tau_1(\gamma_1)=O(\sqrt[3]{\gamma_1})$, $\gamma_2(n)=O(\frac{m}{\sqrt{n}})$ and $\tau_2(n)=O(\sqrt[6]{\frac{m^2}{n})}$ (the constants we omit only depend on the first two moments of prior $Q$)\footnote{Actually $\tau_1(\gamma_1)=\frac{1}{c_1}\sqrt[3]{\frac{\gamma_1}{c_2,c_3,c_4}}$, $\gamma_2(n)=\frac{4\sqrt{2}m}{\sqrt{n}}$ and $\tau_2^6(n)=\frac{128*m^2}{n c_1^6 (c_2 c_3 c_4)^2}$, $c_1 = \min_{s,t\in \Sigma} q(s|t)$, $c_2 = \min_{s,t\in \Sigma}Pr(s,t)$, $c_3 = \min_{u,v} \max_{s,t}||\frac{q(u|s)}{q(u|t)}-\frac{q(v|s)}{q(v|t)}||^2$, $c_4 = \min_{s,t,u} f''(\frac{q(u|s)}{q(u|t)})$ where $f(x)=(\sqrt{x}-1)^2$.}. 
 
\end{theorem}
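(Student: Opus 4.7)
The plan proves the four parts in sequence, leveraging the structure of the zero-sum construction and information monotonicity of $D^*$. For part (1), agent $i$'s payment in $\mathcal{M+}$ decomposes into (a) his payment in $\mathcal{M}$ restricted to his group, (b) a subtracted term depending only on the other group's reports, and (c) $score_C(r_j, r_k)$ for a pair $j,k$ that does not include $i$. Only (a) depends on agent $i$'s own report, so truthfulness of $\mathcal{M}$ (Theorem~\ref{theorem:truthful}) transfers directly, and the condition $\alpha/\beta < 1/(4m)$ plays no role in this step.

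For parts (2)--(4), the crucial observation is that summing the zero-sum scores across all agents gives zero, so the total agent welfare in $\mathcal{M+}$ equals $n \cdot \E[score_C(r_j, r_k)]$, which is essentially $\emph{Diversity} - \emph{Inconsistency}$. Welfare comparisons thus reduce to comparing this classification score. At truth-telling, Inconsistency vanishes and Diversity equals $\sum_{\sigma \neq \sigma'} q(\sigma)\, q(\sigma'\mid \sigma)\, D^*(\mathbf{q}_\sigma, \mathbf{q}_{\sigma'})$. Under a permutation equilibrium $\pi(\tru)$, each agent with private signal $\sigma$ reports $(\pi(\sigma), \pi\mathbf{q}_\sigma)$, and since $D^*$ is invariant under a common coordinate permutation, the classification score is unchanged. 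This gives the first clauses of parts (2) and (4).

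For a symmetric non-permutation equilibrium with signal strategy $\theta$: the prediction best-response to the proper-scoring-rule term in $\mathcal{M}$ forces each agent to report the Bayesian posterior $\theta\mathbf{q}_{\sigma_i}$ for his prediction (here the ratio bound $\alpha/\beta < 1/(4m)$ guarantees the prediction score dominates the inconsistency pull so the best response really is the posterior). Corollary~\ref{im_cor} then gives $D^*(\theta\mathbf{q}_\sigma, \theta\mathbf{q}_{\sigma'}) < D^*(\mathbf{q}_\sigma, \mathbf{q}_{\sigma'})$ for some pair, strictly reducing Diversity, while Inconsistency becomes strictly positive because distinct $\sigma$'s yield distinct posteriors that are pooled into a common reported signal with positive probability. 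For parts (3) and (4) I would strengthen this to a \emph{quantitative} information-monotonicity lemma: if some row of $\theta$ has two entries each exceeding $\tau$, the drop $D^*(\mathbf{q}_\sigma, \mathbf{q}_{\sigma'}) - D^*(\theta\mathbf{q}_\sigma, \theta\mathbf{q}_{\sigma'})$ is at least $\Omega(\tau^3)$ with constants depending only on the first two moments of $Q$ (matching $c_1,c_2,c_3,c_4$ in the footnote). The cube-root arises from a second-order expansion of $f(x)=(\sqrt{x}-1)^2$ about $x=1$: strict convexity yields a quadratic lower bound on the divergence drop in terms of the ``disagreement'' $\mathbf{q}_\sigma(\sigma')/\mathbf{q}_{\sigma''}(\sigma') - \mathbf{q}_\sigma(\sigma''')/\mathbf{q}_{\sigma''}(\sigma''')$ within a row, multiplied by an extra $\tau$ factor from the frequency of that row. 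Inverting gives $\tau_1(\gamma_1) = O(\sqrt[3]{\gamma_1})$.

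For the asymmetric part (4), I would use that each per-agent average $\theta_{-i}$ and the symmetrized $\bar\theta_n$ differ by $O(1/\sqrt{n})$ entrywise (essentially, in any equilibrium, all but $O(\sqrt{n})$ agents use strategies whose $n-1$ average is close to $\bar\theta_n$). The square-root triangle inequality for $D^*$ then bounds the classification score of any equilibrium within $O(m/\sqrt{n})$ of the value determined by $\bar\theta_n$ alone, giving $\gamma_2(n) = O(m/\sqrt{n})$; combining with the part (3) bound applied to $\bar\theta_n$ gives $\tau_2(n) = O(\sqrt[6]{m^2/n})$. The main obstacle is the quantitative information-monotonicity step: tracking how the drop in $D^*$ scales with the deviation of $\theta$ from a permutation requires a delicate second-order expansion of $f$ and identification of the ``worst row'' of $\theta$, with constants that must survive aggregation over all $(\sigma, \sigma')$ pairs and the random matching in $score_C$. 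The secondary difficulty is the asymmetric case, where cross-agent coupling between $\theta_{-i}$ and $\theta_{-j}$ for different $i,j$ has to be controlled through the square-root triangle inequality rather than by assuming a common strategy.
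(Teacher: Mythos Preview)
Your high-level architecture (zero-sum trick $\Rightarrow$ agent welfare $=$ \textit{Diversity} $-$ \textit{Inconsistency}; information monotonicity to compare with truth-telling; a quantitative information-monotonicity lemma for the robust versions; symmetrization for the asymmetric case) matches the paper. But there is a genuine gap at the heart of Part~2.

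You assert that in a symmetric equilibrium with signal strategy $\theta$, ``the ratio bound $\alpha/\beta<1/(4m)$ guarantees the prediction score dominates the inconsistency pull so the best response really is the posterior $\theta\mathbf{q}_{\sigma_i}$.'' This is false. The information-score term $-(PS(\hat{\mathbf{p}}_j,\hat{\mathbf{p}}_j)-PS(\hat{\mathbf{p}}_j,\hat{\mathbf{p}}_i))$ depends on $\hat{\mathbf{p}}_i$ and pulls the best response toward the reported predictions of agents who share $i$'s reported signal. The paper computes the actual best-response prediction (Claim~\ref{claim:actual prediction}) and it is a \emph{strict} convex combination of $\theta_{-i}\mathbf{q}_{\sigma_i}$ and the neighbors' predictions whenever $\beta>0$; no choice of $\alpha/\beta$ makes it exactly $\theta\mathbf{q}_{\sigma_i}$. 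The paper in fact says explicitly: ``Ideally, we would like to show every equilibrium is a best prediction strategy profile\ldots However, we believe it is not the case.'' So your direct application of Corollary~\ref{im_cor} to the equilibrium predictions does not go through, and neither does your argument that Inconsistency is strictly positive.

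What the paper does instead is introduce an intermediate object $s_{BP}$ (same signal strategy as the equilibrium $s$, but predictions replaced by $\theta_{-i}\mathbf{q}_{\sigma_i}$) and prove the Main Lemma: $\textit{ClassificationScore}(s)\le \textit{TotalDivergence}(s_{BP})$. This is where the $\alpha/\beta$ condition is actually used: from the convex-combination structure of the best response one gets, for any convex $g$, a bound of the form $\alpha\,(g(\hat{\mathbf{p}}_i)-g(\theta_{-i}\mathbf{q}_{\sigma_i}))\le \beta\sum_{j}\ldots(g(\hat{\mathbf{p}}_j)-g(\hat{\mathbf{p}}_i))$, and after summing, applying the square-root triangle inequality for $D^*$, and collecting terms, the right side becomes $2m\frac{\beta}{\alpha}\cdot\textit{Inconsistency}$. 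Requiring this (for each of two symmetric halves) to be at most $\tfrac12\,\textit{Inconsistency}$ gives the inequality $\textit{TotalDivergence}(s)-\textit{TotalDivergence}(s_{BP})\le \textit{Inconsistency}(s)$, which combined with $\textit{Diversity}\le\textit{TotalDivergence}$ yields the Main Lemma. Only then can information monotonicity be applied to $s_{BP}$ (not to $s$) to compare with truth-telling. This lemma is the technical core you are missing; once it is in place, your sketches of the quantitative information-monotonicity step (Lemma~\ref{lem:general case lem}) and the symmetrization step (Lemma~\ref{lem:N_epsilon}, where $\theta_{-j}-\bar\theta_n=O(1/n)$ entrywise and the $\sqrt{n}$ arises from the square roots in $D^*$) are essentially on target.
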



\paragraph{\textbf{Theorem~\ref{main_thm} Part 1: $\mathcal{M+}(\alpha,\beta,PS(\cdot,\cdot))$ is truthful.}}

\begin{claim}\label{claim:sameeq}
The \emph{Disagreement Mechanism} has the same equilibria as the \emph{Truthful Mechanism}. 
\end{claim}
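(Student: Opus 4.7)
The plan is to observe that the Disagreement Mechanism's payoff for every agent differs from the Truthful Mechanism's payoff (restricted to the agent's own group) only by terms that are constant with respect to that agent's own report, so best responses---and hence the set of Bayesian Nash equilibria---are preserved.

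First I would fix an agent $i_A\in A$ and write out their $\mathcal{M+}$ payoff as
\[
  payment_{\mathcal{M+}}(i_A,\mathbf{r}) \;=\; payment_{\mathcal{M}}(i_A,\mathbf{r}_A) \;-\; \frac{1}{|A|}\sum_{j_B\in B} payment_{\mathcal{M}}(j_B,\mathbf{r}_B) \;+\; score_C(r_j,r_k).
\]
The second term depends only on the reports of agents in $B$, and the third only on the reports of two agents $j,k\neq i_A$; neither depends on $r_{i_A}$. Taking the expectation conditioned on $\sigma_{i_A}$ and on the strategies of every other agent, these two terms contribute only a constant to agent $i_A$'s expected payoff. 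Consequently agent $i_A$'s best response in $\mathcal{M+}$ coincides with their best response in $\mathcal{M}$ played within group $A$ alone, and the analogous statement holds for every $i_B\in B$.

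Next I would show that these ``group-restricted'' best responses coincide with best responses in the full Truthful Mechanism. The key point is that agent $i$'s expected payoff in $\mathcal{M}$ is the average over $j\neq i$ of $\alpha\,\E[score_P(r_i,r_j)\mid\sigma_i] + \beta\,\E[score_I(r_i,r_j)\mid\sigma_i]$. For any symmetric strategy profile every summand is identical, so restricting the average to members of one group does not change its value; this immediately gives that every symmetric profile is a BNE of $\mathcal{M+}$ iff it is a BNE of $\mathcal{M}$.

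The main obstacle is extending this to asymmetric equilibria, where restricting the average over matched peers to agents in $A$ could in principle shift the best response. I would handle this by invoking the linearity of the proper scoring rule in its first argument (and the corresponding linearity of $score_I$ once the matched peer's strategy is marginalised), which implies that agent $i_A$'s expected payoff depends on the other agents in $A$ only through the average signal strategy $\theta_{-i_A}$ restricted to $A$ and the common prior. Since the same decomposition holds for the full $\mathcal{M}$, the best-response condition in both mechanisms reduces to the same fixed-point equation on the per-agent signal and prediction strategies, so the equilibrium sets coincide.
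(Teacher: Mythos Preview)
Your first paragraph is exactly the paper's proof: the only term in agent $i$'s $\mathcal{M+}$ payoff that depends on $r_i$ is $payment_{\mathcal{M}}(i,\mathbf{r}_G)$ for $i$'s own group $G$, so marginal incentives---and hence best responses---are unchanged relative to $\mathcal{M}$. The paper's proof is literally just this observation and stops there.

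You then go further than the paper by noticing that $payment_{\mathcal{M}}(i,\mathbf{r}_G)$ matches agent $i$ only with peers \emph{inside} group $G$, whereas the original $\mathcal{M}$ matches $i$ with any of the other $n-1$ agents; this is a subtlety the paper's two-sentence proof does not explicitly address. Your treatment of the symmetric case is correct: when every agent plays the same strategy, averaging the pairwise score over peers in $A$ gives the same value as averaging over all peers, so symmetric equilibria coincide.

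Your asymmetric argument, however, does not close the gap you opened. Linearity of $PS$ in its first argument tells you that agent $i_A$'s expected group-restricted payoff depends on the other members only through the within-group average $\theta_{-i_A}^{A}=\frac{1}{|A|-1}\sum_{j\in A,\,j\neq i_A}\theta_j$ (and the corresponding mixture of predictions), whereas in the full $\mathcal{M}$ it depends on $\theta_{-i}=\frac{1}{n-1}\sum_{j\neq i}\theta_j$. For an asymmetric profile these are different quantities, so the two best-response systems are \emph{not} the same fixed-point equations; linearity shows each side factors through an average, not that the two averages agree. The sentence ``the best-response condition in both mechanisms reduces to the same fixed-point equation'' is therefore unjustified as written. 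In short, your proposal already contains everything in the paper's proof, and what you add beyond it is correct for symmetric profiles but incomplete for asymmetric ones.
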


Note that we have already show $\mathcal{M}(\alpha,\beta,PS(\cdot,\cdot))$ has truth-telling as a strict equilibrium for any SNIFE prior in Theorem~\ref{theorem:truthful}. Since $\mathcal{M+}(\alpha,\beta,PS(\cdot,\cdot))$ does not change the equilibrium structure of $\mathcal{M}(\alpha,\beta,PS(\cdot,\cdot))$ according to Claim~\ref{claim:sameeq}, we have $\mathcal{M+}(\alpha,\beta,PS(\cdot,\cdot))$ has truth-telling as a strict equilibrium for any SNIFE prior as well. 

We will prove the rest of the theorem in the next section. 


\ifnum\fullversion=1

\else

\fi



\ifnum\fullversion=1

\section{Showing Properties of Equilibrium}\label{section:MTWQ}

In this section, we are going to show Theorem~\ref{main_thm} part 2, 3, 4. 

We first give technical definitions for \textit{Diversity} and \textit{Inconsistency} and then prove that the average agent-welfare in the \emph{Disagreement Mechanism} is $\textit{Diversity} - \textit{Inconsistency}$.  

We first introduce a short hand which will simplify the formula for \textit{Diversity} and \textit{Inconsistency}. 

$$\int_{\hat{j},\hat{k}} Pr(\hat{j},\hat{k})\triangleq \int_{\hat{\sigma}_j,\hat{\mathbf{p}}_j,\hat{\sigma}_k,\hat{\mathbf{p}}_k}Pr_{(\hat{\sigma}_j,\hat{\mathbf{p}}_j)\leftarrow s_j(\sigma_j)}(\hat{\sigma}_j,\hat{\mathbf{p}}_j) Pr_{(\hat{\sigma}_k,\hat{\mathbf{p}}_k)\leftarrow s_k(\sigma_k)}(\hat{\sigma}_k,\hat{\mathbf{p}}_k)$$
where $s_j$ is the strategy of agent $j$ and $s_j(\sigma_j)$  is a distribution over agent $j$'s report profile $(\hat{\sigma}_j,\hat{\mathbf{p}}_j)$ given agent $j$ receives private signal $\sigma_j$ and uses strategy $s_j$, and similarly  for agent $k$.  This defines the natural measure on the reports of agents $j$ and $k$ given  that they play strategies $s_j$ and $s_k$ and a fixed prior $Q$ (which is implicit), and allows us to succinctly describe probabilities of events in this space.

Recall that $\textit{Diversity}$ is the expected Hellinger divergence $D^*$ between two random agents when they report different signals, so

\begin{align*}
\textit{Diversity}= \sum_{\substack{j\\k\neq j}}\sum_{\substack{\sigma_j,\sigma_k}} Pr(j,k) Pr(\sigma_j,\sigma_k) \int_{\hat{j},\hat{k}} Pr(\hat{j},\hat{k})\delta(\hat{\sigma}_j\neq \hat{\sigma}_k)D^*(\hat{\mathbf{p}}_j,\hat{\mathbf{p}}_k)
\end{align*}

where $Pr(j,k)$ is the probability agents $j,k$ are picked, and $Pr(\sigma_j,\sigma_k)$ is the probability that agent $j$ receives private signal $\sigma_j$ and agent $k$ receives private signal $\sigma_k$.

Similarly, we can write down the technical definition for \textit{Inconsistency}. But here we do not use Hellinger divergence as the ``difference'' function in $\sum_{u,v\in U,\mathcal{C}_r(u)= \mathcal{C}_r(v)}D(u,v)$, we use square root of the Hellinger divergence which is the Hellinger distance as the ``difference'' function. The reason is we want to use the convexity of the Hellinger divergence and the triangle inequality of the Hellinger distance. We will describe the details in the future. For now we give a technical definition for \textit{Inconsistency}: 
\begin{align*}
\textit{Inconsistency}=- \sum_{\substack{j\\k\neq j}}\sum_{\substack{\sigma_j,\sigma_k}} Pr(j,k) Pr(\sigma_j,\sigma_k) \int_{\hat{j},\hat{k}} Pr(\hat{j},\hat{k})\delta(\hat{\sigma}_j = \hat{\sigma}_k)\sqrt{D^*(\hat{\mathbf{p}}_j,\hat{\mathbf{p}}_k)}
\end{align*}



Now we define the $\textit{ClassificationScore}$ as the expected average extra score $score_C$:

\begin{align*}
\textit{ClassificationScore}=\sum_{\substack{i\\j\neq i}}\sum_{\substack{k\neq i,j\\\sigma_i,\sigma_j,\sigma_k}}  Pr(i) Pr(\sigma_i)Pr(j,k)Pr(\sigma_j,\sigma_k|\sigma_i) \int_{\hat{j},\hat{k}} Pr(\hat{j},\hat{k}) score_C(r_j,r_k)
\end{align*}


\begin{claim}\label{claim:cdi}
$\textit{ClassificationScore}=\textit{Diversity}-\textit{Inconsistency}$
\end{claim}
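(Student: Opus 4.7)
The plan is a direct algebraic calculation that substitutes the piecewise definition of $score_C$ into the definition of $\textit{ClassificationScore}$, splits the resulting sum into its two cases, and then reduces each case to the corresponding half of $\textit{Diversity}-\textit{Inconsistency}$ by summing out the agent $i$ and her signal $\sigma_i$.

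First I would rewrite
$$score_C(r_j,r_k) = \delta(\hat{\sigma}_j \neq \hat{\sigma}_k)\, D^*(\hat{\mathbf{p}}_j,\hat{\mathbf{p}}_k) \;-\; \delta(\hat{\sigma}_j = \hat{\sigma}_k)\, \sqrt{D^*(\hat{\mathbf{p}}_j,\hat{\mathbf{p}}_k)},$$
and plug this into the definition of $\textit{ClassificationScore}$. Linearity of the sum separates the expression into a ``disagreement'' part and an ``agreement'' part; our target is to show that these equal $\textit{Diversity}$ and $-\textit{Inconsistency}$ respectively.

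Next I would note that in each part the inner factor $\int_{\hat j,\hat k}Pr(\hat j,\hat k)\,[\,\cdot\,]$ depends only on $(\sigma_j,\sigma_k)$ through the strategies $s_j,s_k$, and in no way on $i$ or $\sigma_i$. Hence we may pull it outside the $\sigma_i$-sum and apply the law of total probability,
$$\sum_{\sigma_i} Pr(\sigma_i)\, Pr(\sigma_j,\sigma_k\mid \sigma_i) \;=\; Pr(\sigma_j,\sigma_k),$$
which eliminates $\sigma_i$ from each case. What remains after this step is, for each part, a sum of the form
$$\sum_i Pr(i)\sum_{\substack{j\neq i \\ k\neq i,j}}Pr(j,k)\, F(j,k),$$
where $F(j,k)$ no longer depends on $i$.

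Finally I would perform the combinatorial bookkeeping over $i$: with $Pr(i)=1/n$ and $Pr(j,k\mid i)=1/[(n-1)(n-2)]$ (uniform over ordered peer-pairs disjoint from $i$), one gets
$$\sum_{i\neq j,k} Pr(i)\, Pr(j,k\mid i) \;=\; \frac{n-2}{n(n-1)(n-2)} \;=\; \frac{1}{n(n-1)} \;=\; Pr(j,k),$$
which is exactly the unconditional pair probability used in $\textit{Diversity}$ and $\textit{Inconsistency}$. After this substitution, the disagreement part is letter-for-letter the definition of $\textit{Diversity}$ and the agreement part is letter-for-letter $-\textit{Inconsistency}$, so adding them yields $\textit{Diversity}-\textit{Inconsistency}$.

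There is no deep obstacle here; the proof is purely bookkeeping. The only place that warrants care is the normalization check in the last step, since the two sides of the claim use slightly different notations for the peer-selection distribution ($Pr(i)\,Pr(j,k\mid i)$ on the ClassificationScore side versus $Pr(j,k)$ on the Diversity/Inconsistency side), and one must verify that they agree after marginalizing out $i$.
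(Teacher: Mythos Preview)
Your proposal is correct and follows essentially the same approach as the paper's own proof: both use the law of total probability $\sum_{\sigma_i} Pr(\sigma_i)\,Pr(\sigma_j,\sigma_k\mid\sigma_i)=Pr(\sigma_j,\sigma_k)$ to eliminate $\sigma_i$, and then the combinatorial identity $\sum_{i\neq j,k} Pr(i)\,Pr(j,k\mid i)=\frac{n-2}{n(n-1)(n-2)}=Pr(j,k)$ to eliminate $i$. The only cosmetic difference is that you split $score_C$ into its two indicator pieces at the outset, whereas the paper carries $score_C$ through the reduction intact and only invokes the case split at the very end when matching against the definitions of \textit{Diversity} and \textit{Inconsistency}.
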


\begin{claim}\label{claim:per}
Any permutation strategy profile has the same \textit{ClassificationScore}, \textit{Diversity}, and \textit{Inconsistency} as truth-telling. 
\end{claim}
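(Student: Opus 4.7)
The plan is to compute the reports produced by a permutation strategy profile $\pi(\tru)$, show term-by-term equality with the truth-telling sums for \textit{Diversity} and \textit{Inconsistency}, and then invoke Claim~\ref{claim:cdi} for \textit{ClassificationScore}.

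First I would unpack the definition of $\pi(\tru)$. By the definition of a permuted strategy, $\pi(\tru)(\sigma_i,Q) = \tru(\pi(\sigma_i),\pi(Q))$, so agent $i$ reports the signal $\pi(\sigma_i)$ together with the prediction $\mathbf{q}^{\pi(Q)}_{\pi(\sigma_i)}$. Using the definition of the permuted prior, a direct computation gives $\mathbf{q}^{\pi(Q)}_{\pi(\sigma)}(\sigma') = q(\pi^{-1}(\sigma')\mid\sigma)$, which is exactly $(\theta_\pi \mathbf{q}_\sigma)(\sigma')$, where $\theta_\pi$ is the permutation matrix associated with $\pi$. So under $\pi(\tru)$, the agent with signal $\sigma_i$ deterministically reports $(\pi(\sigma_i),\theta_\pi \mathbf{q}_{\sigma_i})$.

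Second, I would handle \textit{Inconsistency}. Because both $\tru$ and $\pi(\tru)$ are deterministic, the inner integral collapses to an evaluation at the unique reports. Under $\tru$, if $\sigma_j=\sigma_k$ both agents report the prediction $\mathbf{q}_{\sigma_j}$ and so $\sqrt{D^*(\hat{\mathbf{p}}_j,\hat{\mathbf{p}}_k)}=0$; under $\pi(\tru)$ the common prediction is $\theta_\pi \mathbf{q}_{\sigma_j}$, again giving $0$. Hence \textit{Inconsistency} vanishes for both profiles.

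Third, for \textit{Diversity}, note that $\pi$ is a bijection, so $\pi(\sigma_j)\neq\pi(\sigma_k)$ iff $\sigma_j\neq\sigma_k$; the indicator $\delta(\hat{\sigma}_j\neq\hat{\sigma}_k)$ in the two sums therefore picks out the same $(\sigma_j,\sigma_k)$ terms. For each such term the Hellinger contribution in the permutation case is $D^*(\theta_\pi \mathbf{q}_{\sigma_j},\theta_\pi \mathbf{q}_{\sigma_k})$, which by the equality clause in the information monotonicity property of $f$-divergence under a permutation matrix equals $D^*(\mathbf{q}_{\sigma_j},\mathbf{q}_{\sigma_k})$. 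Since the joint marginal of $(\sigma_j,\sigma_k)$ drawn from the symmetric prior $Q$ is the same in both cases, the outer sums match term by term, giving $\textit{Diversity}_{\pi(\tru)}=\textit{Diversity}_{\tru}$. Finally, Claim~\ref{claim:cdi} gives $\textit{ClassificationScore}=\textit{Diversity}-\textit{Inconsistency}$, so equality of \textit{ClassificationScore} follows.

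The only substantive step is verifying that the reported prediction under $\pi(\tru)$ is exactly the pushforward of $\mathbf{q}_{\sigma_i}$ through $\theta_\pi$; once that identification is made, the remainder is the equality case of information monotonicity together with bookkeeping over the symmetric prior.
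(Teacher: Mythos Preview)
Your proof is correct and takes essentially the same approach as the paper: the paper's own proof is the single sentence ``any permutation strategy profile's report profiles can be seen as a relabeling of truth-telling's report profiles, which implies the claim,'' and you have simply made this relabeling explicit by computing that agent~$i$'s report under $\pi(\tru)$ is $(\pi(\sigma_i),\theta_\pi\mathbf{q}_{\sigma_i})$ and then invoking the permutation-equality case of information monotonicity. Your argument is more detailed than the paper's one-liner but follows exactly the same idea.
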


\begin{claim}\label{claim:welfare}
The average agent-welfare in our \emph{Disagreement Mechanism} is the $\textit{ClassificationScore}$  
\end{claim}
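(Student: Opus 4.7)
The plan is to show separately that (i) the contribution of the $score_{\mathcal{M}}$ terms to the \emph{sum} of payments vanishes by the zero-sum construction in Step 1 of the mechanism, and (ii) the remaining $score_C$ contribution, when averaged over agents, literally rewrites as the expression defining $\textit{ClassificationScore}$.

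For part (i), I would sum the definition of $score_{\mathcal{M}}(i_A,\mathbf{r})$ over $i_A\in A$:
\begin{align*}
\sum_{i_A\in A} score_{\mathcal{M}}(i_A,\mathbf{r}) \;=\; \sum_{i_A\in A} payment_{\mathcal{M}}(i_A,\mathbf{r}_A) \;-\; \sum_{j_B\in B} payment_{\mathcal{M}}(j_B,\mathbf{r}_B),
\end{align*}
since the factor $\tfrac{1}{|A|}$ is cancelled by summing over the $|A|$ agents. Doing the analogous computation for group $B$ and adding the two lines gives $\sum_i score_{\mathcal{M}}(i,\mathbf{r})=0$ deterministically, so in expectation the zero-sum block contributes nothing to the sum of payments and hence nothing to the agent welfare.

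For part (ii), after the cancellation the total payment equals $\sum_i score_C(r_{j_i}, r_{k_i})$, where $(j_i,k_i)$ is the independent, uniformly random pair drawn for agent $i$ in Step 2 of the mechanism. I would then take the expectation over (a) the prior on signals $\boldsymbol\sigma$, (b) the random agent $i$ (uniform, supplying the factor $Pr(i)=\tfrac{1}{n}$ that turns the sum into an average), (c) the peers $(j,k)$ (independent of everything else except the constraints $j\neq i$, $k\neq i,j$, supplying $Pr(j,k)$), and (d) the realized report profiles $(\hat j,\hat k)$ drawn from $s_j(\sigma_j)\times s_k(\sigma_k)$. Conditioning on $\sigma_i$ and marginalizing over $(\sigma_j,\sigma_k)$ introduces the factor $Pr(\sigma_j,\sigma_k\mid \sigma_i)$, while the prior on $\sigma_i$ itself supplies $Pr(\sigma_i)$. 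Reading off the resulting product gives exactly
\[
\sum_{\substack{i\\j\neq i\\k\neq i,j}}\sum_{\sigma_i,\sigma_j,\sigma_k} Pr(i)\,Pr(\sigma_i)\,Pr(j,k)\,Pr(\sigma_j,\sigma_k\mid \sigma_i)\!\int_{\hat j,\hat k}\! Pr(\hat j,\hat k)\,score_C(r_j,r_k),
\]
which is the definition of $\textit{ClassificationScore}$.

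The step that requires a bit of care, and which I view as the only real obstacle, is the bookkeeping around the random peer matching: one must verify that the matching distributions in Step 2 (each agent $i$ independently drawing $(j,k)$ from all $n$ agents other than $i$) agree with the $Pr(i)Pr(j,k)$ factorization written in the definition of $\textit{ClassificationScore}$, and that the outer sum over $i$ together with $Pr(i)=\tfrac{1}{n}$ correctly converts ``sum of payments'' into ``average agent welfare''. Once that indexing is aligned, parts (i) and (ii) compose into the claimed equality.
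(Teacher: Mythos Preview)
Your proposal is correct and takes essentially the same approach as the paper: the paper's proof consists solely of your part (i), showing $\sum_i score_{\mathcal{M}}(i,\mathbf{r})=0$ via the same two-group summation and cancellation, and treats part (ii) as immediate from the fact that $\textit{ClassificationScore}$ is \emph{defined} in the paper as the expected average $score_C$. Your additional bookkeeping in part (ii) is therefore just an explicit unpacking of that definition rather than a different argument.
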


\subsection{Proof Outline for Main Theorem}
First note that the average agent-welfare is the \textit{ClassificationScore}. We want to show that 
\begin{enumerate}
\item If the number of agents is greater than 3, then any \emph{symmetric} equilibrium that is not permutation equilibrium must have \textit{ClassificationScore} strictly less than truth-telling; and any \emph{symmetric} equilibrium that has \textit{ClassificationScore} close to truth-telling must be close to a permutation equilibrium.

\item If the number of agents is sufficient large, then no equilibrium can have a \textit{ClassificationScore} that is much greater than truth-telling; any equilibrium that has \textit{ClassificationScore} close to truth-telling must be close to a permutation equilibrium.  
\end{enumerate}

To prove our main theorem, we first introduce the concept of \textit{TotalDivergence} and then we use this value as a bridge. Recall that we defined\\
$\textit{Diversity}= \sum_{j,k\neq j,\sigma_j,\sigma_k} Pr(j,k)Pr(\sigma_j,\sigma_k) \int_{\hat{j},\hat{k}} Pr(\hat{j},\hat{k})\delta(\hat{\sigma}_j\neq \hat{\sigma}_k)D^*(\mathbf{\hat{p}}_j,\mathbf{\hat{p}}_k)$, now we define a similar concept 
\begin{align*}
\textit{TotalDivergence}= \sum_{j,k,\sigma_j,\sigma_k} Pr(j,k)Pr(\sigma_j,\sigma_k) \int_{\hat{j},\hat{k}} Pr(\hat{j},\hat{k}) D^*(\mathbf{\hat{p}}_j,\mathbf{\hat{p}}_k)
\end{align*}

First note that total divergence is robust to summing over $j,k$ or $j\neq k$ since when $j=k$, $D^*(\mathbf{\hat{p}}_j,\mathbf{\hat{p}}_k)=0$. 

We can see $\textit{TotalDivergence}\geq \textit{Diversity}$ since \textit{TotalDivergence} also includes the divergence between the agents who report the same signals. We show that the equality holds if and only if $\textit{Inconsistency}=0$: 

\begin{claim}\label{claim:inconsistency:zero}
For any strategy profile $s$, $\textit{Diversity}(s)=\textit{TotalDivergence}(s)$ $\Leftrightarrow$ $\textit{Inconsistency}(s)=0$
\end{claim}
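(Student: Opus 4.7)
The plan is to observe that both sides of the claimed equivalence are sums of non-negative quantities over the same measure on pairs $(j,k)$ with $\hat{\sigma}_j = \hat{\sigma}_k$, and that each individual term vanishes under the same condition (namely $\hat{\mathbf{p}}_j = \hat{\mathbf{p}}_k$).

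First, I would split $\textit{TotalDivergence}(s)$ according to whether the reported signals agree or disagree. Because $D^*(\hat{\mathbf{p}}_j,\hat{\mathbf{p}}_j)=0$ for $j=k$, we may sum over $j\neq k$ without loss, and then
\begin{align*}
\textit{TotalDivergence}(s)-\textit{Diversity}(s)
&=\sum_{\substack{j\\k\neq j}}\sum_{\sigma_j,\sigma_k}\Pr(j,k)\Pr(\sigma_j,\sigma_k)\int_{\hat j,\hat k}\Pr(\hat j,\hat k)\,\delta(\hat\sigma_j=\hat\sigma_k)\,D^*(\hat{\mathbf{p}}_j,\hat{\mathbf{p}}_k).
\end{align*}
This is the integral of the non-negative integrand $D^*(\hat{\mathbf{p}}_j,\hat{\mathbf{p}}_k)$ against a non-negative measure restricted to the event $\{\hat\sigma_j=\hat\sigma_k\}$. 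Thus $\textit{TotalDivergence}(s)=\textit{Diversity}(s)$ iff $D^*(\hat{\mathbf{p}}_j,\hat{\mathbf{p}}_k)=0$ almost surely (under the joint measure) on the event $\hat\sigma_j=\hat\sigma_k$, which by the strict positivity of $D^*$ on unequal distributions is equivalent to $\hat{\mathbf{p}}_j=\hat{\mathbf{p}}_k$ a.s.\ on that event.

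Next, I would do exactly the same analysis for $\textit{Inconsistency}$. Up to the overall sign, $|\textit{Inconsistency}(s)|$ is the integral of the non-negative integrand $\sqrt{D^*(\hat{\mathbf{p}}_j,\hat{\mathbf{p}}_k)}$ against the same non-negative measure restricted to $\{\hat\sigma_j=\hat\sigma_k\}$. Since $\sqrt{x}=0$ iff $x=0$, we have $\textit{Inconsistency}(s)=0$ iff $D^*(\hat{\mathbf{p}}_j,\hat{\mathbf{p}}_k)=0$ a.s.\ on the event $\hat\sigma_j=\hat\sigma_k$, which is the very same condition obtained in the previous paragraph.

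Combining the two characterizations yields the claimed equivalence. There is no real obstacle here: the result is essentially a statement that two integrals against the same positive measure vanish simultaneously because their integrands have the same zero set ($\{x=0\}=\{\sqrt{x}=0\}$). The only mild care needed is to handle the ``almost surely'' phrasing with respect to the joint measure on $(j,k,\sigma_j,\sigma_k,\hat j,\hat k)$, which is automatic from the additive structure of both sums.
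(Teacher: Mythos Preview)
Your proof is correct and follows essentially the same route as the paper: both arguments write $\textit{TotalDivergence}(s)-\textit{Diversity}(s)$ and $\textit{Inconsistency}(s)$ as integrals of $D^*$ and $\sqrt{D^*}$, respectively, against the same non-negative measure on the event $\{\hat\sigma_j=\hat\sigma_k\}$, and then observe that each integral vanishes iff the integrand is zero almost everywhere, which is the same condition in both cases.
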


\begin{corollary}\label{coro:truth}
\textit{ClassificationScore}(truth-telling)=\textit{Diversity}(truth-telling)=\textit{TotalDivergence}(truth-telling)
\end{corollary}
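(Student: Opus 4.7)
The plan is to derive this corollary as an immediate consequence of the two previous claims, together with a short direct computation that the inconsistency of truth-telling is zero.

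First I would observe that under truth-telling, an agent receiving private signal $\sigma$ reports the pair $(\sigma, \mathbf{q}_\sigma)$ deterministically. Therefore, if agents $j$ and $k$ both tell the truth and their reported signals coincide, $\hat{\sigma}_j = \hat{\sigma}_k = \sigma$, then their reported predictions also coincide: $\hat{\mathbf{p}}_j = \hat{\mathbf{p}}_k = \mathbf{q}_\sigma$. Consequently $\sqrt{D^*(\hat{\mathbf{p}}_j, \hat{\mathbf{p}}_k)} = 0$ on the event $\{\hat{\sigma}_j = \hat{\sigma}_k\}$, and plugging this into the definition of inconsistency yields $\textit{Inconsistency}(\tru) = 0$.

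Next, having established the vanishing of inconsistency, I would invoke Claim~\ref{claim:inconsistency:zero} in the ``$\Leftarrow$'' direction to conclude $\textit{Diversity}(\tru) = \textit{TotalDivergence}(\tru)$. Finally, I would apply Claim~\ref{claim:cdi} to get $\textit{ClassificationScore}(\tru) = \textit{Diversity}(\tru) - \textit{Inconsistency}(\tru) = \textit{Diversity}(\tru)$, which chains with the previous equality to give the full statement.

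There is no real obstacle here; the only substantive content is the observation that truth-telling induces deterministic agreement in predictions conditional on agreement in signals (a direct consequence of the symmetric common prior assumption), and this immediately kills the inconsistency term. The rest is just bookkeeping via the two prior claims.
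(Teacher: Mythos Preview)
Your proposal is correct and follows essentially the same approach as the paper: observe that under truth-telling $\hat{\sigma}_j=\hat{\sigma}_k$ forces $\hat{\mathbf{p}}_j=\hat{\mathbf{p}}_k$, so $\textit{Inconsistency}(\tru)=0$, and then the two equalities follow from Claims~\ref{claim:cdi} and~\ref{claim:inconsistency:zero}. The paper's proof is just a terser version of exactly this argument.
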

\begin{proof}
At the truth-telling equilibrium, $\hat{\sigma}_i=\sigma_i, \hat{\mathbf{p}}_i=\mathbf{q}_{\sigma_i}$ for any $i$, so the inconsistency score of truth-telling is 0 since $\hat{\sigma}_j=\hat{\sigma}_k\Rightarrow {\sigma}_j={\sigma}_k\Rightarrow \hat{\mathbf{p}}_j=\hat{\mathbf{p}}_k\Rightarrow D^*(\hat{\mathbf{p}}_j,\hat{\mathbf{p}}_k)=0$ which implies this corollary. 
\end{proof}

Now we begin to state our proof outline: For any equilibrium $s$, we define two modified strategies for $s$:

\begin{enumerate}
\item We define $s_{BP}$ what we call a \textit{best prediction strategy} of $s$ as a strategy where each agent uses the same signal strategy which he uses in $s$ but plays his \textit{best prediction} which maximizes the prediction score. In this case, based on Claim~\ref{claim:best prediction}, for any $i$, agent $i$ plays $\theta_{-i}\mathbf{q}_{\sigma_i}$. We illustrate the report profiles of $s_{BP}$ in the second picture from the left in Figure~\ref{fig:proof outline for main thm}. 
\item We define $symmetrized\ s_{BP}$ as a strategy where each agent plays $\bar{\theta}_n \mathbf{q}_{\sigma}$ given $\sigma$ is his private signal where $\bar{\theta}_n$ is the average signal strategy of $s_{BP}$ (also of $s$). We show the report profiles of $symmetrized\ s_{BP}$ in the third picture of Figure~\ref{fig:proof outline for main thm}  
\end{enumerate}

Our proof can be divided in four parts which are illustrated in Figure~\ref{fig:proof outline for main thm}: 

\begin{enumerate}
\item [(1)] \textbf{$\textit{ClassificationScore}(s)\leq\textit{TotalDivergence}(s_{BP})$.}[ Lemma~\ref{lem:main_lem} ]. This is our main lemma and we defer the proof of main lemma to Section~\ref{section:proof_main_lemma}. Once we show it, we can directly prove that the modified mechanism has truth-telling as symmetric quasi-focal equilibrium based on information monotonicity which will be described in our proof for main theorem. 

\item [(2)] \textbf{$\textit{TotalDivergence}(s_{BP})\approx \textit{TotalDivergence}(symmetrized\ s_{BP})$ when the number of agents is sufficient large.} [Lemma~\ref{lem:N_epsilon}] Intuitively, when $n$ is large enough, $\theta_{-j}$ will be close to $\bar{\theta}_n$. We will use this observation to prove this part. 

\item [(3)] \textbf{$\textit{ClassificationScore}(s^*)\geq \textit{ClassificationScore}(truthtelling)\\
\Rightarrow \textit{TotalDivergence}(truthtelling)\approx \textit{TotalDivergence}(symmetrized\ s^*_{BP})$ when the number of agents is sufficient large.}[Corollary~\ref{coro:ineq_cor}] Here $s^*_{BP}$ is the best prediction strategy of $s^*$. This part will also imply no equilibrium can have \textit{ClassificationScore} that is much greater than truth-telling.

\item [(4)]  \textbf{$\textit{TotalDivergence}(truthtelling)\approx \textit{TotalDivergence}(symmetrized\ s^*_{BP})\Rightarrow \mathbf{\bar{\theta}_n\approx \pi}$} where $\mathbf{\bar{\theta}_n}$ is the average signal strategy of $s^*$. [Lemma~\ref{lem:general case lem}]

We will show $\textit{TotalDivergence}(truthtelling)$ is \begin{align*}
&\sum_{j,k,\sigma_j,\sigma_k}Pr(j,k)Pr(\sigma_j,\sigma_k)D^*(\mathbf{q}_{\sigma_j}, \mathbf{q}_{\sigma_k})\\&=\sum_{j,k,\sigma_j,\sigma_k}Pr(j,k)Pr(\sigma_j,\sigma_k)D^*(\theta_{\pi} \mathbf{q}_{\sigma_j},\theta_{\pi} \mathbf{q}_{\sigma_k})
\end{align*}
where $Pr(j,k)Pr(\sigma_j,\sigma_k)$ is the probability that agent $j,k$ are picked and agent $j$ receives private signal $\sigma_j$; agent $k$ receives private signal $\sigma_k$. 

We will also show $\textit{TotalDivergence}(symmetrized\ s^*_{BP})$ is $\sum_{j,k,\sigma_j,\sigma_k}Pr(j,k)Pr(\sigma_j,\sigma_k)D^*(\bar{\theta}_n \mathbf{q}_{\sigma_j}, \bar{\theta}_n \mathbf{q}_{\sigma_k})$. 

Once we prove (1) and (2), we will show that (3) implies  $\mathbf{\bar{\theta}_n\approx \pi}$. This, informally, means that if an equilibrium pays more than truth-telling, it must be close to a permutation equilibrium, and this pays about the same as truth-telling.

\end{enumerate} 
\begin{figure}
\centering
\includegraphics[scale=0.5]{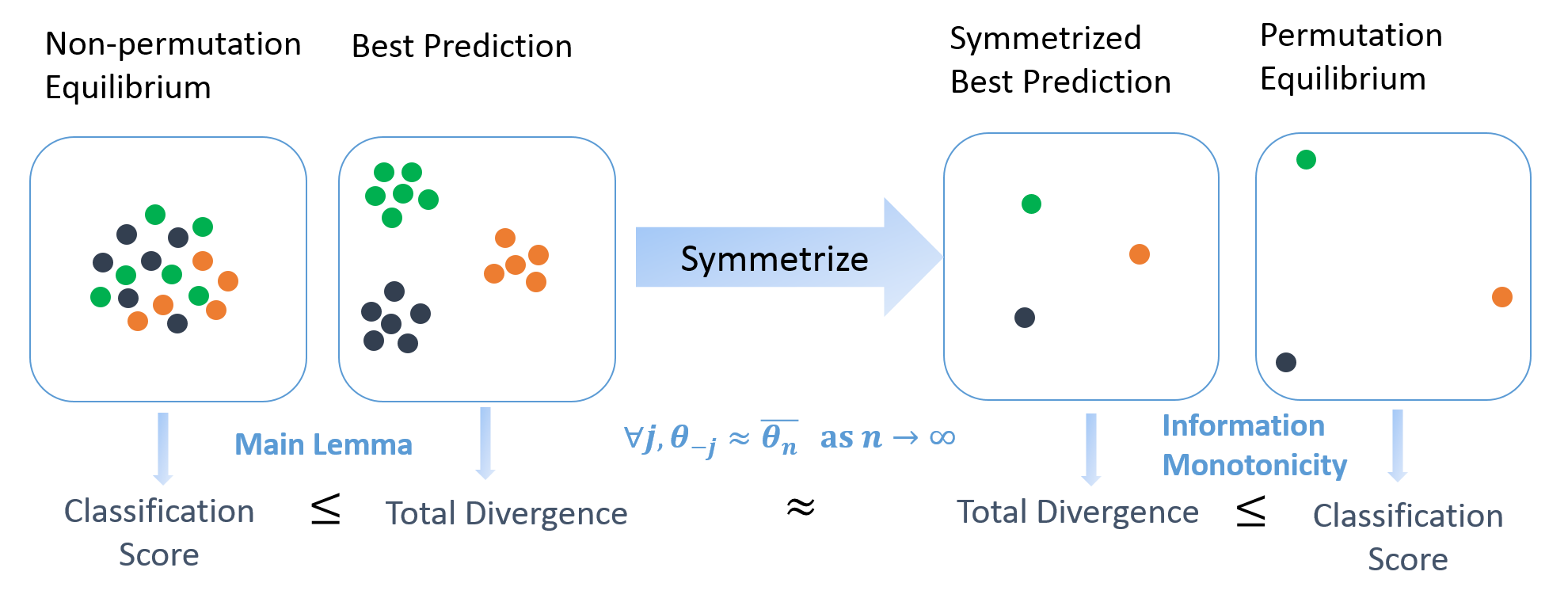}
\caption{Proof Outline for Main Theorem}
\label{fig:proof outline for main thm}
\end{figure}

\subsection{Proof for Main Theorem}\label{section:proof_main_theorem}

\paragraph{(1)$\textit{ClassificationScore}(s)<\textit{TotalDivergence}(s_{BP})$}

\begin{lemma} [Main Lemma]\label{lem:main_lem}
For any equilibrium $s$, if $s_{BP}$ is a \textit{best prediction strategy} of $s$, we have $$\textit{ClassificationScore}(s)\leq \textit{TotalDivergence}(s_{BP})$$
If the equality holds, then we have $\textit{Inconsistency}(s)=0$ and $s=s_{BP}$. 
\end{lemma}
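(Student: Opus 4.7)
The approach is to reduce the inequality to a pointwise bound over pairs of agents and private signals and then establish it using the square-root triangle inequality of Hellinger divergence combined with the equilibrium/best-prediction properties. Write $\mathbf{a}_j := \theta_{-j}\mathbf{q}_{\sigma_j}$ for the best prediction, so that
$\textit{TotalDivergence}(s_{BP}) = \sum_{j,k,\sigma_j,\sigma_k} Pr(j,k)\,Pr(\sigma_j,\sigma_k)\,D^*(\mathbf{a}_j,\mathbf{a}_k),$
and
$\textit{ClassificationScore}(s) = \sum_{j,k,\sigma_j,\sigma_k} Pr(j,k)\,Pr(\sigma_j,\sigma_k)\,E_s\!\left[\mathbf{1}[\hat\sigma_j\neq\hat\sigma_k]\,D^*(\hat{\mathbf p}_j,\hat{\mathbf p}_k) - \mathbf{1}[\hat\sigma_j=\hat\sigma_k]\,\sqrt{D^*(\hat{\mathbf p}_j,\hat{\mathbf p}_k)}\right].$
I would aim to prove the pointwise bound
$E_s\!\left[\mathbf{1}[\hat\sigma_j\neq\hat\sigma_k]\,D^*(\hat{\mathbf p}_j,\hat{\mathbf p}_k) - \mathbf{1}[\hat\sigma_j=\hat\sigma_k]\,\sqrt{D^*(\hat{\mathbf p}_j,\hat{\mathbf p}_k)}\right] \;\le\; D^*(\mathbf{a}_j,\mathbf{a}_k)$
for each quadruple $(j,k,\sigma_j,\sigma_k)$, from which the lemma follows immediately by summation.

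For the pointwise step I would split on the reported-signal comparison. For a realized $(\hat{\mathbf p}_j,\hat{\mathbf p}_k)$, the square-root triangle inequality gives
$\sqrt{D^*(\mathbf{a}_j,\mathbf{a}_k)} \;\le\; \sqrt{D^*(\hat{\mathbf p}_j,\mathbf{a}_j)} + \sqrt{D^*(\hat{\mathbf p}_j,\hat{\mathbf p}_k)} + \sqrt{D^*(\hat{\mathbf p}_k,\mathbf{a}_k)}.$
On the $\hat\sigma_j = \hat\sigma_k$ branch this rearranges to an upper bound on $-\sqrt{D^*(\hat{\mathbf p}_j,\hat{\mathbf p}_k)}$ by $-\sqrt{D^*(\mathbf{a}_j,\mathbf{a}_k)}$ plus the two ``deviation'' distances $\sqrt{D^*(\hat{\mathbf p}_\bullet,\mathbf{a}_\bullet)}$. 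On the $\hat\sigma_j \neq \hat\sigma_k$ branch I would use the boundedness $D^* \le 1$ and convexity of $D^*$ to bound $D^*(\hat{\mathbf p}_j,\hat{\mathbf p}_k)$ in terms of $D^*(\mathbf{a}_j,\mathbf{a}_k)$ plus cross-terms in the deviations.

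The game-theoretic input is that $s$ is an equilibrium of the Truthful Mechanism (the classification term is agent-independent so it does not enter the first-order condition). By strict properness and linearity of $PS$ in its first argument, the prediction-score part of agent $j$'s expected payment is uniquely maximized at $\hat{\mathbf p}_j = \mathbf{a}_j$; any deviation therefore costs a strictly positive amount in the prediction score that must be recouped in the information score, which only pays off on matches with $\hat\sigma_k=\hat\sigma_j$. Quantitatively, the optimal $\hat{\mathbf p}_j(\sigma_j,\hat\sigma_j)$ is a convex combination of $\mathbf{a}_j$ and the expected peer prediction on the agreement event, so its distance to $\mathbf{a}_j$ is controlled exactly by the weight the agent places on the agreement branch. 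This is precisely the matching needed to cancel the deviation cross-terms from the triangle inequality against the negative $\sqrt{D^*}$ contributions from the inconsistency branch.

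The equality clause then follows by back-substituting. Tightness of the triangle inequality along the agreement branch forces $D^*(\hat{\mathbf p}_j,\mathbf{a}_j) = D^*(\hat{\mathbf p}_k,\mathbf{a}_k) = 0$, i.e.\ $s = s_{BP}$; and tightness of the bound $D^* \le 1$ on the disagreement branch forces $\mathbf{1}[\hat\sigma_j=\hat\sigma_k]\sqrt{D^*(\hat{\mathbf p}_j,\hat{\mathbf p}_k)} = 0$ almost surely, i.e.\ $\textit{Inconsistency}(s) = 0$. The main obstacle I expect is calibrating the disagreement-branch bound: a naive triangle-inequality expansion overshoots $D^*(\mathbf{a}_j,\mathbf{a}_k)$ by positive cross-terms, so the plan really requires pairing these excesses against the agreement-branch $-\sqrt{D^*}$ credits using the equilibrium weights, rather than bounding each branch independently.
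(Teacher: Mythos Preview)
Your plan has a genuine gap: the pointwise bound you propose,
\[
E_s\!\left[\mathbf{1}[\hat\sigma_j\neq\hat\sigma_k]\,D^*(\hat{\mathbf p}_j,\hat{\mathbf p}_k) - \mathbf{1}[\hat\sigma_j=\hat\sigma_k]\,\sqrt{D^*(\hat{\mathbf p}_j,\hat{\mathbf p}_k)}\right] \le D^*(\mathbf{a}_j,\mathbf{a}_k)
\]
for each quadruple $(j,k,\sigma_j,\sigma_k)$, is not true in general and cannot be made true with the ingredients you list. The reason is that the equilibrium constraint you want to invoke is \emph{not} local to the pair $(j,k)$: agent $j$'s best-response prediction $\hat{\mathbf p}_j(\sigma_j,\hat\sigma_j)$ is a convex combination of $\mathbf a_j$ and the predictions of \emph{all} agents $l\neq j$ who report $\hat\sigma_j$, weighted across all their private signals. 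So $\hat{\mathbf p}_j$ can be dragged far from $\mathbf a_j$ by agents other than $k$, and the ``credit'' you need to absorb that deviation lives in the Inconsistency contributions of pairs $(j,l)$ with $l\neq k$, not in the $(j,k)$ term. You acknowledge this obstacle at the end, but pairing excesses against credits ``using the equilibrium weights'' is exactly what forces the argument to be global rather than pointwise.

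A second, related omission is that you never use the parameter ratio $\beta/\alpha$. In the paper's proof this is essential: the equilibrium convex-combination identity has $\mathbf a_j$ with weight proportional to $\alpha$ and the neighbor terms with weight proportional to $\beta$, so the deviation $\hat{\mathbf p}_j-\mathbf a_j$ is $O(\beta/\alpha)$ times an Inconsistency-type quantity. The paper's route is to bound the \emph{global} quantity $\textit{TotalDivergence}(s)-\textit{TotalDivergence}(s_{BP})$ by $4m\tfrac{\beta}{\alpha}\cdot\textit{Inconsistency}(s)$, via the telescoping $D^*(\hat{\mathbf p}_j,\hat{\mathbf p}_k)-D^*(\mathbf a_j,\mathbf a_k)=[D^*(\hat{\mathbf p}_j,\hat{\mathbf p}_k)-D^*(\hat{\mathbf p}_j,\mathbf a_k)]+[D^*(\hat{\mathbf p}_j,\mathbf a_k)-D^*(\mathbf a_j,\mathbf a_k)]$, convexity of $D^*$ in each slot applied through the equilibrium linear system (their Claim~5.9), and then the $\sqrt{D^*}$ triangle inequality to convert the resulting cross-terms into genuine Inconsistency terms summed over \emph{all} neighbor pairs. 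The hypothesis $\tfrac{\beta}{\alpha}<\tfrac{1}{4m}$ then makes this smaller than $\textit{Inconsistency}(s)$, and one concludes via $\textit{ClassificationScore}(s)=\textit{Diversity}(s)-\textit{Inconsistency}(s)\le \textit{TotalDivergence}(s)-\textit{Inconsistency}(s)\le \textit{TotalDivergence}(s_{BP})$. Your equality analysis is in the right spirit, but note that the paper derives $\textit{Inconsistency}(s)=0$ first (from $\textit{Diversity}=\textit{TotalDivergence}$) and then reads $s=s_{BP}$ directly off the equilibrium linear system, rather than from tightness of a triangle inequality.
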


We defer the proof of our main lemma to Section~\ref{section:proof_main_lemma}.

\paragraph{\textbf{Theorem~\ref{main_thm} Part 2: $\mathcal{M+}(\alpha,\beta,PS(\cdot,\cdot))$ has truth-telling as a symmetric-quasi-focal equilibrium. 
 }} We use our main lemma directly to prove: \textbf{any \emph{symmetric} non-permutation equilibrium's agent welfare (\textit{ClassificationScore}) must be strictly less than truth-telling}

Notice that if all agents play a symmetric signal strategy $\theta$, then for any $j,k$, $\theta_{-j}=\theta_{-k}=\theta$. For any symmetric non-permutation equilibrium $s$, it is possible that the signal strategy of $s$ is not a permutation or it is a permutation $\theta_{\pi}$ but agents do not report $\pi \mathbf{q}_{\sigma}$ given $\sigma$ is their private signal. So we consider two cases:

\textbf{(a)} We first consider the case that the signal strategy $\theta$ of $s$ is a permutation matrix $\theta_{\pi}$, but agents do not report $\pi \mathbf{q}_{\sigma}$.

\begin{align*}
\textit{ClassificationScore}(s) < & \textit{TotalDivergence}(s_{BP})\\
=& \sum_{j,k,\sigma_j,\sigma_k} Pr(j,k)Pr(\sigma_j,\sigma_k)D^*(
\theta_{-j}\mathbf{q}_{\sigma_j},\theta_{-k}\mathbf{q}_{\sigma_k})\\
= & \sum_{j,k,\sigma_j,\sigma_k}
Pr(j,k)Pr(\sigma_j,\sigma_k)D^*(
\theta_{\pi}\mathbf{q}_{\sigma_j},\theta_{\pi}\mathbf{q}_{\sigma_k})\\
= & \sum_{j,k,\sigma_j,\sigma_k}
Pr(j,k)Pr(\sigma_j,\sigma_k)D^*(
\mathbf{q}_{\sigma_j},\mathbf{q}_{\sigma_k})\\
=& \textit{TotalDivergence}(truthtelling)\\
=& \textit{ClassificationScore}(truthtelling)
\end{align*}

The first inequality follows from our main lemma. The inequality is strict for the following reason: when the signal strategy $\theta$ of $s$ is a permutation matrix, $s_{BP}$ is a permutation strategy profile since for any $i$, agent $i$'s best prediction is $\theta_{-i}\mathbf{q}_{\sigma}=\theta \mathbf{q}_{\sigma}$. Based on our main lemma if $\textit{ClassificationScore}(s) = \textit{TotalDivergence}(s_{BP})$, we have $s=s_{BP}$ which implies that $s$ is a permutation strategy profile which is a contradiction to the fact $s$ is a non-permutation strategy profile.

The second line follows since at $s_{BP}$, each agent's reported prediction only depends on his private signal.

The last equality follows from Corollary~\ref{coro:truth}.

\textbf{(b)} We consider the case that the signal strategy $\theta$ of $s$ is not a permutation matrix. The above proof still holds except in two places: one is the inequality in the first line may not be strict; another is the equality in the fourth line should be a strict inequality:
\begin{align*}
&\sum_{j,k,\sigma_j,\sigma_k}
Pr(j,k)Pr(\sigma_j,\sigma_k)D^*(
\theta_{\pi}\mathbf{q}_{\sigma_j},\theta_{\pi}\mathbf{q}_{\sigma_k})\\
< & \sum_{j,k,\sigma_j,\sigma_k}
Pr(j,k)Pr(\sigma_j,\sigma_k)D^*(
\mathbf{q}_{\sigma_j},\mathbf{q}_{\sigma_k})
\end{align*}
The inequality must be strict since based on Corollary~\ref{im_cor}, we know that if $\theta$ is not a permutation, and $Q$ is fine-grained, then there exists $\sigma_1\neq \sigma_2$ such that $D^*(\theta \mathbf{q}_{\sigma_1},\theta \mathbf{q}_{\sigma_2})<D^*(\mathbf{q}_{\sigma_1},\mathbf{q}_{\sigma_2})$. Also based on non-zero assumption of $Q$, we have $Pr(\sigma_j=\sigma_1,\sigma_k=\sigma_2)>0$. 

So in both of the above two cases, we have $$\textit{ClassificationScore}(s)<\textit{ClassificationScore}(truthtelling)$$ if $s$ is not a permutation equilibrium \\

Things are more complicated if agents play asymmetric equilibria, and we begin to handle this case by showing Lemma~\ref{lem:N_epsilon}, Corollary~\ref{coro:ineq_cor}, and Lemma~\ref{lem:general case lem}.

\paragraph{(2) $\textit{TotalDivergence}(s_{BP})\approx \textit{TotalDivergence}(symmetrized\ s_{BP})$ when the number of agents is sufficient large.}

We symmetrize $s_{BP}$ which means we let each agent $i$ report $\bar{\theta}_n\mathbf{q}_{\sigma_i}$ given $\sigma_i$ is agent $i$'s private signal and $\bar{\theta}_n$ is the average signal strategy of $s_{BP}$ and show that the total divergence will not change much. Intuitively, this is because $\theta_{-i}$ are similar among agents when there are many agents.

\begin{lemma}\label{lem:N_epsilon}
Given any SNIFE prior $Q$, for any $\epsilon>0$, there exists $N_{\epsilon}=\frac{32*m^2}{\epsilon^2}$ such that if $n>N_{\epsilon}$, for any strategy $(\theta_1,\theta_2,...,\theta_n)$, any two agents $j,k$,
\begin{align*}
| D^*(\theta_{-j}\mathbf{q}_{\sigma_j},\theta_{-k}\mathbf{q}_{\sigma_k})
- D^*(\bar{\theta}_{n}\mathbf{q}_{\sigma_j},\bar{\theta}_{n}\mathbf{q}_{\sigma_k}) |<\epsilon
\end{align*}
\end{lemma}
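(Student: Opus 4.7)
The plan is to leverage the fact that $\theta_{-j}$ and $\bar{\theta}_n$ differ only by an average-versus-leave-one-out correction, so entrywise they are within $O(1/n)$. First I would compute
\[
\theta_{-j} - \bar{\theta}_n \;=\; \frac{n\bar{\theta}_n - \theta_j}{n-1} - \bar{\theta}_n \;=\; \frac{\bar{\theta}_n - \theta_j}{n-1},
\]
and since both $\theta_j$ and $\bar{\theta}_n$ are stochastic matrices, every entry of $\bar{\theta}_n - \theta_j$ lies in $[-1,1]$, so every entry of $\theta_{-j} - \bar{\theta}_n$ lies in $[-\tfrac{1}{n-1}, \tfrac{1}{n-1}]$. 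Applying this matrix to the probability vector $\mathbf{q}_{\sigma_j}$ (whose entries sum to $1$) yields
\[
\bigl|(\theta_{-j}\mathbf{q}_{\sigma_j})(\sigma) - (\bar{\theta}_n\mathbf{q}_{\sigma_j})(\sigma)\bigr| \;\leq\; \frac{1}{n-1} \qquad \text{for every } \sigma \in \Sigma,
\]
and analogously for the index $k$.

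Next I would convert this entrywise closeness into a Hellinger-divergence bound via the elementary inequality $(\sqrt{a}-\sqrt{b})^2 \leq |a-b|$ valid for $a,b \geq 0$. Summing the squared-root gap over the $m$ signals gives
\[
D^*\!\bigl(\theta_{-j}\mathbf{q}_{\sigma_j},\; \bar{\theta}_n\mathbf{q}_{\sigma_j}\bigr) \;\leq\; \frac{m}{n-1},
\]
and likewise $D^*(\theta_{-k}\mathbf{q}_{\sigma_k}, \bar{\theta}_n\mathbf{q}_{\sigma_k}) \leq \frac{m}{n-1}$.

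Finally I would invoke the two properties of Hellinger divergence that are listed as (1) and (2) in the preliminaries but used only lightly there: boundedness ($D^* \leq 1$, so $\sqrt{D^*} \leq 1$) and the square-root triangle inequality. Applying the latter twice through the intermediate pair $(\bar{\theta}_n\mathbf{q}_{\sigma_j},\, \theta_{-k}\mathbf{q}_{\sigma_k})$ yields
\[
\Bigl|\sqrt{D^*(\theta_{-j}\mathbf{q}_{\sigma_j},\theta_{-k}\mathbf{q}_{\sigma_k})} - \sqrt{D^*(\bar{\theta}_n\mathbf{q}_{\sigma_j},\bar{\theta}_n\mathbf{q}_{\sigma_k})}\Bigr| \;\leq\; 2\sqrt{\frac{m}{n-1}}.
\]
Multiplying this gap by the corresponding sum of the two square-root divergences, which is at most $2$ by boundedness, and using $|a^2-b^2| = |a-b|(a+b)$, gives
\[
\bigl|D^*(\theta_{-j}\mathbf{q}_{\sigma_j},\theta_{-k}\mathbf{q}_{\sigma_k}) - D^*(\bar{\theta}_n\mathbf{q}_{\sigma_j},\bar{\theta}_n\mathbf{q}_{\sigma_k})\bigr| \;\leq\; \frac{4\sqrt{m}}{\sqrt{n-1}}.
\]
Choosing $n > N_\epsilon = 32m^2/\epsilon^2$ comfortably forces the right-hand side below $\epsilon$ (the stated bound is generous; a slightly tighter choice $\Theta(m/\epsilon^2)$ would also suffice).

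There is no real obstacle here: every step is a routine combination of the stochastic structure of $\theta_{-j}$ versus $\bar{\theta}_n$, the pointwise inequality $(\sqrt{a}-\sqrt{b})^2 \leq |a-b|$, and the metric/boundedness properties of Hellinger divergence. The only place where one must be a little careful is in remembering that the square-root triangle inequality applies to $\sqrt{D^*}$ rather than $D^*$ itself, which is exactly why the final step squares up the gap and incurs the extra factor of $2$ from the bounded-divergence property.
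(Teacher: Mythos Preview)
Your proof is correct and takes a genuinely different route from the paper's.

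Both arguments begin from the same observation that $\theta_{-j}-\bar\theta_n=(\bar\theta_n-\theta_j)/(n-1)$ is entrywise $O(1/n)$. From there the paper proceeds coordinate-by-coordinate: it expands $D^*$ explicitly, applies the difference-of-squares identity to $(\sqrt{a_u}-\sqrt{b_u})^2-(\sqrt{c_u}-\sqrt{d_u})^2$, and is then left to bound terms of the form $|\sqrt{a_u}-\sqrt{c_u}|$. Because $\sqrt{\cdot}$ is not Lipschitz near zero, the paper handles this with a case split (either the term is already $\le\epsilon/(4m)$, or else $\sqrt{a_u}+\sqrt{c_u}>\epsilon/(4m)$ so one can divide through), which is where the extra factor of $m$ in $N_\epsilon=32m^2/\epsilon^2$ comes from.

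You instead bound $D^*(\theta_{-j}\mathbf{q}_{\sigma_j},\bar\theta_n\mathbf{q}_{\sigma_j})$ directly via $(\sqrt a-\sqrt b)^2\le|a-b|$, then pass to the target difference using the metric property of $\sqrt{D^*}$ plus boundedness. This is cleaner: it avoids the case analysis entirely and actually yields the sharper bound $4\sqrt{m/(n-1)}$, so $n=\Theta(m/\epsilon^2)$ would already suffice. Your closing remark that the stated $N_\epsilon$ is ``generous'' is therefore exactly right---your argument shows the lemma holds with a better dependence on $m$ than the paper's proof provides.
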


\begin{proof}[Proof of Lemma~\ref{lem:N_epsilon}]
For convenience, let $s=\sigma_j,t=\sigma_k$
\begin{align}
&|D^*(\theta_{-j}\mathbf{q}_{s},\theta_{-k}\mathbf{q}_{t})-D^*(\bar{\theta}_{n}\mathbf{q}_{s},\bar{\theta}_{n}\mathbf{q}_{t})|\\
=& |\sum_u \left(\sqrt{\theta_{-j}(u,\cdot) \mathbf{q}_s}-\sqrt{\theta_{-k}(u,\cdot) \mathbf{q}_t}\right)^2-\left(\sqrt{\bar{\theta}_n(u,\cdot) \mathbf{q}_s}-\sqrt{\bar{\theta}_n(u,\cdot) \mathbf{q}_t}\right)^2|\\ 
=& |\sum_u \left(\sqrt{\theta_{-j}(u,\cdot) \mathbf{q}_s}-\sqrt{\theta_{-k}(u,\cdot) \mathbf{q}_t}-\sqrt{\bar{\theta}_n(u,\cdot) \mathbf{q}_s}+\sqrt{\bar{\theta}_n(u,\cdot) \mathbf{q}_t}\right)* \nonumber \\
&\left(\sqrt{\theta_{-j}(u,\cdot) \mathbf{q}_s}-\sqrt{\theta_{-k}(u,\cdot) \mathbf{q}_t}+\sqrt{\bar{\theta}_n(u,\cdot) \mathbf{q}_s}-\sqrt{\bar{\theta}_n(u,\cdot) \mathbf{q}_t}\right) |\\ \label{e:n_epsilon:2}
\leq & 2 * m * \max_{u} \left(|\sqrt{\theta_{-j}(u,\cdot) \mathbf{q}_s}-\sqrt{\bar{\theta}_n(u,\cdot) \mathbf{q}_s}|+|\sqrt{\theta_{-k}(u,\cdot) \mathbf{q}_t}-\sqrt{\bar{\theta}_n(u,\cdot) \mathbf{q}_t}|\right) \\
\leq & 4 * m * \max_{u,s,j} |\sqrt{\theta_{-j}(u,\cdot) \mathbf{q}_s}-\sqrt{\bar{\theta}_n(u,\cdot) \mathbf{q}_s}|
\end{align}

The first equality follows from the definition of Helinger-divergence.

The second equality is just formula for the difference of square.

To arrive at (\ref{e:n_epsilon:2}), $\sum_u|\left(\sqrt{\theta_{-j}(u,\cdot) \mathbf{q}_s}-\sqrt{\theta_{-k}(u,\cdot) \mathbf{q}_t}+\sqrt{\bar{\theta}_n(u,\cdot) \mathbf{q}_s}-\sqrt{\bar{\theta}_n(u,\cdot) \mathbf{q}_t}\right)|\leq \sum_u 2=2m$ where the inequality follows from the fact $0<D^*<1$. 

The last equality follows since both $|\sqrt{\theta_{-j}(u,\cdot) \mathbf{q}_s}-\sqrt{\bar{\theta}_n(u,\cdot) \mathbf{q}_s}|$ and $|\sqrt{\theta_{-k}(u,\cdot) \mathbf{q}_t}-\sqrt{\bar{\theta}_n(u,\cdot) \mathbf{q}_t}|$ are less than $\max_{u,s,j} |\sqrt{\theta_{-j}(u,\cdot) \mathbf{q}_s}-\sqrt{\bar{\theta}_n(u,\cdot) \mathbf{q}_s}|$. 

Now we consider two cases for any $u,s,j$:

(1) $ |\sqrt{\theta_{-j}(u,\cdot) \mathbf{q}_s}-\sqrt{\bar{\theta}_n(u,\cdot) \mathbf{q}_s}|\leq \frac{\epsilon}{4*m}$: It is clear the result in this Lemma follows. 

(2) $ |\sqrt{\theta_{-j}(u,\cdot) \mathbf{q}_s}-\sqrt{\bar{\theta}_n(u,\cdot) \mathbf{q}_s}|> \frac{\epsilon}{4*m}$: Notice that $(n-1)\theta_{-j}=n \bar{\theta}_n-\theta_j$, then we can see $$  \theta_{-j} =\bar{\theta}_n+\frac{1}{n}(\theta_{-j}-\theta_j) $$

\begin{align}
& 4*m*|\sqrt{\theta_{-j}(u,\cdot) \mathbf{q}_s}-\sqrt{\bar{\theta}_n(u,\cdot) \mathbf{q}_s}|\\
& =  4*m*|\frac{\theta_{-j}(u,\cdot) \mathbf{q}_s-\bar{\theta}_n(u,\cdot) \mathbf{q}_s}{\sqrt{\theta_{-j}(u,\cdot) \mathbf{q}_s}+\sqrt{\bar{\theta}_n(u,\cdot) \mathbf{q}_s}}|\\
& =  4*m*\frac{\frac{1}{n}|(\theta_{-j}(u,\cdot)-\theta_j(u,\cdot))\mathbf{q}_s|}{\sqrt{\theta_{-j}(u,\cdot) \mathbf{q}_s}+\sqrt{\bar{\theta}_n(u,\cdot) \mathbf{q}_s}}\\
& <  4*m*2*\frac{4*m}{\epsilon} \frac{1}{n}<\epsilon
\end{align}
when $n>N_{\epsilon}=\frac{32*m^2}{\epsilon^2}$

The first equality follows from the formula of the difference of squares. 

The second equality follows from $  \theta_{-j} =\bar{\theta}_n+\frac{1}{n}(\theta_{-j}-\theta_j) $. 

If $ |\sqrt{\theta_{-j}(u,\cdot) \mathbf{q}_s}-\sqrt{\bar{\theta}_n(u,\cdot) \mathbf{q}_s}|> \frac{\epsilon}{4*m}$, we have $ |\sqrt{\theta_{-j}(u,\cdot) \mathbf{q}_s}+\sqrt{\bar{\theta}_n(u,\cdot) \mathbf{q}_s}|> \frac{\epsilon}{4*m}$ as well, the third line follows.

\end{proof}


\paragraph{(3)  \textbf{$\textit{ClassificationScore}(s^*)\geq \textit{ClassificationScore}(truthtelling)\\
\Rightarrow \textit{TotalDivergence}(truthtelling)\approx \textit{TotalDivergence}(symmetrized\ s^*_{BP})$ when the number of agents is sufficient large.}  }

The below corollary is derived from Lemma~\ref{lem:N_epsilon}. It will imply not only\\ $\textit{TotalDivergence}(truthtelling)\approx \textit{TotalDivergence}(symmetrized\ s^*_{BP})$ but also any equilibrium cannot have agent-welfare (\textit{ClassificationScore}) that is much greater than truth-telling when the number of agents is sufficient large. 

\begin{corollary} \label{coro:ineq_cor}
Given any SNIFE prior $Q$, for any $\epsilon>0$, if $n>N_{\epsilon}=\frac{128*m^2}{\epsilon^2}$, for any equilibrium $s^*$ that has greater \textit{ClassificationScore} than the truth-telling \textit{ClassificationScore} minus $\epsilon/2$:
\begin{align*}
&\textit{Classification}(truthtelling)\\
&< \textit{Classification}(s^*)+\frac{\epsilon}{2}\\
&<\textit{TotalDivergence}(symmetrized\ s^*_{BP})+\epsilon\\
&\leq  \textit{Classification}(truthtelling)+\epsilon
\end{align*}
\end{corollary}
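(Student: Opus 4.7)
The plan is to derive the chain of three inequalities one at a time, noting that the first is immediate from the hypothesis while the second and third are applications of previously established tools.  Specifically, the first inequality $\textit{ClassificationScore}(\text{truthtelling}) < \textit{ClassificationScore}(s^*) + \epsilon/2$ is just the hypothesis on $s^*$ rearranged, so nothing needs to be proved.

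For the middle inequality $\textit{ClassificationScore}(s^*) + \epsilon/2 < \textit{TotalDivergence}(\text{symmetrized } s^*_{BP}) + \epsilon$, I would first invoke the Main Lemma (Lemma~\ref{lem:main_lem}) to get
\[
\textit{ClassificationScore}(s^*) \;\leq\; \textit{TotalDivergence}(s^*_{BP}).
\]
Then I would apply Lemma~\ref{lem:N_epsilon} with parameter $\epsilon/2$, so that the lemma's threshold $\frac{32 m^2}{(\epsilon/2)^2}$ becomes exactly $\frac{128 m^2}{\epsilon^2} = N_\epsilon$. That lemma gives, for every choice of $j,k,\sigma_j,\sigma_k$, that $|D^*(\theta_{-j}\mathbf{q}_{\sigma_j},\theta_{-k}\mathbf{q}_{\sigma_k}) - D^*(\bar\theta_n\mathbf{q}_{\sigma_j},\bar\theta_n\mathbf{q}_{\sigma_k})| < \epsilon/2$. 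Since at $s^*_{BP}$ each agent deterministically reports $\theta_{-i}\mathbf{q}_{\sigma_i}$ and at the symmetrized version each reports $\bar\theta_n\mathbf{q}_{\sigma_i}$, averaging these pointwise bounds against the distribution $Pr(j,k)Pr(\sigma_j,\sigma_k)$ (which sums to $1$) yields $\textit{TotalDivergence}(s^*_{BP}) \leq \textit{TotalDivergence}(\text{symmetrized }s^*_{BP}) + \epsilon/2$. Chaining the two bounds and adding $\epsilon/2$ to both sides gives exactly the desired inequality.

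For the last inequality $\textit{TotalDivergence}(\text{symmetrized } s^*_{BP}) \leq \textit{ClassificationScore}(\text{truthtelling})$, I would use Corollary~\ref{coro:truth} to replace the right side with $\textit{TotalDivergence}(\text{truthtelling})$ and then apply information monotonicity of $D^*$ (Lemma~\ref{lem:im}) with the transition matrix $\bar\theta_n$. This gives $D^*(\bar\theta_n\mathbf{q}_{\sigma_j},\bar\theta_n\mathbf{q}_{\sigma_k}) \leq D^*(\mathbf{q}_{\sigma_j},\mathbf{q}_{\sigma_k})$ for every pair of signals, and averaging against $Pr(j,k)Pr(\sigma_j,\sigma_k)$ yields the required inequality between total divergences.

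There is really no main obstacle here: everything is bookkeeping that stitches together the Main Lemma, Lemma~\ref{lem:N_epsilon} (with the right choice of $\epsilon/2$ to get the stated $N_\epsilon$), information monotonicity, and Corollary~\ref{coro:truth}. The only subtlety worth flagging in the write-up is that Lemma~\ref{lem:N_epsilon} is stated pointwise in $j,k,\sigma_j,\sigma_k$, so one must explicitly note that both $\textit{TotalDivergence}(s^*_{BP})$ and $\textit{TotalDivergence}(\text{symmetrized } s^*_{BP})$ are expectations of $D^*$ under the same probability measure $Pr(j,k)Pr(\sigma_j,\sigma_k)$, so the pointwise bound transfers to the averages without any loss.
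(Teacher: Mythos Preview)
Your proposal is correct and follows essentially the same route as the paper: the chain is hypothesis, then Main Lemma combined with Lemma~\ref{lem:N_epsilon} applied at level $\epsilon/2$ (which is precisely why the threshold is $N_\epsilon = 32m^2/(\epsilon/2)^2 = 128m^2/\epsilon^2$), and finally information monotonicity together with Corollary~\ref{coro:truth}. Your explicit remark that the pointwise bound from Lemma~\ref{lem:N_epsilon} transfers to the expectation because both total divergences are averages under the same measure $Pr(j,k)Pr(\sigma_j,\sigma_k)$ is exactly the justification the paper leaves implicit.
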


\begin{proof}[Proof for Corollary~\ref{coro:ineq_cor}]

\begin{align*}
\textit{TotalDivergence}(truthtelling) & =\textit{ClassificationScore}(truthtelling) \\ & \leq\textit{ClassificationScore}(s^*)+\frac{\epsilon}{2}\\
& \leq \textit{TotalDivergence}(s^*_{BP})+\frac{\epsilon}{2} \\
& < \textit{TotalDivergence}(symmetrized\ s^*_{BP})+\epsilon \\
& \leq \textit{ClassificationScore}(truthtelling)+\epsilon 
\end{align*}

The first equality follows from Corollary~\ref{coro:truth}. 

The second inequality follows from the condition. 

The third inequality follows from the main lemma. 

The fourth inequality follows from Lemma~\ref{lem:N_epsilon}. 

The last inequality follows from information monotonicity since 
\begin{align*}
&\textit{ClassificationScore}(truthtelling)-\textit{TotalDivergence}(symmetrized\ s^*_{BP})\\
&= \sum_{\substack{j\\k\neq j}}\sum_{\substack{\sigma_j,\sigma_k}} Pr(j,k)Pr(\sigma_j,\sigma_k) D^*( \mathbf{q}_{\sigma_j}, \mathbf{q}_{\sigma_k})-\sum_{j,k,\sigma_j,\sigma_k} Pr(j,k)Pr(\sigma_j,\sigma_k) D^*(\bar{\theta}_n \mathbf{q}_{\sigma_j},\bar{\theta}_n \mathbf{q}_{\sigma_k})\\
&= \sum_{j,k,\sigma_j,\sigma_k} Pr(j,k)Pr(\sigma_j,\sigma_k) (D^*( \mathbf{q}_{\sigma_j}, \mathbf{q}_{\sigma_k})-D^*(\bar{\theta}_n \mathbf{q}_{\sigma_j},\bar{\theta}_n \mathbf{q}_{\sigma_k}))\geq 0
\end{align*}

The second equality follows since if $j=k$, $D^*( \mathbf{q}_{\sigma_j}, \mathbf{q}_{\sigma_k})=0$

\end{proof}

This corollary induces the following result:

\paragraph{No equilibrium can have agent-welfare that is much greater than truth-telling}

Let $\frac{\epsilon}{2}=\gamma_2$, we need $n\geq \frac{128*m^2}{\epsilon^2}$ to obtain $\gamma_2$ tolerance based on Corollary~\ref{coro:ineq_cor}. By manipulations, we obtain our result.

\paragraph{(3) $\Rightarrow$ (4) $\mathbf{\bar{\theta}_n\approx \pi}$}

We already know that if $s^*$ obtains higher \textit{ClassificationScore} than truth-telling, the classification score of truth-telling is close to that of the symmetrized $s^*_{BP}$. Now we will prove that $\bar{\theta}_n$ is close to a permutation where $\bar{\theta}_n$ is the average signal strategy of $s^*$. We prove it by contradiction. We first assume that $\bar{\theta}_n$ is far from a permutation equilibrium, that is, recalling the definition of $\tau$-close, we assume there exists a row of $\bar{\theta}_n$ that has at least two large numbers. Formally, we assume that given any $\tau$, there exists $u',v',w'\in \Sigma$ such that $\bar{\theta}_n(u',v')>\tau,\bar{\theta}_n(u',w')>\tau $. We will prove that it is impossible when $n>N(\tau,Q)$ since when $\bar{\theta}_n$ is far from a permutation, the total divergence of symmetrized $s_{BP}$ is far from the classification score of truth-telling which contradicts Corollary~\ref{coro:ineq_cor}.  

The below lemma tells us if a symmetric strategy $s_{\theta}$, where agents play $\theta$ as their signal strategy and best prediction as their reported prediction, is far from a permutation, then the total divergence of $s_{\theta}$ and total divergence of permutation are also far. Once we proved the below lemma, we can replace $\theta$ by $\bar{\theta}_n$ to finish our proof for Theorem~\ref{main_thm} part 4, that is, truth-telling is robust-approximate-quasi-focal. 

\begin{lemma} \label{lem:general case lem}
Given any fixed $\tau$, for any signal strategy $\theta$, if there exists $u',v',w'\in \Sigma$ such that $\theta(u',v')>\tau,\theta(u',w')>\tau $, then consider the case $s_{\theta}$ that all agents play $\theta$ as their signal strategy and report their best prediction, we have 
$$ \textit{TotalDivergence}(truthtelling)- \textit{TotalDivergence}(s_{\theta})\geq c_2(\tau c_1)^3 c_4 c_3$$
\end{lemma}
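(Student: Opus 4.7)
The plan is to reduce the bound to a single summand of the double sum defining $\textit{TotalDivergence}$, and then quantitatively strengthen the Jensen step inside the proof of information monotonicity at the specific ``bad'' row $u'$. Starting from
\begin{align*}
\textit{TotalDivergence}(\tru) - \textit{TotalDivergence}(s_\theta) = \sum_{j,k,\sigma_j,\sigma_k} Pr(j,k)Pr(\sigma_j,\sigma_k)\bigl[D^*(\mathbf{q}_{\sigma_j},\mathbf{q}_{\sigma_k}) - D^*(\theta\mathbf{q}_{\sigma_j},\theta\mathbf{q}_{\sigma_k})\bigr],
\end{align*}
every summand is non-negative by information monotonicity (Lemma~\ref{lem:im}), so it suffices to lower-bound one carefully chosen summand.

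To pick that summand, I use the definition of $c_3$ with the given indices $v',w'$: there exist $s\neq t\in\Sigma$ with $\bigl(\frac{q(v'|s)}{q(v'|t)}-\frac{q(w'|s)}{q(w'|t)}\bigr)^2\ge c_3$. Take $\sigma_j=t, \sigma_k=s$. Since $\sum_{j,k}Pr(j,k)=1$ and $Pr(\sigma_j=t,\sigma_k=s)\ge c_2$, the full sum is at least $c_2\bigl[D^*(\mathbf{q}_t,\mathbf{q}_s)-D^*(\theta\mathbf{q}_t,\theta\mathbf{q}_s)\bigr]$, and the task reduces to bounding this single pairwise divergence gap from below.

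For the pairwise gap, I would re-enter the computation that proved Lemma~\ref{lem:im} but, instead of plain convexity at row $u'$, apply a quantitative strong-convexity version of Jensen. With $\mathbf{p}=\mathbf{q}_t$, $\mathbf{q}=\mathbf{q}_s$, $\lambda_\sigma=\tfrac{\theta(u',\sigma)\mathbf{p}(\sigma)}{(\theta\mathbf{p})(u')}$, and $x_\sigma=\tfrac{\mathbf{q}(\sigma)}{\mathbf{p}(\sigma)}$, the row-$u'$ contribution to $D^*(\mathbf{p},\mathbf{q})-D^*(\theta\mathbf{p},\theta\mathbf{q})$ equals $(\theta\mathbf{p})(u')\bigl[\sum_\sigma \lambda_\sigma f(x_\sigma)-f(\sum_\sigma \lambda_\sigma x_\sigma)\bigr]$ for $f(x)=(\sqrt{x}-1)^2$. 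Since $f''\ge c_4$ on the compact interval containing every realized ratio $q(\sigma|s)/q(\sigma|t)$ (this is exactly the role of $c_4$), strong convexity gives that this is at least $\tfrac{c_4}{2}(\theta\mathbf{p})(u')\,\mathrm{Var}_\lambda(X)\ge \tfrac{c_4}{2}(\theta\mathbf{p})(u')\,\lambda_{v'}\lambda_{w'}(x_{v'}-x_{w'})^2$, where I keep only the pairwise term of the variance. Plugging in $\theta(u',v'),\theta(u',w')>\tau$, $\mathbf{p}(v'),\mathbf{p}(w')\ge c_1$, $(\theta\mathbf{p})(u')\le 1$, and $(x_{v'}-x_{w'})^2\ge c_3$, and then multiplying by the $c_2$ factor from the outer sum, yields a bound of the claimed form $c_2(\tau c_1)^{\Omega(1)}c_3 c_4$.

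The main obstacle is the strong-convexity step: the original proof of Lemma~\ref{lem:im} only uses plain Jensen, so I need to supply (or invoke) the quantitative inequality $\sum_i\lambda_i f(x_i)-f(\sum_i\lambda_i x_i)\ge \tfrac{c_4}{2}\sum_i\lambda_i(x_i-\bar x)^2$ and verify that the ratios $q(\sigma|s)/q(\sigma|t)$ lie in a range on which $f''$ is bounded below by $c_4$ (this is immediate from the SNIFE non-zero assumption together with the definition of $c_4$). A secondary bookkeeping issue is keeping the constants explicit while collapsing the full variance $\sum_\sigma \lambda_\sigma(x_\sigma-\bar x)^2$ down to the contribution of only the two indices $v',w'$; this is handled by the two-point lower bound $\mathrm{Var}_\lambda(X)\ge \lambda_{v'}\lambda_{w'}(x_{v'}-x_{w'})^2$, which follows from the identity $\mathrm{Var}_\lambda(X)=\tfrac12\sum_{i,j}\lambda_i\lambda_j(x_i-x_j)^2$.
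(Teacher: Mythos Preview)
Your approach is essentially the paper's: reduce to a single pair of signals realizing the $c_3$ maximum for the indices $v',w'$, re-enter the information-monotonicity computation at the ``bad'' row $u'$, and replace the plain Jensen step by a quantitative one. The only cosmetic difference is the form of the quantitative inequality: the paper isolates the two large weights first and then applies a two-point strong-convexity bound (their Claim~\ref{claim:convex}, which gives $\tfrac{d_2(g)}{2}\tfrac{\lambda_1\lambda_2}{\lambda_1+\lambda_2}(x_1-x_2)^2$), whereas you use the variance form $\tfrac{c_4}{2}\mathrm{Var}_\lambda(X)$ together with $\mathrm{Var}_\lambda(X)\ge\lambda_{v'}\lambda_{w'}(x_{v'}-x_{w'})^2$. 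Tracking your constants gives $\tfrac{1}{2}c_2 c_3 c_4(\tau c_1)^2$, which is at least the stated $c_2 c_3 c_4(\tau c_1)^3$ since $\tau c_1\le c_1\le 1/m\le 1/2$; the paper picks up the extra factor $\tau c_1$ by also lower-bounding $(\theta\mathbf p)(u')\ge 2\tau c_1$ rather than only upper-bounding it by $1$.
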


\begin{proof}[Proof of Lemma~\ref{lem:general case lem}]
We first write \textit{TotalDivergence} in an explicit form:

\begin{align}
\sum_{j,k,\sigma_j,\sigma_k} Pr(j,k)Pr(\sigma_j,\sigma_k) D^*( \mathbf{q}_{\sigma_j}, \mathbf{q}_{\sigma_k})-\sum_{j,k,\sigma_j,\sigma_k} Pr(j,k)Pr(\sigma_j,\sigma_k) D^*(\theta \mathbf{q}_{\sigma_j},\theta \mathbf{q}_{\sigma_k})
\end{align}

Actually, We will show for any $j,k$,  \begin{align}\label{540}
\sum_{\sigma_j,\sigma_k} Pr(\sigma_j,\sigma_k) D^*( \mathbf{q}_{\sigma_j}, \mathbf{q}_{\sigma_k})-\sum_{\sigma_j,\sigma_k} Pr(\sigma_j,\sigma_k) D^*(\theta \mathbf{q}_{\sigma_j},\theta \mathbf{q}_{\sigma_k})
\end{align}
is greater than $c_2 (\tau c_1)^3 c_4 c_3$, which implies the result. 

We want give a lower bound for (\ref{540}). In order to obtain this lower bound, we are going to transform this value to $\sum_{u}\lambda_u g(x_u)- g(\sum_u \lambda_u x_u)$ where $g(\cdot)$ is a convex function. To obtain a lower bound of $\sum_{u}\lambda_u g(x_u)- g(\sum_u \lambda_u x_u)$, we have an observation:

For any convex function $g(\cdot)$, $g(\sum_u \lambda_u x_u)$ and $\sum_{u}\lambda_u g(x_u)$ are ``very different'' if there are two large coefficients $\lambda_1$ and $\lambda_2$ with the corresponding $x_1$ and $x_2$ that are ``very different''.  Now we introduce a claim to show this observation. 

\begin{claim}\label{claim:convex}
$$ \sum_{u}\lambda_u g(x_u)- g(\sum_u \lambda_u x_u)\geq \frac{d_2(g)}{2} \frac{\lambda_1 \lambda_2}{\lambda_1 + \lambda_2}||x_1-x_2||^2
$$
where $d_2(g)$ is a lower bound of $g''(\cdot)$
\end{claim}

\begin{proof}
\begin{align*}
 g\left(\sum_u \lambda_u x_u\right)\leq (\lambda_1+\lambda_2)g\left(\frac{\lambda_1 x_1+\lambda_2 x_2}{\lambda_1+\lambda_2}\right) + \sum_{u>2}\lambda_u g(x_u)\leq \sum_{u}\lambda_u g(x_u)
\end{align*}

So 
\begin{align*}
& \sum_{u}\lambda_u g(x_u)- g(\sum_u \lambda_u x_u)\\
&\geq  \sum_{u}\lambda_u g(x_u)- (\lambda_1+\lambda_2)g\left(\frac{\lambda_1 x_1+\lambda_2 x_2}{\lambda_1+\lambda_2}\right) - \sum_{u>2}\lambda_u g(x_u)\\
&=  \lambda_1 g(x_1)+\lambda_2 g(x_2)-(\lambda_1+\lambda_2)g(\frac{\lambda_1 x_1+\lambda_2 x_2}{\lambda_1+\lambda_2})\\
&=  (\lambda_1+\lambda_2) (\frac{ \lambda_1 g(x_1)+\lambda_2 g(x_2)}{\lambda_1+\lambda_2}-g(\frac{\lambda_1 x_1+\lambda_2 x_2}{\lambda_1+\lambda_2}))\\
& \geq (\lambda_1+\lambda_2) \frac{d_2(g)}{2}\frac{\lambda_1 \lambda_2}{(\lambda_1 + \lambda_2)^2}||x_1-x_2||^2\\
&= \frac{d_2(g)}{2} \frac{\lambda_1 \lambda_2}{\lambda_1 + \lambda_2}||x_1-x_2||^2\\
\end{align*}
where $d_2(g)$ is the lower bound of $g''(\cdot)$

The first inequality follows if we rewrite $\sum_u\lambda_u x_u$ as $(\lambda_1+\lambda_2)\frac{\lambda_1 x_1+\lambda_2 x_2}{\lambda_1+\lambda_2}+\sum_{u>2}\lambda_u x_u$ and apply convexity.

Then we do several manipulations including taking $\lambda_1+\lambda_2$ outside.  For continuous convex function $g(\cdot)$, we have $tg(x)+(1-t)g(y)-g(t x+(1-t) y)\geq \frac{1}{2}d_2(g)t(1-t)||x-y||^2$ according to \cite{nesterov2013introductory}, then we replace $t$ by $\frac{\lambda_1}{\lambda_1+\lambda_2}$ and set $x=x_1,y=x_2$ and obtain the final result. 

\end{proof}

We can think of $\theta(u',v')$ and $\theta(u',w')$ as the two large coefficients (actually they are part of the coefficients). Then we need to find two ``very different'' entries that corresponding to those large coefficients. We pick two specific signals $s',t'\in \Sigma$ such that $\mathbf{q}_{s'}$ and $\mathbf{q}_{t'}$ are ``very different'' in position $v'$ and $w'$. The reason we do this is that when we compute $\theta \mathbf{q}$, $\theta(u',v')$ and $\theta(u',w')$ are the two large entries which correspond to the positions $v'$ and $w'$ in $\mathbf{q}$. Formally, we pick $s',t'\in \Sigma$ such that 
$$\left\Vert \frac{q(v'|s')}{q(v'|t')}-   \frac{q(w'|s')}{q(w'|t')}\right\Vert=\max_{s,t} \left\Vert\frac{q(v'|s)}{q(v'|t)}-   \frac{q(w'|s)}{q(w'|t)}\right\Vert $$

Once we have chosen the two specific signals, since $Pr(s',t')(D^*( \mathbf{q}_{s'}, \mathbf{q}_{t'})-D^*(\theta \mathbf{q}_{s'},\theta \mathbf{q}_{t'}))$ is less than (\ref{540}) based on the fact $D^*( \mathbf{q}_s, \mathbf{q}_t)-D^*(\theta \mathbf{q}_s,\theta \mathbf{q}_t)\geq 0$ for $s,t\neq s',t'$, we will give a lower bound of $Pr(s',t')(D^*( \mathbf{q}_{s'}, \mathbf{q}_{t'})-D^*(\theta \mathbf{q}_{s'},\theta \mathbf{q}_{t'}))$ which is also a lower bound of (\ref{540}). 

Let $f(x)=(\sqrt{x}-1)^2$. For convenience, we will write the dot product of two vectors $\sum_v a(v)b(v)$ as $a(\cdot)b(\cdot)$. Now we give a explicit form of $D^*$:

\begin{align}
& Pr(s',t')(D^*( \mathbf{q}_{s'}, \mathbf{q}_{t'})-D^*(\theta \mathbf{q}_{s'},\theta \mathbf{q}_{t'}))\\ \label{542}
& =  Pr(s',t')\left(\sum_v q(v|s')f\left(\frac{q(v|t')}{q(v|s')}\right)-\sum_u\theta(u,\cdot)q(\cdot|s') f\left(\frac{1}{\theta(u,\cdot)q(\cdot|s')}\theta(u,\cdot)q(\cdot|t')\right)\right) 
\end{align}

We take $\sum_u\theta(u,\cdot)q(\cdot|s')$ out and note that $\sum_u\theta(u,v)=1$, so $\sum_u \theta(u,\cdot)q(\cdot|s') \frac{1}{\theta(u,\cdot)q(\cdot|s')}\theta(u,v)=1$, then we obtain (\ref{543}) from (\ref{542}). 
\begin{align}
(\ref{542})= & Pr(s',t') \sum_u \theta(u,\cdot)q(\cdot|s')* \nonumber \\
& \left(\frac{1}{\theta(u,\cdot)q(\cdot|s')}\sum_v\theta(u,v)q(v|s')f\left(\frac{q(v|t')}{q(v|s')}\right) -f\left(\frac{1}{\theta(u,\cdot)q(\cdot|s')}\sum_v\theta(u,v)q(v|s')\frac{q(v|t')}{q(v|s')}\right)\right) \label{543}
\end{align}

Then we pick the special $u'$ to obtain (\ref{544}). For the part $\sum_{u\neq u'}$, since $f(\cdot)$ is a convex function, we have 
$$\frac{1}{\theta(u,\cdot)q(\cdot|s')}\sum_v\theta(u,v)q(v|s')f(\frac{q(v|t')}{q(v|s')}) \geq f\left(\frac{1}{\theta(u,\cdot)q(\cdot|s')}\sum_v\theta(u,v)q(v|s')\frac{q(v|t')}{q(v|s')}\right)$$ 
so (\ref{543}) is greater than (\ref{544}). 

\begin{align}
(\ref{543})\geq & Pr(s',t') \theta(u',\cdot)q(\cdot|s')* \nonumber \\ 
&\left(\frac{1}{\theta(u',\cdot)q(\cdot|s')}\sum_v\theta(u',v)q(v|s')f\left(\frac{q(v|t')}{q(v|s')}\right) -f\left(\frac{1}{\theta(u',\cdot)q(\cdot|s')}\sum_v\theta(u',v)q(v|s')\frac{q(v|t')}{q(v|s')}\right)\right) \label{544}
\end{align}

Note that $\theta(u',v')$ and $\theta(u',w')$ are large, so in the convex function $f(\cdot)$, there are two large coefficients $\frac{1}{\theta(u',\cdot)q(\cdot|s')}\theta(u',v')q(v'|s')$ and $\frac{1}{\theta(u',\cdot)q(\cdot|s')}\theta(u',w')q(w'|s')$ which correspond to $\frac{q(v'|t')}{q(v'|s')}$ and $\frac{q(w'|t')}{q(w'|s')}$. Then based on our choice for $s',t'$ and Claim~\ref{claim:convex}, we have 
\begin{align}
(\ref{544}) \geq & Pr(s',t') \theta(u',\cdot)q(\cdot|s') \frac{c_4}{2} \left(
\frac{ (\theta(u',v')q(v'|s'))*(\theta(v',w')q(w'|s'))}{ \theta(u',v')q(v'|s')+\theta(v',w')q(w'|s')}
\left\Vert\frac{q(v'|t')}{q(v'|s')}-\frac{q(w'|t')}{q(w'|s')}\right\Vert^2 \right)\\  \label{545}
\geq & c_2 (\tau c_1)^3 c_4 c_3
\end{align}

The last inequality follows since $Pr(s',t')\geq c_2$, both $\theta(u',v')q(v'|s')$ and $\theta(v',w')q(w'|s')$ are greater than $\tau c_1$.  Also note that:

 $$\theta(u',\cdot)q(\cdot|s')\geq \theta(u',v')q(v'|s')+\theta(v',w')q(w'|s')\geq 2 \tau c_1$$ and
 
 $$\theta(u',v')q(v'|s')+\theta(v',w')q(w'|s')\leq 1.$$

\end{proof}

\paragraph{Theorem~\ref{main_thm} Part 3: $\mathcal{M+}(\alpha,\beta,PS(\cdot,\cdot))$ has truth-telling as a \emph{($\tau_1,\gamma_1$)-robust-symmetric-quasi-focal}:}

\paragraph{Any \emph{symmetric} equilibrium that has agent-welfare close to truth-telling must be close to a permutation equilibrium: }

We have already proved that no symmetric equilibrium pays more than truth-telling. For the symmetric equilibrium $s^*$ such that $\textit{ClassificationScore}(s^*)>\textit{ClassificationScore}(truthtelling)-\gamma_1$, we have

\begin{align*}
&\textit{TotalDivergence}(truthtelling)=\textit{ClassificationScore}(truthtelling)\leq\textit{ClassificationScore}(s^*)+\gamma_1\\
\leq&\textit{TotalDivergence}(s^*_{BP})+\gamma_1
\leq \textit{ClassificationScore}(truthtelling)+\gamma_1 
\end{align*}

Let $\gamma_1=(\tau_1 c_1)^3 c_2 c_3 c_4$, then $s^*$ is $\tau_1$ close to a permutation equilibrium or there will be a contradiction based on Lemma~\ref{lem:general case lem}. By manipulations, we will obtain our result. 

\paragraph{Theorem~\ref{main_thm} Part 4: $\mathcal{M+}(\alpha,\beta,PS(\cdot,\cdot))$ has truth-telling as a \emph{($\tau_2,\gamma_2$)-robust-approximate-quasi-focal}:}

\paragraph{If the number of agents is sufficient large, any equilibrium that has agent-welfare close to truth-telling must be close to permutation equilibrium:}

Let $\epsilon=(\tau_2 c_1)^3 c_2 c_3 c_4$, if $n> \frac{32*m^2}{(\epsilon/2)^2}$, we have already proved that $\textit{TotalDivergence}(truthtelling)-\textit{TotalDivergence}(symmetrized\ s^*_{BP})<\epsilon$ based on Corollary~\ref{coro:ineq_cor}. If $s^*$ is not $\tau_2$ close to a permutation equilibrium, we will have 
$$ \textit{TotalDivergence}(truthtelling)-\textit{TotalDivergence}(symmetrized\ s^*_{BP}) > (\tau_2 c_1)^3 c_2 c_3 c_4 =\epsilon $$
which is a contradiction based on Lemma~\ref{lem:general case lem}. By manipulations, we obtain our result.

\subsection{Proof for Main Lemma}\label{section:proof_main_lemma}
In this section, we will pove the main lemma: the classification score of non-permutation equilibrium $s$ is less than the total divergence of the report profiles when agents report their best predictions given they still use the signal strategy of $s$. We first show the inequality and then show that if the equality holds, then $s$ is consistent and $s=s_{BP}$.

In order to show the inequality, we first show 
$$ \textit{TotalDivergence}(s)-\textit{TotalDivergence}(s_{BP})\leq \textit{Inconsistency}(s) $$
once we show this, since we have $\textit{ClassificationScore}= \textit{Diversity}-\textit{Inconsistency}$ and $\textit{Diversity}\leq \textit{TotalDivergence}$, our main lemma $ \textit{ClassificationScore}(s) \leq \textit{TotalDivergence}(s_{BP}) $  will follow since
\begin{align}
\textit{ClassificationScore}(s)=&\textit{Diversity}(s)-\textit{Inconsistency}(s)\nonumber \\ \label{eq:main:lemma}
\leq &\textit{TotalDivergence}(s)-\textit{Inconsistency}(s)\leq \textit{TotalDivergence}(s_{BP})
\end{align}


To prove $ \textit{TotalDivergence}(s)-\textit{TotalDivergence}(s_{BP})\leq \textit{Inconsistency}(s) $, we will write it in a explicit form: 


\begin{align} \label{eq:td_td}
& \textit{TotalDivergence}(s)-\textit{TotalDivergence}(s_{BP})\\
= & \sum_{j,k,\sigma_j,\sigma_k} Pr(j,k)Pr(\sigma_j,\sigma_k) \int_{\hat{j},\hat{k}} Pr(\hat{j},\hat{k}) (D^*(\mathbf{\hat{p}}_j,\mathbf{\hat{p}}_k)-D^*(\theta_{-j}\mathbf{q}_{\sigma_j},\theta_{-k}\mathbf{q}_{\sigma_k}))
\end{align}

It is difficult to compare $D^*(\mathbf{\hat{p}}_j,\mathbf{\hat{p}}_k)$ and $D^*(\theta_{-j}\mathbf{q}_{\sigma_j},\theta_{-k}\mathbf{q}_{\sigma_k})$ directly. To deal with this problem, we introduce a new value $D^*(\mathbf{\hat{p}}_j,\theta_{-k}\mathbf{q}_{\sigma_k})$ and write (\ref{eq:td_td}) as
\begin{align}
& \sum_{j,k,\sigma_j,\sigma_k} Pr(j,k)Pr(\sigma_j,\sigma_k)* \nonumber \\  & \int_{\hat{j},\hat{k}} Pr(\hat{j},\hat{k}) \left(D^*(\mathbf{\hat{p}}_j,\mathbf{\hat{p}}_k)-D^*(\mathbf{\hat{p}}_j,\theta_{-k}\mathbf{q}_{\sigma_k})+D^*(\mathbf{\hat{p}}_j,\theta_{-k}\mathbf{q}_{\sigma_k})-D^*(\theta_{-j}\mathbf{q}_{\sigma_j},\theta_{-k}\mathbf{q}_{\sigma_k})\right)
\end{align}

We will first give the analysis for $D^*(\mathbf{\hat{p}}_j,\mathbf{\hat{p}}_k)-D^*(\mathbf{\hat{p}}_j,\theta_{-k}\mathbf{q}_{\sigma_k})$, then we will see $D^*(\mathbf{\hat{p}}_j,\theta_{-k}\mathbf{q}_{\sigma_k})-D^*(\theta_{-j}\mathbf{q}_{\sigma_j},\theta_{-k}\mathbf{q}_{\sigma_k})$ is similar. 

Remember that both $D^*(a,\cdot)$ and $D^*(\cdot,b)$ are convex functions. So $D^*(\mathbf{\hat{p}}_j,\mathbf{\hat{p}}_k)-D^*(\mathbf{\hat{p}}_j,\theta_{-k}\mathbf{q}_{\sigma_k})$ can be seen as $g(\mathbf{\hat{p}}_k)-g(\theta_{-k}\mathbf{q}_{\sigma_k})$ where $g(\cdot)$ is convex function $D^*(\mathbf{\hat{p}}_j,\cdot)$. 

Recall that 

\begin{align*}
\textit{Inconsistency}= \sum_{\substack{j\\k\neq j}}\sum_{\substack{\sigma_j,\sigma_k}} Pr(j,k)Pr(\sigma_j,\sigma_k) \int_{\hat{j},\hat{k}} Pr(\hat{j},\hat{k})\delta(\hat{\sigma}_j = \hat{\sigma}_k) \sqrt{D^*(\mathbf{\hat{p}}_j,\mathbf{\hat{p}}_k)}
\end{align*} 

We hope we can obtain a upper bound for $g(\mathbf{\hat{p}}_k)-g(\theta_{-k}\mathbf{q}_{\sigma_k})$ that relates to agent $k$'s neighbors' best response predictions. Here agent $k$'s \emph{neighbors} mean the agents who report the same signal with agent $k$ and \emph{best response prediction} means the reported prediction at equilibrium.

Now we begin to analyze the relationship between $\mathbf{\hat{p}}_k$ and $\theta_{-k}\mathbf{q}_{\sigma_k}$. Recall that each agent's payment depends on his prediction score and information score. $\theta_{-k}\mathbf{q}_{\sigma_k}$ maximizes the prediction score while $\mathbf{\hat{p}}_k$ maximizes the payment. The information score depends on agent $k$'s neighbors' reported predictions $\{\hat{\mathbf{p}}_l|l\neq k\}$. So we can see $\mathbf{\hat{p}}_k$ is related to both his best prediction $\theta_{-k}\mathbf{q}_{\sigma_k}$ and his neighbors' reported predictions $\{\hat{\mathbf{p}}_l|l\neq k\}$. Actually we will show that $\hat{\mathbf{p}}_k$ can be computed as a linear combination of $\theta_{-k}\mathbf{q}_{\sigma_k}$ and $\{\hat{\mathbf{p}}_l|l\neq k\}$, which is based on the fact that every proper scoring rule is linear for the first entry (we will discuss the detail in the below proof). Once we have this result, we can construct a linear system about agents' reported predictions $\{\hat{\mathbf{p}}_i|i\}$ and their best predictions. This linear system helps us obtain a upper bound for $g(\mathbf{\hat{p}}_k)-g(\theta_{-k}\mathbf{q}_{\sigma_k})$ which upper-bounds the distance between agent $k$'s best response prediction and his neighbors' best response predictions. 

\paragraph{Equilibrium Analysis}
We will analyze the equilibrium in our \emph{Truthful Mechanism} which is also the equilibrium in our \emph{Disagreement Mechanism}. We first show, in Claim~\ref{claim:actual prediction}, that at equilibrium, an agent's reported prediction only depends on his private signal and reported signal.  Then we use this property to construct a linear system and via this linear system, we obtain a upper bound for $g(\mathbf{\hat{p}}_k)-g(\theta_{-k}\mathbf{q}_{\sigma_k})$ in Claim~\ref{claim:equilibrium}. 

\begin{claim}\label{claim:actual prediction}
At any equilibrium $s=(s_1,...,s_n)$, for each agent $i$, fix $s_{-i}$, agent $i$'s private signal $\sigma_i\in \Sigma$ and reported signal $\hat{\sigma}_i\in \Sigma$, then there exists a unique prediction which is agent $i$'s best response. 

\end{claim}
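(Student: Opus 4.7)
The plan is to show that, with $s_{-i}$, $\sigma_i$, and $\hat{\sigma}_i$ fixed, agent $i$'s expected payment, viewed as a function of $\hat{\mathbf{p}}_i$ alone, collapses to $C\cdot PS(\bar{\mathbf{p}}, \hat{\mathbf{p}}_i)$ plus a constant, for some strictly positive $C$ and some distribution $\bar{\mathbf{p}}\in\Delta_{\Sigma}$ that depends only on $(\sigma_i, \hat{\sigma}_i, s_{-i})$. Strict properness of $PS$ will then force the unique best response to be $\hat{\mathbf{p}}_i = \bar{\mathbf{p}}$, which establishes both existence and uniqueness.

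First I will expand the expected payment. Writing the payment as $\alpha\,\mathrm{score}_P(r_i,r_j) + \beta\,\mathrm{score}_I(r_i,r_j)$, with $j$ a random peer, and taking expectation over $j$ and the randomness of $s_{-i}$ given $\sigma_i$, the prediction-score part contributes $\alpha\,\E[PS(\hat{\sigma}_j,\hat{\mathbf{p}}_i)\mid\sigma_i]$, and the information-score part contributes $\beta\,\E[\mathbf{1}[\hat{\sigma}_j=\hat{\sigma}_i]\,(PS(\hat{\mathbf{p}}_j,\hat{\mathbf{p}}_i) - PS(\hat{\mathbf{p}}_j,\hat{\mathbf{p}}_j))\mid\sigma_i]$. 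The term $PS(\hat{\mathbf{p}}_j,\hat{\mathbf{p}}_j)$ does not involve $\hat{\mathbf{p}}_i$, so I will absorb it into a constant.

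Next I will apply linearity of proper scoring rules in the first argument---the extension $PS(\delta,\delta') = \E_{\sigma\leftarrow\delta}[PS(\sigma,\delta')]$---to push expectations inside. This turns $\E[PS(\hat{\sigma}_j,\hat{\mathbf{p}}_i)\mid\sigma_i]$ into $PS(\mathbf{m},\hat{\mathbf{p}}_i)$, where $\mathbf{m}\in\Delta_{\Sigma}$ is agent $i$'s predictive distribution over a random peer's reported signal given $\sigma_i$ and $s_{-i}$. Similarly, by iterated expectation (first over the random distribution $\hat{\mathbf{p}}_j$, then over the signal drawn from it), $\E[\mathbf{1}[\hat{\sigma}_j=\hat{\sigma}_i]\,PS(\hat{\mathbf{p}}_j,\hat{\mathbf{p}}_i)\mid\sigma_i]$ equals $p\cdot PS(\mathbf{y},\hat{\mathbf{p}}_i)$, where $p = \Pr[\hat{\sigma}_j=\hat{\sigma}_i\mid\sigma_i,s_{-i}]$ and $\mathbf{y} = \E[\hat{\mathbf{p}}_j\mid\hat{\sigma}_j=\hat{\sigma}_i,\sigma_i,s_{-i}]$. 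Agent $i$'s expected utility in $\hat{\mathbf{p}}_i$ then reads $\alpha\,PS(\mathbf{m},\hat{\mathbf{p}}_i) + \beta p\,PS(\mathbf{y},\hat{\mathbf{p}}_i) + \text{const}$, which by one more application of linearity equals $(\alpha+\beta p)\,PS(\bar{\mathbf{p}},\hat{\mathbf{p}}_i) + \text{const}$ with $\bar{\mathbf{p}} = (\alpha\,\mathbf{m} + \beta p\,\mathbf{y})/(\alpha+\beta p)$, a convex combination of distributions and hence itself in $\Delta_{\Sigma}$.

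Since $\alpha>0$, the coefficient $\alpha + \beta p$ is strictly positive (regardless of whether $p=0$), so maximizing the expected utility over $\hat{\mathbf{p}}_i$ is equivalent to maximizing $PS(\bar{\mathbf{p}},\hat{\mathbf{p}}_i)$; strict properness then yields the unique maximizer $\hat{\mathbf{p}}_i = \bar{\mathbf{p}}$, which by construction depends only on $(\sigma_i,\hat{\sigma}_i,s_{-i})$. The main obstacle I anticipate is justifying the two uses of linearity cleanly: one pulls an expectation over the discrete random variable $\hat{\sigma}_j$ inside $PS$, while the other pulls an expectation over the random distribution $\hat{\mathbf{p}}_j$ (a random element of $\Delta_{\Sigma}$) inside $PS$. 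Both reduce to the defining identity $PS(\delta,\cdot) = \E_{\sigma\leftarrow\delta}[PS(\sigma,\cdot)]$ combined with the tower property of conditional expectation.
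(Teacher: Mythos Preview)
Your proposal is correct and follows essentially the same approach as the paper's proof: expand the expected payment, drop the $PS(\hat{\mathbf{p}}_j,\hat{\mathbf{p}}_j)$ term that does not involve $\hat{\mathbf{p}}_i$, use linearity of $PS$ in its first argument to collapse everything into a single call $C\cdot PS(\bar{\mathbf{p}},\hat{\mathbf{p}}_i)$ with $C=\alpha+\beta p>0$, and invoke strict properness to conclude that $\hat{\mathbf{p}}_i=\bar{\mathbf{p}}$ is the unique maximizer. The paper writes out the explicit sums and integrals (so your $\mathbf{m}$ is its $\theta_{-i}\mathbf{q}_{\sigma_i}$ and your $p\,\mathbf{y}$ is its integral $\sum_{j\neq i}\Pr(j)\sum_{\sigma_j}\Pr(\sigma_j\mid\sigma_i)\int \Pr(\hat{\sigma}_j,\hat{\mathbf{p}}_j)\,\delta(\hat{\sigma}_j=\hat{\sigma}_i)\,\hat{\mathbf{p}}_j$), but the structure of the argument is identical.
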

We define this unique prediction as $\hat{\mathbf{p}}(i,\sigma_i,\hat{\sigma}_i)$

In other words, $s_i(\sigma_i)$ is a distribution over at most $m$ vectors: $\{(\hat{\sigma}_i,\hat{\mathbf{p}}(i,\sigma_i,\hat{\sigma}_i))|\hat{\sigma}_i\in \Sigma\}$ and $$Pr_{(\hat{\sigma}_i,\hat{\mathbf{p}}_i)\leftarrow s_i(\sigma)}(\hat{\sigma}_i,\hat{\mathbf{p}}_i)=\left\{
\begin{aligned}
\theta_i(\hat{\sigma}_i,\sigma_i) &  & \hat{\mathbf{p}}_i = \hat{\mathbf{p}}(i,\sigma_i,\hat{\sigma}_i) \\
0 &  &  \hat{\mathbf{p}}_i \neq  \hat{\mathbf{p}}(i,\sigma_i,\hat{\sigma}_i)
\end{aligned}\right.$$ 

\begin{proof}
For any agent $i$, assume his private signal is $\sigma_i$ and he reports $\hat{\sigma}_i$ at equilibrium $(s_1,s_2,...,s_n)$. Now we will prove there is a unique prediction that maximize agent $i$'s payment. 

\begin{align}\label{eq:actual prediction}
&\arg \max_{\hat{\mathbf{p}}} \E[payment(i,\mathcal{M}+)|\sigma_i]\\\label{eq:middle}
&=  \arg\max_{\hat{p}}\alpha PS(\theta_{-i}\mathbf{q}_{\sigma_i},\hat{\mathbf{p}}) + \beta \sum_{j\neq i}Pr(j)\sum_{\sigma_j} Pr(\sigma_j|\sigma_i)\int_{\hat{\sigma}_j,\hat{\mathbf{p}}_j}Pr_{(\hat{\sigma}_j,\hat{\mathbf{p}}_j)\leftarrow s_j(\sigma_j)}(\hat{\sigma}_j,\hat{\mathbf{p}}_j)\delta(\hat{\sigma}_i=\hat{\sigma}_j)PS(\mathbf{\hat{p}}_j,\hat{\mathbf{p}})\\ \label{eq:hatp}
&=  \arg\max_{\hat{p}} \left( \alpha+ \beta \sum_{j\neq i}Pr(j)\sum_{\sigma_j} Pr(\sigma_j|\sigma_i)\int_{\hat{\sigma}_j,\hat{\mathbf{p}}_j}Pr_{(\hat{\sigma}_j,\hat{\mathbf{p}}_j)\leftarrow s_j(\sigma_j)}(\hat{\sigma}_j,\hat{\mathbf{p}}_j)\delta(\hat{\sigma}_i=\hat{\sigma}_j) \right)\nonumber \\ &PS(\frac{\alpha \theta_{-i}\mathbf{q}_{\sigma_i} + \beta \sum_{j\neq i}Pr(j)\sum_{\sigma_j} Pr(\sigma_j|\sigma_i)\int_{\hat{\sigma}_j,\hat{\mathbf{p}}_j}Pr_{(\hat{\sigma}_j,\hat{\mathbf{p}}_j)\leftarrow s_j(\sigma_j)}(\hat{\sigma}_j,\hat{\mathbf{p}}_j)\delta(\hat{\sigma}_i=\hat{\sigma}_j)\mathbf{\hat{p}}_j}{\alpha+ \beta \sum_{j\neq i}Pr(j)\sum_{\sigma_j} Pr(\sigma_j|\sigma_i)\int_{\hat{\sigma}_j,\hat{\mathbf{p}}_j}Pr_{(\hat{\sigma}_j,\hat{\mathbf{p}}_j)\leftarrow s_j(\sigma_j)}(\hat{\sigma}_j,\hat{\mathbf{p}}_j)\delta(\hat{\sigma}_i=\hat{\sigma}_j)},\hat{\mathbf{p}})\\
&=  \frac{\alpha \theta_{-i}\mathbf{q}_{\sigma_i} + \beta \sum_{j\neq i}Pr(j)\sum_{\sigma_j} Pr(\sigma_j|\sigma_i)\int_{\hat{\sigma}_j,\hat{\mathbf{p}}_j}Pr_{(\hat{\sigma}_j,\hat{\mathbf{p}}_j)\leftarrow s_j(\sigma_j)}(\hat{\sigma}_j,\hat{\mathbf{p}}_j)\delta(\hat{\sigma}_i=\hat{\sigma}_j)\mathbf{\hat{p}}_j}{\alpha+ \beta \sum_{j\neq i}Pr(j)\sum_{\sigma_j} Pr(\sigma_j|\sigma_i)\int_{\hat{\sigma}_j,\hat{\mathbf{p}}_j}Pr_{(\hat{\sigma}_j,\hat{\mathbf{p}}_j)\leftarrow s_j(\sigma_j)}(\hat{\sigma}_j,\hat{\mathbf{p}}_j)\delta(\hat{\sigma}_i=\hat{\sigma}_j)}
\end{align}

In equation (\ref{eq:middle}), the first part is the prediction score of agent $i$, the second part is part of the information score of agent $i$. Note that for the information score $PS(\hat{\mathbf{p}}_j,\hat{\mathbf{p}})-PS(\hat{\mathbf{p}}_j,\hat{\mathbf{p}}_j)$ of agent $i$, only $PS(\hat{\mathbf{p}}_j,\hat{\mathbf{p}})$ is related to agent $i$'s reported prediction $\hat{\mathbf{p}}$ so we only consider this part to analyze the equilibrium. $Pr(j)$ is the probability that agent $j$ is matched with agent $i$, $Pr(\sigma_j|\sigma_i)$ is the probability that agent $j$ receives $\sigma_j$ given agent $i$ receives $\sigma_i$. Then given agent $j$'s strategy $s_j$ and private signal, we integrate over agent $j$ possible report profiles and only consider the case $\hat{\sigma}_i=\hat{\sigma}_j$. 

The second equality follows since proper scoring rule is linear for the first entry. 

The last equality follows since we obtain the highest value only if $\hat{\mathbf{p}}$ equals the first entry based on the property of strict proper scoring rule.

\end{proof}

The below claim tells us we can bound the distance between each agent's best response prediction (the prediction which maximizes his total reward) and his best prediction (the prediction which maximizes his prediciton score) by the distance between his best response prediction and his neighbors' best response predictions. 

\begin{claim}\label{claim:equilibrium}
For any convex function $g(\cdot)$, for any $\sigma_i$ and $\hat{\sigma}_i$, we have $$ \alpha Pr(\sigma_i)(g(\mathbf{\hat{p}}(i,\sigma_i,\hat{\sigma}_i))-g(\theta_{-i}\mathbf{q}_{\sigma_i}))\leq \beta \sum_{j\neq i}Pr(j)\sum_{\sigma_j}Pr(\sigma_j,\sigma_i)\theta_j(\hat{\sigma_i},\sigma_j) (g(\mathbf{\hat{p}}(j,\sigma_j,\hat{\sigma}_i))-g(\mathbf{\hat{p}}(i,\sigma_i,\hat{\sigma}_i))) $$
\end{claim}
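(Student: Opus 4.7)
The plan is to leverage the explicit formula for $\mathbf{\hat{p}}(i,\sigma_i,\hat{\sigma}_i)$ derived in Claim~\ref{claim:actual prediction}, which expresses agent $i$'s best-response prediction as a convex combination of his best prediction $\theta_{-i}\mathbf{q}_{\sigma_i}$ and his neighbors' best-response predictions. The convexity of $g$ will then do essentially all the work via a one-line Jensen's inequality, and the rest is bookkeeping to match the exact form of the inequality in the claim.

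More concretely, I would first introduce the shorthand weights
\[
w_\alpha := \alpha, \qquad w_{j,\sigma_j} := \beta \, Pr(j)\, Pr(\sigma_j \mid \sigma_i)\, \theta_j(\hat{\sigma}_i, \sigma_j),
\]
and let $W := w_\alpha + \sum_{j \neq i}\sum_{\sigma_j} w_{j,\sigma_j}$. By Claim~\ref{claim:actual prediction} (combining the observation that any proper scoring rule is linear in its first argument with the fact that the information-score term $PS(\hat{\mathbf{p}}_j, \hat{\mathbf{p}})$ is the only piece depending on agent $i$'s reported prediction $\hat{\mathbf{p}}$), we have
\[
\mathbf{\hat{p}}(i,\sigma_i,\hat{\sigma}_i) \;=\; \frac{w_\alpha\, \theta_{-i}\mathbf{q}_{\sigma_i} \;+\; \sum_{j \neq i}\sum_{\sigma_j} w_{j,\sigma_j}\, \mathbf{\hat{p}}(j,\sigma_j,\hat{\sigma}_i)}{W}.
\]
So $\mathbf{\hat{p}}(i,\sigma_i,\hat{\sigma}_i)$ is a genuine convex combination (the weights are non-negative and sum to $W$).

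The second step is Jensen's inequality applied to the convex function $g$:
\[
W\, g\bigl(\mathbf{\hat{p}}(i,\sigma_i,\hat{\sigma}_i)\bigr) \;\le\; w_\alpha\, g(\theta_{-i}\mathbf{q}_{\sigma_i}) \;+\; \sum_{j \neq i}\sum_{\sigma_j} w_{j,\sigma_j}\, g\bigl(\mathbf{\hat{p}}(j,\sigma_j,\hat{\sigma}_i)\bigr).
\]
Splitting the $W$ on the left as $w_\alpha + \sum_{j,\sigma_j} w_{j,\sigma_j}$ and rearranging, I get
\[
w_\alpha\bigl(g(\mathbf{\hat{p}}(i,\sigma_i,\hat{\sigma}_i)) - g(\theta_{-i}\mathbf{q}_{\sigma_i})\bigr) \;\le\; \sum_{j \neq i}\sum_{\sigma_j} w_{j,\sigma_j}\bigl(g(\mathbf{\hat{p}}(j,\sigma_j,\hat{\sigma}_i)) - g(\mathbf{\hat{p}}(i,\sigma_i,\hat{\sigma}_i))\bigr).
\]

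The final step is purely cosmetic: multiply both sides by $Pr(\sigma_i)$ and use the identity $Pr(\sigma_i)\, Pr(\sigma_j \mid \sigma_i) = Pr(\sigma_j,\sigma_i)$ on the right to absorb $Pr(\sigma_i)$ into the joint probability. Substituting back $w_\alpha = \alpha$ and the definition of $w_{j,\sigma_j}$ then yields exactly the stated inequality. I do not anticipate a serious obstacle: the only subtlety worth noting in the write-up is that invoking Claim~\ref{claim:actual prediction} is what lets us treat $\mathbf{\hat{p}}(j,\sigma_j,\hat{\sigma}_i)$ as a well-defined point (rather than a distribution over predictions), which is what makes the convex-combination representation legitimate in the first place.
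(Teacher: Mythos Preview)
Your proposal is correct and follows essentially the same approach as the paper: both invoke the convex-combination representation of $\mathbf{\hat{p}}(i,\sigma_i,\hat{\sigma}_i)$ from Claim~\ref{claim:actual prediction}, apply Jensen's inequality, rearrange, and then multiply through by $Pr(\sigma_i)$ using $Pr(\sigma_i)Pr(\sigma_j\mid\sigma_i)=Pr(\sigma_j,\sigma_i)$. Your write-up is in fact slightly cleaner than the paper's, which leaves the rearrangement step as ``by manipulation.''
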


\begin{proof}
Based on Claim~\ref{claim:actual prediction}, we can rewrite (\ref{eq:actual prediction})=(\ref{eq:hatp}) as a $n\times m\times m$ linear system about\\
$\{\mathbf{\hat{p}}(k,\sigma_k,\hat{\sigma}_k)|k\in[1,n],\sigma_k\in \Sigma,\hat{\sigma}_k\in\Sigma\}$:

\begin{align}\label{eq:linear system}
\mathbf{\hat{p}}(i,\sigma_i,\hat{\sigma}_i)&=\arg \max_{\hat{\mathbf{p}}} \E[payment(i,\mathcal{M}+)|\sigma_i]\\
= & \frac{\alpha \theta_{-i}\mathbf{q}_{\sigma_i} + \beta \sum_{j\neq i}Pr(j)\sum_{\sigma_j}Pr(\sigma_j|\sigma_i)\theta_j(\hat{\sigma}_i,\sigma_j)\mathbf{\hat{p}}(j,\sigma_j,\hat{\sigma}_i)}{\alpha+ \beta \sum_{j\neq i}Pr(j)\sum_{\sigma_j}Pr(\sigma_j|\sigma_i)\theta_j(\hat{\sigma}_i,\sigma_j)}
\end{align}

Fix $i$, let $\lambda_i = \frac{\alpha}{\alpha+ \beta \sum_{j\neq i}Pr(j)\sum_{\sigma_j}Pr(\sigma_j|\sigma_i)\theta_j(\hat{\sigma}_i,\sigma_j)}$, $\lambda_{j,\sigma_j} = \frac{\beta \sum_{j\neq i}Pr(j)\sum_{\sigma_j}Pr(\sigma_j|\sigma_i)\theta_j(\hat{\sigma}_i,\sigma_j)}{\alpha+ \beta \sum_{j\neq i}Pr(j)\sum_{\sigma_j}Pr(\sigma_j|\sigma_i)\theta_j(\hat{\sigma}_i,\sigma_j)}$ for $j\neq i$ and $\sigma_j\in\Sigma$, we have $\lambda_i+\sum_{j\neq i,\sigma_j}\lambda_{j,\sigma_j}=1$

Based on the convexity of $g(\cdot)$, we have 
\begin{align*}
g(\mathbf{\hat{p}}(i,\sigma_i,\hat{\sigma}_i))&= g(\lambda_i \theta_{-i}\mathbf{q}_{\sigma_i} + \sum_{j\neq i,\sigma_j}\lambda_{j,\sigma_j} \mathbf{\hat{p}}(j,\sigma_j,\hat{\sigma}_i) )\\
&\leq \lambda_i g(\theta_{-i}\mathbf{q}_{\sigma_i})+\sum_{j\neq i,\sigma_j}\lambda_{j,\sigma_j} g(\mathbf{\hat{p}}(j,\sigma_j,\hat{\sigma}_i))
\end{align*}

After substitutions, we multiply $\left(\alpha+ \beta Pr(j)\sum_{j\neq i}\sum_{\sigma_j}Pr(\sigma_j|\sigma_i)\theta_j(\hat{\sigma}_i,\sigma_j)\right) Pr(\sigma_i)$ in both sides. Note that $Pr(\sigma_i)Pr(\sigma_j|\sigma_i)=Pr(\sigma_j,\sigma_i)$, then by manipulation, the claim follows. 


\end{proof}

Claim~\ref{claim:equilibrium} gives an upper bound to $g(\mathbf{\hat{p}}_k)-g(\theta_{-k}\mathbf{q}_{\sigma_k})$ that is the distance between agent $k$'s best response prediction and his neighbors' best response predictions. Now we continue the proof for our main lemma. 

To bound \begin{align}
& \sum_{j,k,\sigma_j,\sigma_k} Pr(j,k)Pr(\sigma_j,\sigma_k)* \nonumber \\  & \int_{\hat{j},\hat{k}} Pr(\hat{j},\hat{k}) (D^*(\mathbf{\hat{p}}_j,\mathbf{\hat{p}}_k)-D^*(\mathbf{\hat{p}}_j,\theta_{-k}\mathbf{q}_{\sigma_k})+D^*(\mathbf{\hat{p}}_j,\theta_{-k}\mathbf{q}_{\sigma_k})-D^*(\theta_{-j}\mathbf{q}_{\sigma_j},\theta_{-k}\mathbf{q}_{\sigma_k}))
\end{align}

We rewrite $\int_{\hat{j},\hat{k}} Pr(\hat{j},\hat{k})$ as $\theta_j(\hat{\sigma}_j,\sigma_j)\theta_k(\hat{\sigma}_k,\sigma_k)$ and $\hat{\mathbf{p}}_j$ as $\hat{\mathbf{p}}(j,\sigma_j,\hat{\sigma}_j)$, $\hat{\mathbf{p}}_k$ as $\hat{\mathbf{p}}(k,\sigma_k,\hat{\sigma}_k)$ which we can do because of Claim~\ref{claim:actual prediction}.

We first give an upper bound to 
$$\sum_{j,k} \sum_{\sigma_j,\sigma_k,\hat{\sigma}_j,\hat{\sigma}_k}Pr(j,k)Pr(\sigma_j,\sigma_k)\theta_j(\hat{\sigma}_j,\sigma_j)\theta_k(\hat{\sigma}_k,\sigma_k)(D^*(\mathbf{\hat{p}}(j,\sigma_j,\hat{\sigma}_j),\mathbf{\hat{p}}(k,\sigma_k,\hat{\sigma}_k))-D^*(\mathbf{\hat{p}}(j,\sigma_j,\hat{\sigma}_j),\theta_{-k} \mathbf{q}_{\sigma_k}))$$

The analysis for the second part is similar. 

Based on Claim~\ref{claim:equilibrium}, we have

\begin{align}
&\sum_{j,k} \sum_{\sigma_j,\sigma_k,\hat{\sigma}_j,\hat{\sigma}_k}Pr(j,k)Pr(\sigma_j,\sigma_k)\theta_j(\hat{\sigma}_j,\sigma_j)\theta_k(\hat{\sigma}_k,\sigma_k)(D^*(\mathbf{\hat{p}}(j,\sigma_j,\hat{\sigma}_j),\mathbf{\hat{p}}(k,\sigma_k,\hat{\sigma}_k))-D^*(\mathbf{\hat{p}}(j,\sigma_j,\hat{\sigma}_j),\theta_{-k} \mathbf{q}_{\sigma_k}))\label{e41} \\
&\leq \sum_{j,k} \sum_{\sigma_j,\sigma_k,\hat{\sigma}_j,\hat{\sigma}_k}Pr(j,k)Pr(\sigma_j,\sigma_k)\theta_j(\hat{\sigma}_j,\sigma_j)\theta_k(\hat{\sigma}_k,\sigma_k)\\
& \frac{\beta}{\alpha Pr(\sigma_k)}\sum_{l \neq k}\sum_{\sigma_l}Pr(l)Pr(\sigma_l,\sigma_k)\theta_l(\hat{\sigma}_k,\sigma_l)(D^*(\mathbf{\hat{p}}(j,\sigma_j,\hat{\sigma}_j),\mathbf{\hat{p}}(l,\sigma_l,\hat{\sigma}_k))-D^*(\mathbf{\hat{p}}(j,\sigma_j,\hat{\sigma}_j),\mathbf{\hat{p}}(k,\sigma_k,\hat{\sigma}_k)))\label{e42} 
\end{align}

Since $\frac{Pr(\sigma_j,\sigma_k)}{Pr(\sigma_k)}\leq 1$, we obtain (\ref{e44}) from (\ref{e42}). 

\begin{align}
(\ref{e42})\leq & \sum_{j,k} \sum_{\sigma_j,\sigma_k,\hat{\sigma}_j,\hat{\sigma}_k}Pr(j,k)\theta_j(\hat{\sigma}_j,\sigma_j)\theta_k(\hat{\sigma}_k,\sigma_k)\nonumber \\ & \frac{\beta}{\alpha}\sum_{l \neq k}\sum_{\sigma_l}Pr(l)Pr(\sigma_l,\sigma_k)\theta_l(\hat{\sigma}_k,\sigma_l)(D^*(\mathbf{\hat{p}}(j,\sigma_j,\hat{\sigma}_j),\mathbf{\hat{p}}(l,\sigma_l,\hat{\sigma}_k))-D^*(\mathbf{\hat{p}}(j,\sigma_j,\hat{\sigma}_j),\mathbf{\hat{p}}(k,\sigma_k,\hat{\sigma}_k)))\label{e44}  \\
\leq & \sum_{j,k} \sum_{\sigma_j,\sigma_k,\hat{\sigma}_j,\hat{\sigma}_k}Pr(j,k)\theta_j(\hat{\sigma}_j,\sigma_j)\theta_k(\hat{\sigma}_k,\sigma_k)\nonumber \\
& \frac{\beta}{\alpha}\sum_{l \neq k}\sum_{\sigma_l}Pr(l)Pr(\sigma_l,\sigma_k)\theta_l(\hat{\sigma}_k,\sigma_l)|(D^*(\mathbf{\hat{p}}(j,\sigma_j,\hat{\sigma}_j),\mathbf{\hat{p}}(l,\sigma_l,\hat{\sigma}_k))-D^*(\mathbf{\hat{p}}(j,\sigma_j,\hat{\sigma}_j),\mathbf{\hat{p}}(k,\sigma_k,\hat{\sigma}_k)))|\label{e45}
\end{align}

Note that (\ref{e45}) and (\ref{e44}) are identical except for the  value sign.

Then we obtain (\ref{e46}) from (\ref{e45}) since
\begin{align*}
|D^*(x,y)-D^*(x,z)|&\leq (\sqrt{D^*(x,y)}+\sqrt{D^*(x,z)})|\sqrt{D^*(x,y)}-\sqrt{D^*(x,z)})|\\
&\leq 2|\sqrt{D^*(x,y)}-\sqrt{D^*(x,z)})|
\end{align*}
The second inequality follows since $0\leq D^* \leq 1$

\begin{align}
(\ref{e45})\leq & 2 \sum_{j,k} \sum_{\sigma_j,\sigma_k,\hat{\sigma}_j,\hat{\sigma}_k}Pr(j,k)\theta_j(\hat{\sigma}_j,\sigma_j)\theta_k(\hat{\sigma}_k,\sigma_k)\nonumber \\
& \frac{\beta}{\alpha}\sum_{l \neq k}\sum_{\sigma_l}Pr(l)Pr(\sigma_l,\sigma_k)\theta_l(\hat{\sigma}_k,\sigma_l)\nonumber \\
& |(\sqrt{D^*(\mathbf{\hat{p}}(j,\sigma_j,\hat{\sigma}_j),\mathbf{\hat{p}}(l,\sigma_l,\hat{\sigma}_k))}-\sqrt{D^*(\mathbf{\hat{p}}(j,\sigma_j,\hat{\sigma}_j),\mathbf{\hat{p}}(k,\sigma_k,\hat{\sigma}_k))})|\label{e46}
\end{align}

Once we get (\ref{e46}), we can use the fact that $\sqrt{D^*}$ is metric which implies the triangle inequality; (\ref{e47}) follows.

\begin{align}
(\ref{e46})\leq & 2 \sum_{j,k} \sum_{\sigma_j,\sigma_k,\hat{\sigma}_j,\hat{\sigma}_k}Pr(j,k)\theta_j(\hat{\sigma}_j,\sigma_j)\theta_k(\hat{\sigma}_k,\sigma_k)\nonumber \\
& \frac{\beta}{\alpha}\sum_{l \neq k}\sum_{\sigma_l}Pr(l)Pr(\sigma_l,\sigma_k)\theta_l(\hat{\sigma}_k,\sigma_l)(\sqrt{D^*(\mathbf{\hat{p}}(k,\sigma_k,\hat{\sigma}_k),\mathbf{\hat{p}}(l,\sigma_l,\hat{\sigma}_k))})\label{e47}
\end{align}

Note that $\sum_{\sigma_j}\sum_{\hat{\sigma}_j}\theta_j(\hat{\sigma}_j,\sigma_j)=\sum_{\sigma_j}1=m$, also we have $\sum_j Pr(l)=\sum_j Pr(j)=1$, $Pr(k,l)=Pr(j,k)$ then (\ref{e48}) follows.

\begin{align}
(\ref{e47})=& 2 m \frac{\beta}{\alpha } \sum_{l}\sum_{k\neq l} \sum_{\sigma_k,\hat{\sigma}_k}Pr(k,l)\theta_k(\hat{\sigma}_k,\sigma_k)\sum_{\sigma_l}Pr(\sigma_l,\sigma_k)\theta_l(\hat{\sigma}_k,\sigma_l)(\sqrt{D^*(\mathbf{\hat{p}}(k,\sigma_k,\hat{\sigma}_k),\mathbf{\hat{p}}(l,\sigma_l,\hat{\sigma}_k))})\label{e48}\\
=& 2 m \frac{\beta}{\alpha } \sum_{k,l\neq k} \sum_{\sigma_k,\hat{\sigma}_k,\sigma_l}Pr(k,l)\theta_k(\hat{\sigma}_k,\sigma_k)Pr(\sigma_l,\sigma_k)\theta_l(\hat{\sigma}_k,\sigma_l)(\sqrt{D^*(\mathbf{\hat{p}}(k,\sigma_k,\hat{\sigma}_k),\mathbf{\hat{p}}(l,\sigma_l,\hat{\sigma}_k))})\label{e49}\\
= &  2 m \frac{\beta}{\alpha }\times \textit{Inconsistency}
\end{align}


\smallskip

The analysis for the second part

$$\sum_{j,k} \sum_{\sigma_j,\sigma_k,\hat{\sigma}_j,\hat{\sigma}_k}Pr(j,k)Pr(\sigma_j,\sigma_k)\theta_j(\hat{\sigma}_j,\sigma_j)\theta_k(\hat{\sigma}_k,\sigma_k)(D^*(\mathbf{\hat{p}}(j,\sigma_j,\hat{\sigma}_j),\theta_{-k} \mathbf{q}_{\sigma_k})-D^*(\theta_{-j}\mathbf{q}_{\sigma_j},\theta_{-k}\mathbf{q}_{\sigma_k}))$$

is similar, note that $j$ and $k$ are symmetric and $D^*(\cdot,\theta_{-k}\mathbf{q}_{\sigma_k})$ is a convex function. We can use Claim~\ref{claim:equilibrium} and triangle inequality to bound the second part by $2 m \frac{\beta}{\alpha }\times \textit{Inconsistency}$. 

So if we set $2m \frac{\beta}{\alpha}<\frac{1}{2}$, then $\textit{TotalDivergence}(s)-\textit{TotalDivergence}(s_{BP})<\textit{Inconsistency}$, proving the inequality in our main lemma. 

To prove that if the equality in our main lemma holds then $s=s_{BP}$, we first show that 

\begin{claim}\label{claim:main:lemma:eq}
The equality in $\textit{ClassificationScore}(s) \leq \textit{TotalDivergence}(s_{BP})$ holds iff \\ $\textit{Inconsistency}(s)=0$. 
\end{claim}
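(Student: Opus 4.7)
My plan is to read the equality condition off the two-step chain of inequalities that was established earlier in the proof of the main lemma, namely
\[
\textit{ClassificationScore}(s) = \textit{Diversity}(s) - \textit{Inconsistency}(s) \leq \textit{TotalDivergence}(s) - \textit{Inconsistency}(s) \leq \textit{TotalDivergence}(s_{BP}),
\]
combined with Claim~\ref{claim:inconsistency:zero} and the equilibrium fixed-point equation~(\ref{eq:hatp}) coming out of Claim~\ref{claim:actual prediction}.

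For the forward direction ($\Rightarrow$), assume $\textit{ClassificationScore}(s) = \textit{TotalDivergence}(s_{BP})$. Then both inequalities in the chain above must be tight. The first one being tight is exactly $\textit{Diversity}(s) = \textit{TotalDivergence}(s)$, which by Claim~\ref{claim:inconsistency:zero} is equivalent to $\textit{Inconsistency}(s) = 0$; this is all we need.

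For the reverse direction ($\Leftarrow$), assume $\textit{Inconsistency}(s) = 0$. Claim~\ref{claim:inconsistency:zero} tells us $\textit{Diversity}(s) = \textit{TotalDivergence}(s)$, so the first inequality in the chain is already tight, and it remains to show that the second one is tight, i.e., $\textit{TotalDivergence}(s) = \textit{TotalDivergence}(s_{BP})$. To do this, I will first promote the integral statement $\textit{Inconsistency}(s) = 0$ to the pointwise statement that for every triple $(i,\sigma_i,\hat{\sigma}_i)$ with $\theta_i(\hat{\sigma}_i,\sigma_i) > 0$ and every $(j,\sigma_j)$ with $j \neq i$ and $\theta_j(\hat{\sigma}_i,\sigma_j) > 0$, the predictions coincide: $\hat{\mathbf{p}}(j,\sigma_j,\hat{\sigma}_i) = \hat{\mathbf{p}}(i,\sigma_i,\hat{\sigma}_i)$. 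The non-zero prior assumption guarantees that the joint probability weight in each summand of $\textit{Inconsistency}(s)$ is positive, which forces each $\sqrt{D^*}$ term to vanish individually. Substituting this identity into the equilibrium best-response formula (\ref{eq:hatp}), the $\beta$-weighted combination on the right-hand side collapses to $\beta X \, \hat{\mathbf{p}}(i,\sigma_i,\hat{\sigma}_i)$ where $X$ is its normalizing weight, and after cancellation we obtain $\hat{\mathbf{p}}(i,\sigma_i,\hat{\sigma}_i) = \theta_{-i}\mathbf{q}_{\sigma_i}$. Because this identity holds for every report profile that occurs with positive probability under $s$, the distributions of reports induced by $s$ and by $s_{BP}$ agree, and hence $\textit{TotalDivergence}(s) = \textit{TotalDivergence}(s_{BP})$.

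The main obstacle I anticipate is precisely this bridge in the reverse direction: passing from the single aggregate equation $\textit{Inconsistency}(s) = 0$ to the pointwise equality of predictions that is needed to activate the fixed-point identity. I will need to invoke the non-zero prior assumption on exactly the pairs of agents and signal profiles that appear in the $\beta$-sum of (\ref{eq:hatp}), so that every relevant neighbor prediction $\hat{\mathbf{p}}(j,\sigma_j,\hat{\sigma}_i)$ is forced to coincide with $\hat{\mathbf{p}}(i,\sigma_i,\hat{\sigma}_i)$; if even a single such term were missed, the collapse in (\ref{eq:hatp}) would fail and we could not conclude $s = s_{BP}$.
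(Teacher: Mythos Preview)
Your proposal is correct, and the forward direction is exactly the paper's argument: equality in the chain forces $\textit{Diversity}(s)=\textit{TotalDivergence}(s)$, which by Claim~\ref{claim:inconsistency:zero} gives $\textit{Inconsistency}(s)=0$.

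For the reverse direction your proof goes further than the paper's own proof of this claim. The paper, despite the ``iff'' in the statement, only writes out the $\Rightarrow$ direction here; the $\Leftarrow$ direction is effectively deferred to the immediately following claim, which shows that $\textit{Inconsistency}(s)=0$ forces $s=s_{BP}$ via the fixed-point identity~(\ref{eq:linear system}). Your argument for $\Leftarrow$ is precisely that same computation (pointwise vanishing of $\sqrt{D^*}$ terms by the non-zero prior, substitution into the best-response equation, cancellation to get $\hat{\mathbf{p}}(i,\sigma_i,\hat{\sigma}_i)=\theta_{-i}\mathbf{q}_{\sigma_i}$), just folded into this claim rather than stated separately. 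So the content is the same as the paper's, with slightly different packaging; your version has the advantage of actually justifying the ``iff'' as stated.
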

\begin{proof}
Note that (\ref{eq:main:lemma}) tells us when $\textit{ClassificationScore}(s)=\textit{TotalDivergence}(s_{BP})$, we have $\textit{Diversity}(s)=\textit{TotalDivergence}(s)$ which implies $\textit{Inconsistency}(s)=0$ based on Claim~\ref{claim:inconsistency:zero}.
\end{proof}

Then we will prove 
\begin{claim}
If $\textit{Inconsistency}(s)=0$ then $s=s_{BP}$
\end{claim}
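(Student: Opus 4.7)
The plan is to leverage the equilibrium characterization from Claim~\ref{claim:actual prediction} (equation~\ref{eq:linear system}), which says that at any equilibrium $s$, agent $i$'s best-response prediction after reporting $\hat{\sigma}_i$ from private signal $\sigma_i$ is the convex combination
\[
\hat{\mathbf{p}}(i,\sigma_i,\hat{\sigma}_i) \;=\; \frac{\alpha\, \theta_{-i}\mathbf{q}_{\sigma_i} \;+\; \beta \sum_{j\neq i}Pr(j)\sum_{\sigma_j}Pr(\sigma_j|\sigma_i)\theta_j(\hat{\sigma}_i,\sigma_j)\,\hat{\mathbf{p}}(j,\sigma_j,\hat{\sigma}_i)}{\alpha \;+\; \beta \sum_{j\neq i}Pr(j)\sum_{\sigma_j}Pr(\sigma_j|\sigma_i)\theta_j(\hat{\sigma}_i,\sigma_j)}.
\]
The idea is that if $\textit{Inconsistency}(s)=0$, then every prediction appearing in the numerator coincides with $\hat{\mathbf{p}}(i,\sigma_i,\hat{\sigma}_i)$, so algebraically the $\beta$-terms cancel and only the best prediction $\theta_{-i}\mathbf{q}_{\sigma_i}$ survives.

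I would first carefully unpack the zero-inconsistency hypothesis. By the definition of \textit{Inconsistency} and the fact that $\sqrt{D^*(\cdot,\cdot)}$ is a metric, each individual term in the sum must vanish. Combined with the \emph{non-zero prior} assumption ($Pr(\sigma_j,\sigma_k)>0$ for all $\sigma_j,\sigma_k$), this forces the following: for any $i\neq j$, any $\sigma_i,\sigma_j$, and any common reported signal $\hat{\sigma}$ with $\theta_i(\hat{\sigma},\sigma_i)>0$ and $\theta_j(\hat{\sigma},\sigma_j)>0$, we must have $\hat{\mathbf{p}}(i,\sigma_i,\hat{\sigma}) = \hat{\mathbf{p}}(j,\sigma_j,\hat{\sigma})$. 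In other words, the reported prediction depends only on the reported signal $\hat{\sigma}$, not on the agent identity or their private signal, for all $(j,\sigma_j)$ that contribute positively to the numerator above.

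Fix any $(i,\sigma_i,\hat{\sigma}_i)$ with $\theta_i(\hat{\sigma}_i,\sigma_i)>0$, and let
\[
C \;:=\; \beta \sum_{j\neq i}Pr(j)\sum_{\sigma_j}Pr(\sigma_j|\sigma_i)\theta_j(\hat{\sigma}_i,\sigma_j).
\]
If $C=0$, equation~\ref{eq:linear system} directly gives $\hat{\mathbf{p}}(i,\sigma_i,\hat{\sigma}_i)=\theta_{-i}\mathbf{q}_{\sigma_i}$. Otherwise, every $(j,\sigma_j)$ with $\theta_j(\hat{\sigma}_i,\sigma_j)>0$ appearing in the sum satisfies $\hat{\mathbf{p}}(j,\sigma_j,\hat{\sigma}_i) = \hat{\mathbf{p}}(i,\sigma_i,\hat{\sigma}_i)$ by the previous paragraph. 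Substituting this into the numerator and solving yields
\[
(\alpha+C)\,\hat{\mathbf{p}}(i,\sigma_i,\hat{\sigma}_i) \;=\; \alpha\,\theta_{-i}\mathbf{q}_{\sigma_i} + C\,\hat{\mathbf{p}}(i,\sigma_i,\hat{\sigma}_i),
\]
so $\hat{\mathbf{p}}(i,\sigma_i,\hat{\sigma}_i)=\theta_{-i}\mathbf{q}_{\sigma_i}$. Since this holds for every $(i,\sigma_i,\hat{\sigma}_i)$ reached with positive probability under $s$, and since $s$ and $s_{BP}$ share the same signal strategy by definition, $s$ and $s_{BP}$ induce identical report distributions, i.e.\ $s=s_{BP}$.

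The only conceptual step requiring care is the first one: translating the aggregate condition $\textit{Inconsistency}(s)=0$ into a pointwise equality of predictions for all relevant $(j,\sigma_j)$ pairs. This is where the non-zero prior assumption is essential---without it, some $(\sigma_j,\sigma_k)$ pair could be witnessed with zero joint probability and escape the constraint, leaving a non-trivial residual $\beta$-term in the numerator and breaking the cancellation argument.
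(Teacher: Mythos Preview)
Your proposal is correct and follows essentially the same approach as the paper: use the linear-system characterization of equilibrium predictions from Claim~\ref{claim:actual prediction}, observe that zero inconsistency forces all neighbor predictions $\hat{\mathbf{p}}(j,\sigma_j,\hat{\sigma}_i)$ with positive weight to equal $\hat{\mathbf{p}}(i,\sigma_i,\hat{\sigma}_i)$, and then cancel the $\beta$-terms to obtain $\hat{\mathbf{p}}(i,\sigma_i,\hat{\sigma}_i)=\theta_{-i}\mathbf{q}_{\sigma_i}$. Your version is in fact more careful than the paper's, explicitly invoking the non-zero prior assumption to justify the pointwise conclusion and separately handling the degenerate case $C=0$.
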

\begin{proof}
Recall in (\ref{eq:linear system}), we have for any $i$,
\begin{align} 
\mathbf{\hat{p}}(i,\sigma_i,\hat{\sigma}_i)&=\arg \max_{\hat{\mathbf{p}}} \E[payment(i,\mathcal{M}+)|\sigma_i]\nonumber \\ \label{eq:linear system 2}
& =  \frac{\alpha \theta_{-i}\mathbf{q}_{\sigma_i} + \beta Pr(j)\sum_{j\neq i}\sum_{\sigma_j}Pr(\sigma_j|\sigma_i)\theta_j(\hat{\sigma}_i,\sigma_j)\mathbf{\hat{p}}(j,\sigma_j,\hat{\sigma}_i)}{\alpha+ \beta Pr(j)\sum_{j\neq i}\sum_{\sigma_j}Pr(\sigma_j|\sigma_i)\theta_j(\hat{\sigma}_i,\sigma_j)}
\end{align}
If $\textit{Inconsistency}(s)=0$, we can see if $\theta_j(\hat{\sigma}_i,\sigma_j)>0$ we must have $\hat{\mathbf{p}}(j,\sigma_j,\hat{\sigma}_i)= \hat{\mathbf{p}}_i(i,\sigma_i,\hat{\sigma}_i)$. So we have $\theta_{-i}\mathbf{q}_{\sigma_i}=\hat{\mathbf{p}}_i(i,\sigma_i,\hat{\sigma}_i)$ for any $i$ since we have $\alpha \theta_{-i}\mathbf{q}_{\sigma_i}=\alpha \hat{\mathbf{p}}_i(i,\sigma_i,\hat{\sigma}_i) $ if we multiply\\ $\alpha+ \beta Pr(j)\sum_{j\neq i}\sum_{\sigma_j}Pr(\sigma_j|\sigma_i)\theta_j(\hat{\sigma}_i,\sigma_j)$ in both sides of equation (\ref{eq:linear system 2}) and combine the fact that $\hat{\mathbf{p}}(j,\sigma_j,\hat{\sigma}_i)= \hat{\mathbf{p}}_i(i,\sigma_i,\hat{\sigma}_i)$. Thus we have $s=s_{BP}$. 
\end{proof}

\subsection{Proof for Claims}\label{section:proof_claims}
{
\renewcommand{\thetheorem}{\ref{claim:average signal strategy}}
\begin{claim}
 Assume that the distribution over all agents' private signals is $\omega\in \Delta_{\Sigma}$, the distribution over all agents' reported signals will be $\bar{\theta}_n\omega$.
\end{claim}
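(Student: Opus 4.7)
The plan is to verify this by a short direct computation, unpacking the two ingredients: (a) what ``distribution over all agents' reported signals'' means operationally, and (b) how each agent's signal strategy $\theta_i$ maps a distribution over private signals to a distribution over reported signals.

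First I would fix the interpretation: the distribution over all agents' reported signals is the distribution of a uniformly random agent's reported signal, i.e.\ if $i$ is drawn uniformly from $\{1,\dots,n\}$ and $\hat{\sigma}_i$ is agent $i$'s reported signal, then for each $\hat{\sigma}\in\Sigma$ we want
\[
\Pr[\hat{\sigma}_{i}=\hat{\sigma}] \;=\; (\bar{\theta}_n\,\omega)(\hat{\sigma}).
\]
Under the assumption that the marginal distribution of each agent's private signal $\sigma_i$ is $\omega$, the definition of the signal strategy gives, for every individual agent $i$,
\[
\Pr[\hat{\sigma}_i=\hat{\sigma}] \;=\; \sum_{\sigma\in\Sigma}\theta_i(\hat{\sigma},\sigma)\,\omega(\sigma) \;=\; (\theta_i\,\omega)(\hat{\sigma}),
\]
viewing $\theta_i$ as a $|\Sigma|\times|\Sigma|$ matrix acting on the column vector $\omega$.

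Next I would average over the agent index. By the law of total probability applied to the uniform choice of $i$, together with linearity of the matrix-vector product,
\[
\Pr[\hat{\sigma}_{i}=\hat{\sigma}] \;=\; \frac{1}{n}\sum_{i=1}^n (\theta_i\,\omega)(\hat{\sigma}) \;=\; \Bigl(\tfrac{1}{n}\sum_{i=1}^n \theta_i\Bigr)\omega\,(\hat{\sigma}) \;=\; (\bar{\theta}_n\,\omega)(\hat{\sigma}),
\]
where the last equality is the definition $\bar{\theta}_n=\frac{1}{n}\sum_i\theta_i$. This is exactly the claimed identity, so the proof is complete.

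There is no real obstacle here: the claim is essentially a restatement of linearity of expectation combined with the definitions of $\theta_i$ and $\bar{\theta}_n$. The only subtle point worth flagging in the writeup is the interpretation of ``the distribution over all agents' reported signals,'' which I would make explicit before the computation so the reader sees that averaging over the uniform choice of agent is exactly what produces the factor $\bar{\theta}_n$ in place of the individual $\theta_i$.
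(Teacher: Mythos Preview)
Your proposal is correct and follows essentially the same approach as the paper: both interpret the distribution over reported signals as that of a uniformly chosen agent's report, condition on the private signal to obtain $(\theta_i\omega)(\hat{\sigma})$, and then average over $i$ to arrive at $\bar{\theta}_n\omega$. Your version is slightly more explicit about the interpretation step, but the argument is the same.
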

\addtocounter{theorem}{-1}
}
\begin{proof}[Proof for Claim~\ref{claim:average signal strategy}]
The probability of signal $\sigma$ will be $$\sum_i Pr(i)\sum_{\sigma'}\theta_i(\sigma,\sigma')\omega(\sigma')=\frac{1}{n} \sum_i \sum_{\sigma'}\theta_i(\sigma,\sigma')\omega({\sigma'})=\sum_{\sigma'} \bar{\theta}_n(\sigma,\sigma')\omega({\sigma'}) $$
where $Pr(i)$ is the probability agent $i$ is picked. For each agent $i$, we sum the probability agent $i$ receives private signal $\sigma'$ which is $\omega(\sigma')$ times the probability that he reports $\sigma$ given he receives $\sigma'$ which is $\theta_i(\sigma,\sigma')$ over all possible private signal $\sigma'$. 

So the distribution of reported signals is $\bar{\theta}_n\omega$. 
\end{proof}

{ \renewcommand{\thetheorem}{\ref{claim:best prediction}} \begin{claim}
For each agent $i$, if he receives private signal $\sigma_i$, agent $i$ will believe that the expected likelihood of other agents' reported signals is $\theta_{-i}\mathbf{q}_{\sigma_i}$ where $\theta_{-i}=\frac{\sum_{j\neq i}\theta_j}{n-1}$.
\end{claim}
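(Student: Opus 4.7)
The plan is to unroll the definition of $\mathbf{q}^{\mathbf{s}}_{\sigma_i}$ by conditioning on (a) which other agent is the random peer and (b) what private signal that peer received. By the definition given just before the claim, $\mathbf{q}^{\mathbf{s}}_{\sigma_i}(\hat{\sigma})$ is the probability that a uniformly chosen agent $j \neq i$ reports signal $\hat{\sigma}$, conditional on agent $i$ having received $\sigma_i$ and on everyone playing the strategy profile $\mathbf{s}$.

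First I would fix an arbitrary target report $\hat{\sigma} \in \Sigma$. Summing over the choice of peer $j$ (uniform with probability $\frac{1}{n-1}$) and over that peer's private signal $\sigma_j$, I get
\[
\mathbf{q}^{\mathbf{s}}_{\sigma_i}(\hat{\sigma}) \;=\; \sum_{j \neq i} \frac{1}{n-1} \sum_{\sigma_j \in \Sigma} \Pr[\sigma_j \mid \sigma_i]\; \Pr[\,\text{agent } j \text{ reports } \hat{\sigma} \mid \sigma_j\,].
\]
The first conditional probability is $q(\sigma_j \mid \sigma_i) = \mathbf{q}_{\sigma_i}(\sigma_j)$ by the symmetric prior assumption, and the second is precisely the entry $\theta_j(\hat{\sigma}, \sigma_j)$ of agent $j$'s signal strategy matrix (the prediction component of agent $j$'s report does not affect the marginal distribution of his reported signal, so it can be integrated out).

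Substituting and rearranging, the inner sum over $\sigma_j$ becomes a matrix--vector product:
\[
\sum_{\sigma_j} \theta_j(\hat{\sigma},\sigma_j)\,\mathbf{q}_{\sigma_i}(\sigma_j) \;=\; \bigl(\theta_j \mathbf{q}_{\sigma_i}\bigr)(\hat{\sigma}).
\]
Thus
\[
\mathbf{q}^{\mathbf{s}}_{\sigma_i}(\hat{\sigma}) \;=\; \frac{1}{n-1}\sum_{j \neq i} \bigl(\theta_j \mathbf{q}_{\sigma_i}\bigr)(\hat{\sigma}) \;=\; \Bigl(\tfrac{1}{n-1}\sum_{j\neq i}\theta_j\Bigr)\mathbf{q}_{\sigma_i}(\hat{\sigma}) \;=\; \bigl(\theta_{-i}\,\mathbf{q}_{\sigma_i}\bigr)(\hat{\sigma}).
\]
Since this holds coordinatewise for every $\hat{\sigma}$, we conclude $\mathbf{q}^{\mathbf{s}}_{\sigma_i} = \theta_{-i}\mathbf{q}_{\sigma_i}$.

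This proof is essentially routine: the only conceptual step is recognizing that we need to marginalize over both the identity of the peer and the peer's private signal, and that the prediction component of the peer's report is irrelevant to his marginal reported-signal distribution. There is no real obstacle; the main thing to be careful about is the order of the two sums and keeping straight that $\theta_j$ acts on $\mathbf{q}_{\sigma_i}$ (not on $\mathbf{q}_{\sigma_j}$) because $\mathbf{q}_{\sigma_i}$ already encodes agent $i$'s belief about the peer's signal $\sigma_j$.
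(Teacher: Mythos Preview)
Your proof is correct and is essentially the same argument as the paper's: the paper simply invokes the earlier Claim~\ref{claim:average signal strategy} (applied to the $n-1$ agents other than $i$, with $\omega=\mathbf{q}_{\sigma_i}$ and average strategy $\theta_{-i}$), whereas you unroll that same double-sum computation inline. There is no substantive difference in approach.
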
 \addtocounter{theorem}{-1} }
\begin{proof}[Proof for Claim~\ref{claim:best prediction}]
For each agent $i$, given he receives private signal $\sigma_i$, he will believe the expected likelihood for other agents' private signals is $\mathbf{q}_{\sigma_i}$. Based on Claim~\ref{claim:average signal strategy}, he will believe the expected likelihood for other agents' reported signals is the average signal strategy of other agents' signal strategies times $\mathbf{q}_{\sigma_i}$ which is $\theta_{-i}\mathbf{q}_{\sigma_i}$ where $\theta_{-i}=\frac{\sum_{j\neq i}\theta_j}{n-1}$.
\end{proof}

{ \renewcommand{\thetheorem}{\ref{claim:permutation_matrix}} \begin{claim}
For any transition matrix $\theta_{m\times m}$ where the sum of every column is 1, $\theta$ is a permutation matrix iff for any row of $\theta$, there  at most one non-zero entry. 
\end{claim}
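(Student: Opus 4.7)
The plan is to prove the two directions of the biconditional separately, with the forward direction being essentially by definition and the reverse direction following from a simple counting argument combined with the column-sum constraint.

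For the forward direction ($\Rightarrow$), this is immediate from the definition of a permutation matrix: a permutation matrix has exactly one entry equal to $1$ in each row (and each column) and zeros elsewhere. In particular, every row has exactly one non-zero entry, so certainly at most one.

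For the reverse direction ($\Leftarrow$), suppose $\theta$ is a transition matrix (non-negative entries, each column summing to $1$) with the property that every row contains at most one non-zero entry. First I would count non-zero entries two ways. Since each of the $m$ columns sums to $1$, each column must contain at least one non-zero entry, giving at least $m$ non-zero entries in total. On the other hand, since each of the $m$ rows contains at most one non-zero entry, there are at most $m$ non-zero entries. Hence there are exactly $m$ non-zero entries, exactly one per row and exactly one per column.

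Finally, I would observe that the unique non-zero entry in each column must equal $1$, because the column sums to $1$ and all other entries in that column are $0$. This establishes a bijection $\pi : \{1,\dots,m\} \to \{1,\dots,m\}$ sending column $j$ to the unique row $\pi(j)$ where $\theta(\pi(j), j) = 1$, which is precisely the definition of a permutation matrix. No real obstacle arises here; the only subtlety is remembering to use the column-sum constraint to force the non-zero entries to equal $1$ (rather than merely to be positive).
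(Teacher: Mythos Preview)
Your proposal is correct and follows essentially the same approach as the paper: the forward direction is immediate, and for the reverse direction both you and the paper count non-zero entries (at most $m$ from the row hypothesis, at least $m$ from the column-sum constraint) to conclude there is exactly one per row and per column, then use the column-sum condition to force that entry to equal $1$.
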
 \addtocounter{theorem}{-1} }

\begin{proof}[Proof for Claim~\ref{claim:permutation_matrix}]
It is clear that any permutation matrix has exactly one non-zero entry, which is 1, in each row and each column. Thus we only need to prove the direction that if for any row of $\theta$, there is at most one non-zero entry, $\theta$ must be a permutation matrix.

We first prove that there are exactly $m$ non-zero entries in $\theta$: if for any row of $\theta$, there is at most one non-zero entry, we can see $\theta$ has at most $m$ non-zero entries. $\theta$ is a transition matrix where the sum of every column is 1,which implies that $\theta$ has at least $m$ non-zero entries. Thus we proved there are exactly $m$ non-zero entries in $\theta$. 

We have just shown that $\theta$ has exactly $m$ non-zero entries. Since $\theta$ has at most one non-zero entry in each row, $\theta$ must have exactly one non-zero entry in each row. $\theta$ also has at least one non-zero entry in each column since it is a transition matrix, so $\theta$ must have exactly one non-zero entry 1 in each column. Thus $\theta$ has exactly one non-zero entry 1 in each row and each column which implies that $\theta$ is a permutation matrix.   
\end{proof}

{ \renewcommand{\thetheorem}{\ref{claim:sameeq}} \begin{claim}
The \emph{Disagreement Mechanism} has the same equilibria as the \emph{Truthful Mechanism}. 
\end{claim}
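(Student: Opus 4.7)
My approach is to decompose agent $i$'s payment under the Disagreement Mechanism and show that, relative to the Truthful Mechanism, the additional terms do not depend on agent $i$'s own report, so they cannot alter any best response.

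Fix agent $i$ and assume without loss of generality that $i\in A$. The Disagreement Mechanism's payment decomposes as
$$
payment_{\mathcal{M}+}(i,\mathbf{r}) \;=\; payment_{\mathcal{M}}(i_A,\mathbf{r}_A) \;-\; \frac{1}{|A|}\sum_{j_B\in B} payment_{\mathcal{M}}(j_B,\mathbf{r}_B) \;+\; score_C(r_j,r_k).
$$
The middle (subtracted) term depends solely on reports from group $B$, so it is constant in $r_i$. The final term depends on the reports of two peers $j,k$ drawn independently of $i$ (and not equal to $i$), so it too is constant in $r_i$ in expectation. Consequently the best-response problem for $i$ under $\mathcal{M}+$ reduces to maximizing $\E[payment_{\mathcal{M}}(i_A,\mathbf{r}_A) \mid \sigma_i]$, i.e., the best-response problem of the Truthful Mechanism applied to the set $A$.

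To finish, I would use the fact that the incentive analysis of $\mathcal{M}$ (as in the proof of Theorem~\ref{theorem:truthful} and in Claim~\ref{claim:actual prediction}) is pointwise in the peer: the expected payment to $i$ when paired with a particular peer $j$ is a linear functional of $j$'s marginal report distribution given $\sigma_i$, and $i$'s best response depends only on the average of these marginals over the eligible peers. Under the symmetric-prior assumption, averaging over $A\setminus\{i\}$ versus averaging over $\{1,\ldots,n\}\setminus\{i\}$ leads to the same first-order optimality condition for $r_i$ whenever other agents' strategies are held fixed, so a strategy profile is a (strict) BNE of $\mathcal{M}+$ iff it is a (strict) BNE of $\mathcal{M}$.

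The main (mild) obstacle is exactly this last step: explaining why ``best-response against a random peer in $A$'' and ``best-response against a random peer in $\{1,\ldots,n\}$'' coincide. This is handled by the symmetric-prior assumption together with the linearity of the expected $\mathcal{M}$-payment in each peer's marginal, so the argument reduces to the pointwise analysis already carried out for $\mathcal{M}$; no new machinery is required beyond rewriting the best-response calculation of Theorem~\ref{theorem:truthful} with the peer drawn from $A\setminus\{i\}$ instead of $\{1,\ldots,n\}\setminus\{i\}$.
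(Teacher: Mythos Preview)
Your first two paragraphs reproduce exactly the paper's argument: the cross-group subtraction and the classification bonus $score_C(r_j,r_k)$ are independent of $r_i$, so agent $i$'s marginal incentive under $\mathcal{M}+$ coincides with that under $payment_{\mathcal{M}}(i_A,\mathbf{r}_A)$. This is all the paper's own proof says.

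You then go further than the paper and flag a genuine subtlety: in $\mathcal{M}+$ the reference peer is drawn from $A\setminus\{i\}$, whereas in $\mathcal{M}$ on $n$ agents it is drawn from $\{1,\ldots,n\}\setminus\{i\}$. Your attempted resolution, however, does not work. The symmetric-prior assumption only makes $q(\sigma_j\mid\sigma_i)$ independent of $j$; it says nothing about the \emph{strategies} $\theta_j$. When the profile is asymmetric the averages $\frac{1}{|A|-1}\sum_{j\in A\setminus\{i\}}\theta_j$ and $\frac{1}{n-1}\sum_{j\neq i}\theta_j$ differ, and hence so do the best-response predictions computed as in Claim~\ref{claim:actual prediction}. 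Linearity of $PS$ in its first argument does not rescue this: it is precisely that linearity which makes the best response depend on the average strategy of the peer pool, and those averages are pool-dependent.

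What you (and the paper) actually establish is that the equilibria of $\mathcal{M}+$ coincide with those of the game in which each group plays the Truthful Mechanism restricted to that group. This is strictly weaker than identifying them with the equilibria of $\mathcal{M}$ on all $n$ agents, but it is sufficient for every downstream use in the paper: the truthfulness proof of Theorem~\ref{theorem:truthful} is insensitive to the number of agents, and in the truth-telling profile all $\theta_j$ are equal, so the peer pool is irrelevant. The paper's statement and proof are simply informal on this point; your instinct to worry about it is correct, but the fix is to weaken the claim, not to argue the two best-response problems always coincide.
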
 \addtocounter{theorem}{-1} }
\begin{proof}[Proof for Claim~\ref{claim:sameeq}]
The value of $score_C(r_j,r_k)$ does not depend on agent $i$'s strategy.  The term related to agent $i$'s strategy contained in $score_{\mathcal{M}}$ is $payment_{\mathcal{M}(\alpha,\beta,PS(\cdot,\cdot))}(i,\mathbf{r})$. This implies that agent $i$'s marginal benefit from deviation in $\mathcal{M+}(\alpha,\beta,PS(\cdot,\cdot))$ is the same with its marginal benefit from the same deviation in $\mathcal{M}(\alpha,\beta,PS(\cdot,\cdot))$. 
\end{proof}

{ \renewcommand{\thetheorem}{\ref{claim:cdi}} \begin{claim}
$$\textit{ClassificationScore}=\textit{Diversity}-\textit{Inconsistency}$$
\end{claim}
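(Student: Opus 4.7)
The plan is to unfold the definition of $score_C$ and then collapse the auxiliary indices $i$ and $\sigma_i$, which will line the two resulting pieces up with the definitions of $\textit{Diversity}$ and $\textit{Inconsistency}$. First I would split $score_C$ using indicator functions:
$$score_C(r_j, r_k) = \delta(\hat{\sigma}_j \neq \hat{\sigma}_k)\, D^*(\hat{\mathbf{p}}_j, \hat{\mathbf{p}}_k) \;-\; \delta(\hat{\sigma}_j = \hat{\sigma}_k)\, \sqrt{D^*(\hat{\mathbf{p}}_j, \hat{\mathbf{p}}_k)}.$$
Substituting this into the $\textit{ClassificationScore}$ expression separates it additively into a ``disagree'' piece and an ``agree'' piece.

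Next I would marginalize out $\sigma_i$ and $i$, neither of which appears in $score_C(r_j, r_k)$ or in the integration measure $\int_{\hat j,\hat k} Pr(\hat j, \hat k)$. For the inner sum, the law of total probability applied via the symmetric prior gives
$$\sum_{\sigma_i} Pr(\sigma_i)\, Pr(\sigma_j, \sigma_k \mid \sigma_i) \;=\; Pr(\sigma_j, \sigma_k),$$
so the conditioning on $\sigma_i$ disappears. For the outer sum, note that $Pr(i)$ is uniform over agents and $Pr(j,k)$ is the conditional selection probability of the grading pair given $i \notin \{j,k\}$; a counting argument shows that $\sum_{i \neq j,k} Pr(i)\, Pr(j,k \mid i)$ equals the unconditional pair-selection probability $Pr(j,k)$ used in the $\textit{Diversity}$ and $\textit{Inconsistency}$ definitions (since for each ordered pair $j \neq k$ there are exactly $n-2$ admissible $i$'s, and weights normalize).

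After these two marginalizations, the ``disagree'' piece matches the definition of $\textit{Diversity}$ term-by-term, while the ``agree'' piece, carrying its explicit minus sign, matches $-\textit{Inconsistency}$ (given that $\textit{Inconsistency}$ is defined with a leading minus sign in front of the $\sqrt{D^*}$ sum). Adding the two yields $\textit{ClassificationScore} = \textit{Diversity} - \textit{Inconsistency}$.

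I expect the only real obstacle to be bookkeeping: carefully checking that the constraint $i \notin \{j,k\}$ in the $\textit{ClassificationScore}$ sum does not alter the effective weight on the pair $(j,k)$ relative to the $\textit{Diversity}$/$\textit{Inconsistency}$ sums. Once one confirms that the counting factor $(n-2)$ cancels against the $1/((n-1)(n-2))$ from $Pr(j,k\mid i)$ to reproduce $1/(n(n-1))$, the rest of the derivation is a routine reshuffling of sums.
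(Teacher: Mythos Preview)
Your proposal is correct and follows essentially the same approach as the paper: marginalize out $\sigma_i$ via $\sum_{\sigma_i} Pr(\sigma_i)Pr(\sigma_j,\sigma_k\mid\sigma_i)=Pr(\sigma_j,\sigma_k)$, collapse the sum over $i$ by the counting argument (the $n-2$ admissible $i$'s for each pair cancel against the normalization), and then read off the two pieces from the definition of $score_C$. The only caveat is the sign bookkeeping in your parenthetical---the paper's displayed definition of \textit{Inconsistency} carries a stray leading minus that is inconsistent with its later uses (where it is treated as nonnegative); with \textit{Inconsistency} taken as the nonnegative $\sqrt{D^*}$ sum, your matching of the ``agree'' piece to $-\textit{Inconsistency}$ is exactly right.
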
 \addtocounter{theorem}{-1} }
\begin{proof}[Proof for Claim~\ref{claim:cdi}]
Based on the definition of \textit{ClassificationScore}, we have
\begin{align}\label{eq:classification score}
 \sum_{\substack{i\\j\neq i}}\sum_{\substack{k\neq i,j\\\sigma_i,\sigma_j,\sigma_k}}  & Pr(i) Pr(\sigma_i)Pr(j,k)Pr(\sigma_j,\sigma_k|\sigma_i) * \\
& \int_{\hat{\sigma}_j,\hat{\mathbf{p}}_j,\hat{\sigma}_k,\hat{\mathbf{p}}_k}Pr_{(\hat{\sigma}_j,\hat{\mathbf{p}}_j)\leftarrow s_j(\sigma_j)}(\hat{\sigma}_j,\hat{\mathbf{p}}_j) Pr_{(\hat{\sigma}_k,\hat{\mathbf{p}}_k)\leftarrow s_k(\sigma_k)}(\hat{\sigma}_k,\hat{\mathbf{p}}_k)  score_C(r_j,r_k)
\end{align}


Now we begin our proof:

\begin{align*}
&  \sum_{\substack{i\\j\neq i}}\sum_{\substack{k\neq i,j\\\sigma_i,\sigma_j,\sigma_k}} Pr(i) Pr(\sigma_i)Pr(j,k)Pr(\sigma_j,\sigma_k|\sigma_i) \int_{\hat{j},\hat{k}}Pr(\hat{j},\hat{k}) score_C(r_j,r_k)\\
=&  \sum_{\substack{i\\j\neq i}}\sum_{\substack{k\neq i,j\\\sigma_j,\sigma_k}} \frac{1}{n(n-1)(n-2)} Pr(\sigma_j,\sigma_k) \int_{\hat{j},\hat{k}} Pr(\hat{j},\hat{k}) score_C(r_j,r_k)  \\
=& \frac{1}{n(n-1)}  \sum_{\substack{j\\k\neq j}}\sum_{\substack{\sigma_j,\sigma_k}} Pr(\sigma_j,\sigma_k) \int_{\hat{j},\hat{k}} Pr(\hat{j},\hat{k}) score_C(r_j,r_k)\\
=& \sum_{\substack{j\\k\neq j}}\sum_{\substack{\sigma_j,\sigma_k}} Pr(j,k) Pr(\sigma_j,\sigma_k) \int_{\hat{j},\hat{k}} Pr(\hat{j},\hat{k}) score_C(r_j,r_k)
\end{align*}

The first equality follows since fix $j,k$, $score_C(r_j,r_k)$ does not depend on $i$ and \\
we also have $\sum_{\sigma_i}Pr(\sigma_i)Pr(\sigma_j,\sigma_k|\sigma_i)=Pr(\sigma_j,\sigma_k)$. 

The second equality follows since for any $(j,k),j\neq k$ pair, there are $n-2$ numbers that are neither $j$ nor $k$ which means $(j,k)$ will repeat $n-2$ times since there are $n-2$ possible $i$.

By definition we can see $\textit{ClassificationScore}=\textit{Diversity}-\textit{Inconsistency}$. 
\end{proof}



{ \renewcommand{\thetheorem}{\ref{claim:per}} \begin{claim}
Any permutation strategy profile has the same \textit{ClassificationScore}, \textit{Diversity}, and \textit{Inconsistency} with truth-telling. 
\end{claim}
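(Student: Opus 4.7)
The plan is to unpack the definition of a permutation strategy profile $\pi(\tru)$ and verify that each term of the sums defining \textit{Diversity}, \textit{Inconsistency}, and \textit{ClassificationScore} is preserved under the action of $\pi$.

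First I would establish the concrete form of the reports: when agents play $\pi(\tru)$ under prior $Q$, agent $i$ with private signal $\sigma_i$ deterministically reports the pair $(\pi(\sigma_i),\,\pi \mathbf{q}_{\sigma_i})$, where $\pi \mathbf{q}_{\sigma_i}$ denotes the distribution obtained by permuting indices of $\mathbf{q}_{\sigma_i}$ via $\pi$. This follows from $\pi(\tru)(\sigma_i,Q) = \tru(\pi(\sigma_i),\pi(Q)) = (\pi(\sigma_i),\mathbf{q}^{\pi Q}_{\pi(\sigma_i)})$, together with the direct calculation
$$
\mathbf{q}^{\pi Q}_{\pi(\sigma_i)}(\sigma') \;=\; \Pr_{\pi Q}(\sigma_j = \sigma' \mid \sigma_i = \pi(\sigma_i)) \;=\; \Pr_{Q}(\sigma_j = \pi^{-1}\sigma' \mid \sigma_i = \sigma_i) \;=\; \mathbf{q}_{\sigma_i}(\pi^{-1}\sigma'),
$$
where the middle equality is the defining property of the permuted prior.

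Next, for any pair of agents $j,k$ with private signals $\sigma_j,\sigma_k$, I would highlight two facts. (i) Since $\pi$ is a bijection on $\Sigma$, we have $\pi(\sigma_j) = \pi(\sigma_k)$ if and only if $\sigma_j = \sigma_k$, so the indicator $\delta(\hat{\sigma}_j = \hat{\sigma}_k)$ (equivalently $\delta(\hat{\sigma}_j \neq \hat{\sigma}_k)$) takes the same value under $\pi(\tru)$ and under $\tru$. (ii) By the permutation case of Information Monotonicity (Lemma~\ref{lem:im}), $D^*(\pi \mathbf{q}_{\sigma_j}, \pi \mathbf{q}_{\sigma_k}) = D^*(\mathbf{q}_{\sigma_j}, \mathbf{q}_{\sigma_k})$, and consequently the same equality holds for $\sqrt{D^*}$.

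Combining (i) and (ii), every integrand appearing in the defining sums for \textit{Diversity} and \textit{Inconsistency} takes the same value at each outcome $(\sigma_j,\sigma_k)$ under $\pi(\tru)$ as under $\tru$. Since the joint distribution of $(\sigma_j,\sigma_k)$ is dictated by $Q$ (and is identical in both cases), summing term-by-term yields $\textit{Diversity}(\pi(\tru)) = \textit{Diversity}(\tru)$ and $\textit{Inconsistency}(\pi(\tru)) = \textit{Inconsistency}(\tru)$; Claim~\ref{claim:cdi} then gives the same equality for \textit{ClassificationScore}. The only real obstacle is bookkeeping: correctly tracking how $\pi$ acts on both signals and predictions so that step (i) and step (ii) line up event-by-event; once this is in place, the rest is an immediate application of permutation-invariance of $D^*$.
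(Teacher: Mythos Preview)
Your proposal is correct and is essentially the paper's one-line ``relabeling'' argument spelled out in full: you make precise that under $\pi(\tru)$ each agent reports $(\pi(\sigma_i),\pi\mathbf{q}_{\sigma_i})$, and then verify that both the equality/inequality indicator on reported signals and the Hellinger divergence of predictions are invariant under this simultaneous relabeling, so the sums defining \textit{Diversity} and \textit{Inconsistency} match term-by-term. The paper simply asserts that the report profiles of $\pi(\tru)$ are a relabeling of those of $\tru$ and that this implies the claim; your version supplies the bookkeeping.
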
 \addtocounter{theorem}{-1} }
\begin{proof}[Proof for Claim~\ref{claim:per}]
Any permutation strategy profile's report profiles can be seen as a relabeling to truth-telling's report profiles, which implies the claim. 
\end{proof}

{ \renewcommand{\thetheorem}{\ref{claim:welfare}} \begin{claim}
The average agent-welfare in our \emph{Disagreement Mechanism} is $\textit{ClassificationScore}$  
\end{claim}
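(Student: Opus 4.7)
The plan is to decompose the sum of all agents' payments into the contribution from $score_{\mathcal{M}}$ (the zero-sum part from the group A / group B construction) plus the contribution from $score_C$ (the additional classification reward), show that the first contribution cancels exactly, and then verify that the second contribution, divided by $n$, matches the formula for $\textit{ClassificationScore}$ given in the definition.

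First I would handle the zero-sum cancellation. Summing $score_{\mathcal{M}}(i_A,\mathbf{r})$ over all $i_A\in A$ gives
$$\sum_{i_A\in A} payment_{\mathcal{M}}(i_A,\mathbf{r}_A) \;-\; |A|\cdot\tfrac{1}{|A|}\sum_{j_B\in B}payment_{\mathcal{M}}(j_B,\mathbf{r}_B) \;=\; \sum_{i_A\in A}payment_{\mathcal{M}}(i_A,\mathbf{r}_A) - \sum_{j_B\in B}payment_{\mathcal{M}}(j_B,\mathbf{r}_B).$$
The analogous sum over group $B$ is the negation of this, so $\sum_i score_{\mathcal{M}}(i,\mathbf{r})=0$ for every realization of $\mathbf{r}$. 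Consequently, taking expectations, the total agent-welfare reduces to $\sum_i \E[score_C(r_j,r_k)]$ where, for each $i$, the pair $(j,k)$ is drawn uniformly from $\{(j,k): j\ne i,\ k\ne i,\ k\ne j\}$.

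Next I would expand this expectation explicitly. For a fixed $i$, the expected value of $score_C(r_j,r_k)$ is
$$\sum_{j\ne i,\, k\ne i,j} Pr(j,k) \sum_{\sigma_i,\sigma_j,\sigma_k} Pr(\sigma_i)\,Pr(\sigma_j,\sigma_k\mid\sigma_i) \int_{\hat{j},\hat{k}} Pr(\hat{j},\hat{k})\, score_C(r_j,r_k),$$
where I have inserted $\sum_{\sigma_i}Pr(\sigma_i)=1$ for free so as to match the form of $\textit{ClassificationScore}$. Dividing by $n$ is equivalent to inserting the factor $Pr(i)=1/n$ and then summing over $i$. Re-arranging the resulting quadruple sum is exactly the definition of $\textit{ClassificationScore}$ in equation (\ref{eq:classification score}), yielding the claim.

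The only subtle point is the insertion of the dummy summation over $\sigma_i$: although $score_C(r_j,r_k)$ does not depend on $\sigma_i$, writing the conditional $Pr(\sigma_j,\sigma_k\mid\sigma_i)$ under $\sum_{\sigma_i}Pr(\sigma_i)$ simply marginalizes back to $Pr(\sigma_j,\sigma_k)$, so the equality is an identity and no real obstacle arises. The main bookkeeping task is to keep the index ranges $j\ne i$, $k\ne i,j$ consistent with those used in the definition of $\textit{ClassificationScore}$; no further mathematical content is required beyond the zero-sum cancellation of step one.
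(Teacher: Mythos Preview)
Your proposal is correct and follows essentially the same approach as the paper: the paper's proof simply establishes the zero-sum cancellation $\sum_i score_{\mathcal{M}}(i,\mathbf{r})=0$ and then relies on the fact that $\textit{ClassificationScore}$ is \emph{defined} as the expected average of $score_C$, so no further argument is needed. Your additional step of explicitly matching the $score_C$ contribution to the displayed formula for $\textit{ClassificationScore}$ (via the dummy $\sigma_i$ sum) is extra bookkeeping that the paper omits as tautological, but it is not incorrect.
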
 \addtocounter{theorem}{-1} }
\begin{proof}[Proof for Claim~\ref{claim:welfare}]
We only need to prove $\sum_i score_{\mathcal{M}}(i,\mathbf{r})=0$. 
\begin{align*}
\sum_i score_{\mathcal{M}}(i,\mathbf{r})=&\sum_{i\in A} score_{\mathcal{M}}(i,\mathbf{r})+\sum_{i\in B} score_{\mathcal{M}}(i,\mathbf{r})\\
=&\sum_{i\in A}\left( payment_{\mathcal{M}(\alpha,\beta,PS(\cdot,\cdot))}(i,\mathbf{r})-\frac{1}{|A|}\sum_{i\in B} payment_{\mathcal{M}(\alpha,\beta,PS(\cdot,\cdot))}(i,\mathbf{r})\right)\\
&+\sum_{i\in B}\left(
payment_{\mathcal{M}(\alpha,\beta,PS(\cdot,\cdot))}(i,\mathbf{r})-\frac{1}{|B|}\sum_{i\in A} payment_{\mathcal{M}(\alpha,\beta,PS(\cdot,\cdot))}(i,\mathbf{r})\right)=0\\
\end{align*} 
\end{proof}

{ \renewcommand{\thetheorem}{\ref{claim:inconsistency:zero}} \begin{claim}
For any strategy profile $s$, $\textit{Diversity}(s)=\textit{TotalDivergence}(s)$ $\Leftrightarrow$ $\textit{Inconsistency}(s)=0$
\end{claim}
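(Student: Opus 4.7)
The plan is to directly compute $\textit{TotalDivergence}(s)-\textit{Diversity}(s)$, observe that it is a sum of nonnegative terms against the same underlying measure as $-\textit{Inconsistency}(s)$, and conclude that the two sums vanish under the same condition.

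First I would note that in the definition of $\textit{TotalDivergence}(s)$, the diagonal terms $j=k$ contribute zero because $D^*(\hat{\mathbf{p}}_j,\hat{\mathbf{p}}_j)=0$, so the sum may be restricted to pairs $j\neq k$ without changing its value. Splitting the indicator $1=\delta(\hat{\sigma}_j\neq\hat{\sigma}_k)+\delta(\hat{\sigma}_j=\hat{\sigma}_k)$ in the integrand then yields
\begin{align*}
\textit{TotalDivergence}(s)-\textit{Diversity}(s)
= \sum_{j,\,k\neq j}\sum_{\sigma_j,\sigma_k} Pr(j,k)Pr(\sigma_j,\sigma_k)\int_{\hat{j},\hat{k}}Pr(\hat{j},\hat{k})\,\delta(\hat{\sigma}_j=\hat{\sigma}_k)\,D^*(\hat{\mathbf{p}}_j,\hat{\mathbf{p}}_k).
\end{align*}
Meanwhile, by definition,
\begin{align*}
-\textit{Inconsistency}(s)
= \sum_{j,\,k\neq j}\sum_{\sigma_j,\sigma_k} Pr(j,k)Pr(\sigma_j,\sigma_k)\int_{\hat{j},\hat{k}}Pr(\hat{j},\hat{k})\,\delta(\hat{\sigma}_j=\hat{\sigma}_k)\,\sqrt{D^*(\hat{\mathbf{p}}_j,\hat{\mathbf{p}}_k)}.
\end{align*}

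Both expressions are integrals of nonnegative functions against the same (nonnegative) measure $\mu$ on the space of tuples $(j,k,\sigma_j,\sigma_k,\hat{j},\hat{k})$, so each vanishes if and only if its integrand is zero $\mu$-almost everywhere. The integrands differ only in whether we use $D^*(\hat{\mathbf{p}}_j,\hat{\mathbf{p}}_k)$ or $\sqrt{D^*(\hat{\mathbf{p}}_j,\hat{\mathbf{p}}_k)}$; but by the non-negativity property of Hellinger-divergence, $D^*(\mathbf{p},\mathbf{q})=0$ iff $\mathbf{p}=\mathbf{q}$ iff $\sqrt{D^*(\mathbf{p},\mathbf{q})}=0$. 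Hence the two $\mu$-a.e. vanishing conditions are identical, which gives $\textit{Diversity}(s)=\textit{TotalDivergence}(s)\iff\textit{Inconsistency}(s)=0$.

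There is essentially no obstacle here: the only content is the observation that $D^*$ and $\sqrt{D^*}$ have the same zero set, combined with the fact that the difference $\textit{TotalDivergence}-\textit{Diversity}$ is exactly the ``same-label'' portion of the total divergence, which is the portion over which $\textit{Inconsistency}$ is also defined.
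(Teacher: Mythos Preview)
Your proof is correct and follows essentially the same approach as the paper: compute $\textit{TotalDivergence}(s)-\textit{Diversity}(s)$ as the ``same-signal'' portion of the total divergence, note that it is an integral of $D^*$ against the same nonnegative measure as the integral of $\sqrt{D^*}$ defining $\textit{Inconsistency}$, and conclude both vanish under the same condition since $D^*$ and $\sqrt{D^*}$ share a zero set. Your framing in terms of a single measure $\mu$ and a.e.\ vanishing is slightly cleaner than the paper's term-by-term argument, and your explicit remark that the diagonal $j=k$ terms contribute nothing is a nice touch, but the substance is identical.
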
 \addtocounter{theorem}{-1} }
\begin{proof}[Proof for Claim~\ref{claim:inconsistency:zero}]
Note that 
\begin{align*}
&\textit{TotalDivergence}(s)-\textit{Diversity}(s)\\
=&\sum_{\substack{j\\k\neq j}}\sum_{\substack{\sigma_j,\sigma_k}} Pr(j,k)Pr(\sigma_j,\sigma_k) \int_{\hat{j},\hat{k}} Pr(\hat{j},\hat{k})\delta(\hat{\sigma}_j= \hat{\sigma}_k)D^*(\mathbf{\hat{p}}_j,\mathbf{\hat{p}}_k)
\end{align*}

while $$\textit{Inconsistency}(s)=\sum_{\substack{j\\k\neq j}}\sum_{\substack{\sigma_j,\sigma_k}} Pr(j,k)Pr(\sigma_j,\sigma_k) \int_{\hat{j},\hat{k}} Pr(\hat{j},\hat{k})\delta(\hat{\sigma}_j= \hat{\sigma}_k)\sqrt{D^*(\mathbf{\hat{p}}_j,\mathbf{\hat{p}}_k)}$$ 

Because each part in $\textit{TotalDivergence}(s)-\textit{Diversity}(s)$ is non-negative,  $\textit{TotalDivergence}(s)-\textit{Diversity}(s)=0$ will imply $Pr(\sigma_j,\sigma_k) \int_{\hat{j},\hat{k}} Pr(\hat{j},\hat{k})\delta(\hat{\sigma}_j= \hat{\sigma}_k)D^*(\mathbf{\hat{p}}_j,\mathbf{\hat{p}}_k)=0$.

So we have $Pr(\hat{j},\hat{k})\delta(\hat{\sigma}_j= \hat{\sigma}_k)=0$ or $D^*(\mathbf{\hat{p}}_j,\mathbf{\hat{p}}_k)=0$ which implies $Pr(\sigma_j,\sigma_k) \int_{\hat{j},\hat{k}} Pr(\hat{j},\hat{k})\delta(\hat{\sigma}_j= \hat{\sigma}_k)\sqrt{D^*(\mathbf{\hat{p}}_j,\mathbf{\hat{p}}_k)}=0$. The proof for another direction is similar. 
\end{proof}

\else 

The proof is given in the full version; here we just give a few highlights.  We must show that permutation strategies will have strictly higher agent welfare than any other symmetric equilibrium and if the number of agents are sufficient large, the equilibria with a higher agent welfare or even an agent welfare ``close'' to truth-telling must be ``close'' to a permutation strategy profile. We first show that the agent welfare of our \textit{Disagreement Mechanism} is $\textit{Diversity}-\textit{Inconsistency}$, which follows by a straightforward computation. It remains to show that $\textit{Diversity}-\textit{Inconsistency}$ has the aforementioned properties.

 We call a strategy profile a \emph{best prediction strategy profile} if for any $i$, agent $i$ reports a prediction that maximizes his prediction score. Based on Claim~\ref{claim:best prediction}, agent $i$ will report $\theta_{-i} \mathbf{q}_{\sigma_i}$ given $\sigma_i$ is his private signal and recall that $\theta_{-i}=\frac{\sum_{j\neq i}\theta_i}{n-1}$ where $(\theta_1,\theta_2,....,\theta_n)$ is the signal strategy. We call this strategy profile a \emph{symmetric} best prediction strategy profile if there exists a signal strategy $\theta$ such that $\theta_i=\theta$ for any $i$. Based on the definition of permutation strategy profile, it is clear that any permutation strategy profile is a symmetric \emph{best prediction} strategy profile.

Consider two agents who report different signals. If they use a permutation strategy profile $\pi$ then their predictions will be $\pi\mathbf{q}_{\sigma},\pi\mathbf{q}_{\sigma'}$ given their private signals are $\sigma\neq \sigma'$. If they use a symmetric best prediction strategy, then their reported predictions will be $\theta \mathbf{q}_{\sigma},\theta \mathbf{q}_{\sigma'}$. In the first case, the Hellinger divergence between the two agents' reported predictions is $D^*(\pi\mathbf{q}_{\sigma},\pi\mathbf{q}_{\sigma'})=D^*(\mathbf{q}_{\sigma},\mathbf{q}_{\sigma'})$ while in the second case, the Hellinger divergence between the two agents' reported predictions is $D^*(\theta \mathbf{q}_{\sigma},\theta \mathbf{q}_{\sigma'})\leq D^*(\mathbf{q}_{\sigma},\mathbf{q}_{\sigma'})=D^*(\pi\mathbf{q}_{\sigma},\pi\mathbf{q}_{\sigma'})$. The inequality follows from the information monotonicity of Hellinger divergence. Thus, the two agents' predictions in the second case is ``closer'' than those in the first case. So a permutation strategy profile is more diverse than any other symmetric best prediction strategy, and additionally have no inconsistency.

However, the biggest challenge is that there exists equilibria that are not \textit{best prediction strategy profiles}. To deal with this challenge, we map each equilibrium $s^*$ to a strategy profile $s^*_{BP}$ that belongs to \textit{best prediction strategy profiles}. The technical heart of the proof bounds the $Diversity - Inconsistency$ score of an equilibrium strategy profile $s^*$ by the $Diversity - Inconsistency$ score of its corresponding best prediction strategy profile $s^*_{BP}$. Once we finish this, we can bound $Diversity - Inconsistency$ score of any equilibrium strategy profile by $Diversity - Inconsistency$ score of permutation strategy profiles and finish the proof. 


In the more complicated assymetric case, the difficulty is that even if agents play best prediction strategy profiles, we cannot use information monotonicity to prove permutation strategy profiles gain the strictly highest $Diversity - Inconsistency$ score. However, if the number of agents is large enough, we will see any strategy profile that belongs to \textit{best prediction strategy profiles} family is ``almost symmetric''. This ``almost symmetric'' result fit the our proof in asymmetric case into above framework. 

Finally, we show that equilibrium that have the $Diversity - Inconsistency$ score close to that of truth-telling, must be close to a permutation equilibrium.

\fi

\section{Conclusion}\label{section:conclusion}
We have shown that our  \emph{Disagreement Mechanism}  promotes truth-telling by 1) having truth-telling as a Bayesian Nash equilibrium; 2) having no other symmetric equilibrium with agent welfare more than the truth-telling equilibrium; 3) having the agent welfare of any equilibrium approach that of truth-telling as the number of agents increases; 4) requiring that  any equilibrium with agent welfare close to that of truth-telling must be close to a permutation strategy.  

We have argued, that our mechanism is near optimal in the sense that no truthful mechanism without knowledge of the common prior can avoid having permutation equilibrium with high agent welfare, and, in our mechanism, any equilibrium with agent welfare even close to that of truth-telling must be close to a permutation equilibrium.  

Permutation equilibria are intuitively unnatural and risky as they require extreme coordination amongst the agents.  We believe they are very unlikely to occur in practice.   
Additionally, any asymmetric equilibrium also seem unlikely, especially as the number of agents increases because a) such deviations help less as the number of agents' increases; b) implementing such deviations will become increasingly difficult as the number of agents increases.  
Thus our results about symmetric equilibria and equilibrium in general are quite strong, despite the impossibility result.

In addition to the above results, our work has several contributions in the techniques employed:

(1) In a common prior setting, agents with the same private information cannot agree to disagree. Thus, agents with the same signal should report similarly. Our \emph{Disagreement Mechanism} encourages not only agents with the same private information to agree, but also agents with different private information to disagree. We do this by employing tools from information theory, namely \emph{Information Monotonicity}. Despite their natural and powerful application, to our knowledge, this is the first time such tools have been explicitly employed in the peer prediction literature. 

\ifnum\fullversion=0
(2) We created a framework for understanding the space of equilibrium which was  integral to our results.  

We hope that both the information theory tools and the new understanding of equilibrium are useful in future work.
\else
(2)  Additionally, we created a framework for understanding the space of equilibrium which was  integral to our results.

We hope that both the information theory tools and the new understanding of equilibrium introduced in this work will continue to provide useful insights for designing and analyzing future mechanisms in peer-prediction and related settings.
\fi

\newpage 

\ifnum\fullversion=1


\else

\fi

\bibliographystyle{plainnat}
\bibliography{peer,refs}

\newpage

\section{Appendix}
To understand the strictness condition more in Lemma~\ref{lem:im}, we give an example where the strictness condition is not satisfied:
\begin{example}
$\mathbf{p}=(0.1\ 0.2\ 0.7)$, $\mathbf{q}=(0.2\ 0.4\   0.4)$, $\theta=\left(\begin{array}{ccc}
0.3 & 0.6 & 0 \\
0.7 & 0.4 & 0\\
0 & 0 & 1 \\
\end{array} \right) $.

We show by case analysis that we cannot find $\sigma, \sigma',\sigma''$ such that $\theta(\sigma,\sigma') \mathbf{p}(\sigma')>0$, $\theta(\sigma,\sigma'') \mathbf{p}(\sigma'')>0$ and  $\frac{\mathbf{p}(\sigma'')}{\mathbf{p}(\sigma')}\neq \frac{\mathbf{q}(\sigma'')}{\mathbf{q}(\sigma')}$.

First note that because  $\frac{\mathbf{p}(\sigma'')}{\mathbf{p}(\sigma')}\neq \frac{\mathbf{q}(\sigma'')}{\mathbf{q}(\sigma')}$, it cannot be that $\sigma' = \sigma''$, nor can it be the case that $\sigma',\sigma'' \in \{1, 2\}$ because  $\frac{\mathbf{p}(1)}{\mathbf{p}(2)}=\frac{\mathbf{q}(1)}{\mathbf{q}(2)}$ and $\frac{\mathbf{p}(2)}{\mathbf{p}(1)}=\frac{\mathbf{q}(2)}{\mathbf{q}(1)}$.  Thus it must be that either $\sigma' \in \{1, 2\}$ and $\sigma'' = 3$ or  $\sigma' = 3 $ and $\sigma'' \in \{1, 2\}$.  Because these are symmetric, we consider the first case.  

Because  $\theta(\sigma,\sigma') \mathbf{p}(\sigma')>0$ it must be that $\sigma \in \{1, 2\}$, but because $\theta(\sigma,\sigma'') \mathbf{p}(\sigma'')>0$, it must be that $\sigma = 3$.  So no assignment of  $\sigma, \sigma',\sigma''$  is possible. 
 
Thus, the strictness condition is not satisfied. By simple calculations, we have $\theta\mathbf{p}=(0.15\ 0.15\ 0.7)$, $\theta\mathbf{q}=(0.3\ 0.3\ 0.4)$. By some algebraic calculations, we have $D_f(\mathbf{p},\mathbf{q})= D_f(\theta \mathbf{p},\theta \mathbf{q})$ for any function $f$.
\end{example}

\end{document}